\def\adl@drawiv#1#2#3{%
	\hskip0
	\tabcolsep
	\xleaders#3{#2 0\@tempdimb #1{1}#2 0.5\@tempdimb}%
	#2\z@ plus1fil minus1fil\relax
	\hskip0\tabcolsep}
\newcommand{\cdashlinelr}[1]{%
	\noalign{\vskip\aboverulesep
		\global\let\@dashdrawstore\adl@draw
		\global\let\adl@draw\adl@drawiv}
	\cdashline{#1}
	\noalign{\global\let\adl@draw\@dashdrawstore
		\vskip\belowrulesep}}
\let\mathbb=\mathds
\newcommand{\Exterior}{\mathchoice{{\textstyle\bigwedge}}%
	{{\bigwedge}}%
	{{\textstyle\wedge}}%
	{{\scriptstyle\wedge}}}
\DeclareMathOperator*{\argmax}{arg\,max}
\DeclareMathOperator*{\argmin}{arg\,min} 
\DeclareMathOperator{\Tr}{Tr}
\DeclareMathOperator{\e}{\mathrm{e}}
\def\X{\mathsf{X}}
\def\Y{\mathsf{Y}}
\def\A{\mathsf{A}}
\def\B{\mathsf{B}}
\def\C{\mathsf{C}}
\def\U{\mathsf{U}}
\def\V{\mathsf{V}}
\def\R{\mathsf{R}}
\newcommand{\be}{{\mathbf e}}
        \def\cB{{\mathcal B}}
\def\0{{\mathbf{0}}}
\def\1{{\mathbf{1}}}
\def\2{{\mathbf{2}}}
\def\3{{\mathbf{3}}}
\def\4{{\mathbf{4}}}
\def\5{{\mathbf{5}}}
\def\6{{\mathbf{6}}}
\def\7{{\mathbf{7}}}
\def\8{{\mathbf{8}}}
\def\9{{\mathbf{9}}}
\def\be{\begin{equation}}
\def\ee{\end{equation}}
\def\bea{\begin{eqnarray}}
\def\eea{\end{eqnarray}}
\def\eps{\varepsilon}
\theoremstyle{plain}
\newtheorem*{fact}{Fact} 
\theoremstyle{definition}
\newtheorem{defn}{Definition} 
\theoremstyle{remark}
\newtheorem{remark}{Remark}
\newcommand{\opnorm}{\@ifstar\@opnorms\@opnorm}
\newcommand{\@opnorms}[1]{%
	$\left|\mkern-1.5mu\left|\mkern-1.5mu\left|
	#1
	\right|\mkern-1.5mu\right|\mkern-1.5mu\right|$
}
\newcommand{\@opnorm}[2][]{%
	\mathopen{#1|\mkern-1.5mu#1|\mkern-1.5mu#1|}
	#2
	\mathclose{#1|\mkern-1.5mu#1|\mkern-1.5mu#1|}
}
\tikzset{>={Latex[length=4,width=4]}} 
\colorlet{mylightblue}{blue!5!white}
\colorlet{mydarkblue}{blue!30!black}
\colorlet{myblue}{blue!50!black}
\colorlet{myred}{red!50!black}
\colorlet{mydarkred}{red!30!black}
\colorlet{mydarkgreen}{green!30!black}
\newcommand{\sh}{\kern-0.08em$^\textbf{\#}$\hspace{-3pt}}
\renewcommand{\b}{\kern-0.06em$\flat$}
\begin{document}

\let\origmaketitle\maketitle
\def\maketitle{
	\begingroup
	\def\uppercasenonmath##1{} 
	\let\MakeUppercase\relax 
	\origmaketitle
	\endgroup
}

\title{\bfseries \Large{ 
Simple and Tighter Derivation of Achievability for \\ Classical Communication over Quantum Channels
		}}

\author{ \normalsize \textsc{Hao-Chung Cheng}}
\address{\small  	
Department of Electrical Engineering and Graduate Institute of Communication Engineering,\\ National Taiwan University, Taipei 106, Taiwan (R.O.C.)\\
Department of Mathematics, National Taiwan University\\
Center for Quantum Science and Engineering,  National Taiwan University\\
Hon Hai (Foxconn) Quantum Computing Center, New Taipei City 236, Taiwan (R.O.C.)\\
Physics Division, National Center for Theoretical Sciences, Taipei 10617, Taiwan (R.O.C.)\\
}

\email{\href{mailto:haochung.ch@gmail.com}{haochung.ch@gmail.com}}

\date{\today}
\begin{abstract}
Achievability in information theory refers to demonstrating a coding strategy that accomplishes a prescribed performance benchmark for the underlying task.
In quantum information theory, the crafted Hayashi--Nagaoka operator inequality is an essential technique in proving a wealth of one-shot achievability bounds since it effectively resembles a union bound in various problems.
In this work, we show that the so-called pretty-good measurement naturally plays a role as the union bound as well. A judicious application of it considerably simplifies the derivation of one-shot achievability for classical-quantum channel coding via an elegant three-line proof.

The proposed analysis enjoys the following favorable features.
(i) The established one-shot bound admits a closed-form expression as in the celebrated Holevo--Helstrom theorem.
Namely, the average error probability of sending $M$ messages through a classical-quantum channel is upper bounded by the minimum error of distinguishing the joint channel input-output state against $(M-1)$ decoupled product states.
(ii) Our bound directly yields asymptotic achievability results in the large deviation, small deviation, and moderate deviation regimes in a unified manner.
(iii) The coefficients incurred in applying the Hayashi--Nagaoka operator inequality or the quantum union bound are no longer needed. 
Hence, the derived one-shot bound sharpens existing results relying on the Hayashi--Nagaoka operator inequality.
In particular, we obtain the tightest achievable $\varepsilon$-one-shot capacity for classical communication over quantum channels heretofore, improving the third-order coding rate in the asymptotic scenario.
(iv) Our result holds for infinite-dimensional Hilbert space.
(v) The proposed method applies to deriving one-shot achievability for classical data compression with quantum side information, entanglement-assisted classical communication over quantum channels, and various quantum network information-processing protocols.
\end{abstract}

\maketitle
\vspace{-2.5em}
\tableofcontents

\section{Introduction} \label{sec:introduction}

Communicating classical information over a noisy quantum channel is a foundational task in quantum information science. 
To protect the transmitted messages against potential noise, an indispensable coding strategy is employed.
At the transmitter, Alice initiates the procedure by encoding each message $m \in \{1,2,\ldots, M\}$ into an $n$-qubit quantum state.
Suppose in the communication process that each qubit suffers from independent and identically distributed (i.i.d.) quantum noise, which is characterized by an i.i.d.~quantum channel.
Then, at the receiver Bob performs a quantum measurement on the corrupted quantum system to extract the decoded message $\hat{m}$.

Via a coding strategy based on the so-called \emph{quantum typicality}, the well-known Holevo--Schumacher--Westmoreland (HSW) theorem \cite{JS94, HJS+96, SW97, Hol98, Win99b, Win99, Wilde2} states that the probability of erroneous decoding, $\varepsilon := \Pr\left\{\hat{m} \neq m\right\}$, vanishes asymptotically in the limit of $n\to \infty$, whenever the number of bits to be sent per qubit ($\lim_{n\to \infty} \frac1n \log M$) is below the \emph{channel capacity}.

The HSW theorem extends the seminal work of Shannon \cite{Sha48} to the quantum scenario, and hence, it is one of the fundamental core stones in quantum information theory.
However, the HSW coding strategy relies on certain technical assumptions that could be physically demanding.
First, the asymptotically large qubit number $n$ requires the quantum devices to implement arbitrarily large encoding and decoding.
Second, the actual quantum noise may be correlated among several qubit systems; hence, the underlying quantum noises are not independent.
Third, even if the noises are independent, they may not be stationary; namely, the noise acting on the first qubit is not identical to that on the last qubit.

To circumvent the aforementioned  technical requirements, \emph{one-shot} quantum information theory emerges as a new research stream to consider the scenario that no structural hypotheses are imposed on the underlying quantum state or channel. 
The ultimate goal is to characterize the \emph{optimal trade-off} between the error probability $\varepsilon$ and the message size $M$ of transmission, for which the channel is used only once.
Such a study allows us to better understand the fundamental capability of  one-shot communication. Therefore it may serve as a general guideline for designing the next-generation quantum information-processing systems.
However, without the i.i.d.~repetitions of channel use, conventional methods based on quantum typicality are no longer applicable.
Hence more refined and sophisticated coding techniques are requisite for the one-shot analysis\footnote{
	To the best of our knowledge, the first achievability analysis for one-shot coding in quantum information theory was made by Burnashev and Holevo \cite{BH98} in studying pure-state classical-quantum (c-q) channels (based on a Gram matrix technique), and a one-shot strong converse for c-q channel coding was proved by Ogawa and Nagaoka \cite{ON99} via extending Arimoto's approach \cite{Ari76} to the noncommutative scenario.
	Technically speaking, the quantum dense coding \cite{BW92} and teleportation \cite{BBC+93} are both one-shot protocols---however, the original references only concerned noiseless channels. 
	Further, the early work of quantum state discrimination \cite{Hel67, YKL75} also considered the one-shot setting.
	This paper will focus on the \emph{packing-type problems} in one-shot quantum information theory with general (noisy) quantum channels without specific constraints.
	We skip the \emph{covering-type problems} such as quantum covering \cite{AW02, Hay06, Hay15, Hay17, ADJ17, AJW19a,	CG22, SGC22b}, privacy amplification \cite{Ren05, TSS+11, Hay06, Hay13, Tom16, Dup21, KL21, SGC22a, SGC22b}, decoupling \cite{Dup10, DBW+14, LY21a}, and simulation \cite{6757002, luo_channel_2006,  berta_quantum_2013, LY21b} since their achievability techniques are different from that of the packing problems.
}.

Why is it challenging to design and analyze good coding strategies for a one-shot quantum information-processing task?
Essentially, a proper coding scheme aims to enforce the error probability, $\Pr\{ \bigcup_{\hat{m}\neq m} \mathcal{E}_{\hat{m}\mid m}  \}$, for sending each message $m$ small. 
Here, $\mathcal{E}_{\hat{m}\mid m}$ denotes the event of decoding to $\hat{m}$ when message $m$ was sent.
Yet, the analysis and computational evaluation of such a union error event for a nontrivial quantum measurement could be quite difficult (even in the classical scenario).
A useful trick in this effort is the following \emph{union bound}:
\begin{align} \label{eq:union}
	\Pr\left\{ \bigcup\nolimits_{\hat{m}\neq m} \mathcal{E}_{\hat{m}\mid m}  \right\}
	\leq \sum\nolimits_{\hat{m}\neq m} \Pr \left\{ \mathcal{E}_{\hat{m}\mid m}  \right\}.
\end{align}
In view of the right-hand side of Eq.~\eqref{eq:union}, 
the decoding rule remains to minimize $(M-1)$ pairwise error probabilities of deciding $m$ against each $\hat{m}\neq m$.
This then serves as the general principle of coding design.

The above coding strategy has achieved prevailing success in classical information theory, channel coding, and modern communication systems \cite{Fei55, Fan61, Str62, Gal63, Gal65, Gal68, VH94, Han03, SS06, Gal08, RU08, Hay09b, PPV10, Lap17}. 
However, the quantum union bound of the form \eqref{eq:union} is highly nontrivial due to the noncommutative nature of quantum mechanics \cite{Aar06, GLL12, Sen12, Wil13, Gao15, OMW19}.
The first attempt to design a good one-shot coding scheme for general classical-quantum (c-q) channels (without the i.i.d~condition) was proposed by Hayashi and Nagaoka, in which, a powerful operator inequality \cite[Lemma 2]{HN03} was proved: for any  positive semi-definite operators $0\leq A\leq \mathds{1}$ and $B\geq 0$,
\begin{align} \label{eq:HN03}
	\mathds{1} - \frac{A}{A+B} \leq (1+c) (I-A) + (2+c+c^{-1}) B, \quad \forall c>0,
\end{align}
where we denote a \emph{noncommutative quotient} by 
\begin{align} \label{eq:quotient}
	\frac{A}{B}:= B^{-{1}/{2}} A B^{-{1}/{2}}\,
\end{align} 
(here, the inverse is defined only on the support of the operator in the denominator).
At the first glimpse of Eq.~\eqref{eq:HN03}, it is not obvious how it resembles the union bound as Eq.~\eqref{eq:union} and how 
it is applied in analyzing the error probability in channel coding; nonetheless, an ingenious application of it by Hayashi and Nagaoka \cite{HN03} yields a Feinstein-type bound for achieving the c-q channel capacity \cite{Fei54}, \cite[Theorem 1]{VH94}.
Later, Oskouei, Mancini and Wilde proposed a quantum union bound with similar coefficients as in Eq.~\eqref{eq:HN03}, and hence achieved the same error bound as Refs.~\cite{HN03}, \cite{OMW19} \footnote{We refer the reader to Ref.~\cite[Theorem 2.1]{OMW19} for the precise statement of the quantum union bound. An application of it with the sequential decoding strategy \cite{Wil13} leads to the same achievability bound as in \cite{HN03, WR13} (for infinite-dimensional Hilbert space as well) \cite[Theorem 5.1]{OMW19}.}.
Subsequently, Hayashi and Nagaoka's analysis based on Eq.~\eqref{eq:HN03} lays a technical cornerstone in a wealth of one-shot and asymptotic achievability results in  quantum information theory, wherein a quantum measurement for extracting classical information from a quantum system is needed.
For example, letting the coefficient $c$ in Eq.~\eqref{eq:HN03} be a fixed constant along with a \emph{quantum Chernoff bound} \cite{ACM+07, ANS+08, Hay07, JOP+12} delivers a large deviation bound for c-q channel coding \cite{Hay07}.
Letting $c = \sfrac{1}{\sqrt{n}}$ for an $n$-fold i.i.d.~repetition of a c-q channel, Eq.~\eqref{eq:HN03} achieves the second-order coding rate \cite{PPV10, Tan14, TH13, Li14, TV15} in the small deviation regime. Later, both results were extended to the moderate deviation regime \cite{CH17, CTT2017} accordingly. In addition, Anshu, Jain, and Warsi proposed a \emph{position-based coding} for achieving entanglement-assisted classical communication over quantum channels \cite{AJW19a}, which also relies on the Hayashi--Nagaoka operator inequality in Eq.~\eqref{eq:HN03}.

Apart from the success and significance of Hayashi and Nagaoka's approach, there are still conceptual and practical subtleties.
First, the technically sophisticated proof of the operator inequality \eqref{eq:HN03} may blind the insight of the analysis and hide the reason why such a coding strategy works.
Does there exist a good quantum coding strategy that naturally reflects the union bound as in Eq.~\eqref{eq:union} so that the analysis is more interpretable?
Second, is it possible to tighten the one-shot achievability bound for quantum information-theoretic tasks by eliminating the incurred coefficients in terms of $c$ \footnote{
	It is generally believed that the coefficients in Eq.~\eqref{eq:HN03} or in the quantum union bound cannot be removed \cite{OMW19}. However, achieving a tighter one-shot bound in quantum information theory without such coefficients is possible. That is the keynote of the present paper.
}  \footnote{A one-shot achievability bound without the coefficients in terms of $c$ was actually proved by Beigi and Gohari \cite{BG14}
	but the comparison to Hayashi and Nagaoka's bound \cite[Lemma 3]{HN03}, \cite[Theorem 1]{WR13} was missing.
	We will compare Beigi and Gohari's result \cite[Corollary 1]{BG14} with ours (Theorem~\ref{theo:main}) in Section~\ref{sec:comparison}.
}?
Removing those coefficients may seem superficial.		
However, we remark that every bit in an analytical bound counts in the one-shot setting; one cannot ignore any coefficient.
On top of that, the unnecessary coefficients may often \emph{trivialize} the $(\varepsilon, M)$ trade-off.
For instance, existing one-shot bounds on the error probability $\varepsilon$ could trivially be greater than $1$ for $\log M$ close to the channel capacity.
This analysis then provides no useful characterizations of certain system configurations for practical communication.
Lastly, the Hayashi--Nagaoka decoder \cite{HN03, WR13} involves solving a positive semi-definite program (SDP) to obtain the mathematical description
of quantum measurement, for which the computational complexity is exponential in the number of qubits.
Moreover, a quantum algorithm for implementing the Hayashi--Nagaoka decoder is still missing.

In this paper, we give affirmative answers to the above concerns and questions by showing that the so-called \emph{pretty good measurement} (PGM) \cite{Bel75, HW94} naturally plays a role as the union bound.
Together with the random coding technique, it yields a one-shot achievability bound via a much simpler and self-explainable analysis, which is merely based on previously known facts. 
The coefficients mentioned above in terms of $c$ are not required anymore, and hence, the established one-shot bound is sharpened. Furthermore, the proof itself provides a more transparent connection between c-q channel coding and binary quantum hypothesis testing.

To present our result, we first introduce a \emph{noncommutative minimal} between two positive semi-definite operators $A$ and $B$ as \footnote{The noncommutative minimal is indeed unique. We refer the reader to Section~\ref{sec:minimal} for a more precise statement.}
\begin{align} \notag
	A\wedge B := \frac{A+B-|A-B|}{2}.
\end{align}
This quantity is prominent in quantum state discrimination since the celebrated \emph{Holevo--Helstrom theorem} \cite{Hel67, Hol72, Hol76} endowed it with an operational meaning \footnote{
	Note here that such an operational interpretation in quantum statistical decision theory applies to general self-adjoint positive semi-definite operators $A$ and $B$ whose traces are not necessarily normalized.
	One can think of that the operators $A$ and $B$ already incorporate the prior probabilities and the penalty for the erroneous decision; see e.g.~\cite[\S 2.2.2]{Hol01} and \cite[Eq.~(1)]{Li16}.
}:
\[
\textit{\,$\Tr\left[ A\wedge B \right]$ determines the minimum `\emph{error}' of discrimination between operators $A$ and $B$.
}
\]

The coding strategy for a c-q channel $x\mapsto \rho_{\B}^x$ (which maps each classical symbol $x$ to a density operator $\rho_{\B}^x$) proceeds as follows. 
The encoding is via a random codebook $\left\{x(1),x(2),\ldots, x(M) \right\}$, in which each codeword $x(m)$ is drawn pairwise independently according to an arbitrary probability distribution $p_{\X}$.
The decoding is via the PGM with respect to the corresponding channel output states \cite{Bel75, HW94}:
\begin{align} \notag
	\left\{ \frac{\rho_{\B}^{x(m)}}{ \sum_{\bar{m}=1}^M \rho_{\B}^{x(\bar{m})}  } \right\}_{m=1}^M.
\end{align}
We show that the associated average error probability is upper bounded by (Theorem~\ref{theo:main}):
\begin{align}
	\Tr\left[ \rho_{\X\B} \wedge (M-1) \rho_{\X}\otimes \rho_{\B} \right], \label{eq:3}
\end{align}
with $\rho_{\X\B} = \sum_{x\in\X} p_{\X} (x)|x\rangle\langle x|\otimes \rho_{\B}^x$ the resulting joint bipartite state between the channel input and output.
Via the Holevo--Helstrom theorem, the established bound in Eq.~\eqref{eq:3} provides us the following interpretation for sending $M$ messages over a c-q channel
(see Section~\ref{sec:Main} for the detailed explanation)
:
\[
\textit{The average error probability is upper bounded by the error of
	discriminating $\rho_{\X\B}$ and $(M\!-\!1)\rho_{\X}\otimes \rho_{\B}$.}
\]

Below, let us elaborate on the intuition of the proposed coding strategy and why PGM works well.
The key observation is that using the PGM to discriminate $M$ states at the channel output is {exactly} equivalent to (an average of) binary discrimination between each channel output state, say, e.g.~$\rho_{\B}^{x(m)}$ against the remaining $(M-1)$ states $\rho_{\B}^{x(\bar{m})}$ for all $\bar{m}\neq m$ using a two-outcome PGM.
In this regard, PGM works as a \emph{one-versus-rest} classification strategy; see Figure~\ref{fig:simple1}.
Most importantly, this manifests the fact that PGM effectively resembles the \emph{quantum union bound} as shown in the right-hand side of Eq.~\eqref{eq:union}.
By taking the conditional expectation $\mathds{E}_{x(\bar{m}) \mid x(m) }$ over the random codebook, the remaining states are hence \emph{averaged} to $(M-1)$ identical marginal states $\rho_{\B}$; see Figure~\ref{fig:simple2}.
	After taking expectation $\mathds{E}_{x(m)}$, the error bound is equivalent to discriminating the joint state $\rho_{\X\B}$ between channel input and output against $(M-1)$ product states $\rho_{\X} \otimes \rho_{\B}$ as shown in Figure~\ref{fig:simple3}.
This gives the elegant and clean bound in Eq.~\eqref{eq:3}.

\medskip
The proposed simple derivation enjoys the following favorable features.
\begin{enumerate}[(I)]
	\item The one-shot achievability bound in Eq.~\eqref{eq:3} admits a closed-form expression as the Holevo--Helstrom theorem. Computing such a bound is more time efficient than the previous results in terms of entropic quantities involving optimizations (see Remark~\ref{remark:time} in Section~\ref{sec:Main}). 
	
	\item 
	The proposed coding scheme based on the pretty-good measurement is directly implementable via the existing quantum algorithm by Gily{\'{e}}n \textit{et al.} \cite{GLM+22}.
	
	\item The self-explainable proof signifies a more lucid connection between c-q channel coding and hypothesis testing.
	Moreover, our coding strategy and analysis show that PGM effectively works as a union bound by itself.
	Hence, neither the operator inequality \eqref{eq:HN03} nor a quantum union bound is needed.
	
	
	\item The proposed bound in Eq.~\eqref{eq:3} is free of parameter $c$ as in Eq.~\eqref{eq:HN03}. This then shows that the established one-shot achievable error bound is tighter than previously known results based on the Hayashi--Nagaoka operator inequality, Eq.~\eqref{eq:HN03}; see Section~\ref{sec:comparison} and Table~\ref{table:comparison} therein for a comparison with existing results.
	Moreover, it unifies asymptotic derivations in the large, small, and moderate derivation regimes.
	We refer the reader to Figure~\ref{fig:flow} for a schematic flow chart.
	
	\item The proposed analysis applies to infinite-dimensional quantum systems \cite{Ser17} as well, e.g.~communication over infinite-dimensional channels with energy constraints or cost constraints \cite{Hol02, Hol12}.
	
	\item The proposed methods via pretty-good measurement naturally extend to various quantum information-theoretic tasks, leading to more profound and sharpened results. These tasks include:
	\begin{enumerate}[(i)]
		\item binary quantum hypothesis testing (Section~\ref{sec:HT}),
		\item entanglement-assisted classical communication over point-to-point quantum channels (Section~\ref{sec:EA}),
		\item classical data compression with quantum side information (Section~\ref{sec:CQSW}),
		\item entanglement-assisted and unassisted classical communication over quantum multiple-access channels (Section~\ref{sec:MAC}),
		\item entanglement-assisted and unassisted classical communication over quantum broadcast channels (Section~\ref{sec:broadcast}),
		\item entanglement-assisted and unassisted classical communication over quantum  channels with casual state information available at the encoder (Section~\ref{sec:SI}).
	\end{enumerate}
	We refer the reader to the summary given in Table~\ref{table:Table} below.
	
\end{enumerate}

Lastly, the established simple analysis applies to the \emph{position-based coding}, a pivotal technique in one-shot quantum information theory (see, e.g.~Refs.~\cite{ADJ17, AJW19a, AJW19a, Wil17b, QWW18}), proposed by Anshu, Jain, and Warsi \cite[Lemma 4]{AJW19a}, whose decoding strategy again relies on the Hayashi--Nagaoka operator inequality in Eq.~\eqref{eq:HN03}.
The sharpened position-based coding (Theorem~\ref{lemm:packing} in Section~\ref{sec:application}) constitutes the primary technique of deriving numerous one-shot achievability bounds in Section~\ref{sec:application}.
By virtue of its variability, we may term it as a \emph{one-shot quantum packing lemma}, and it might lead to more fruitful applications elsewhere.


\begin{figure}[ht!]
	\centering
	\begin{subfigure}[b]{0.3\textwidth}
		\centering
		\includegraphics[width=\textwidth]{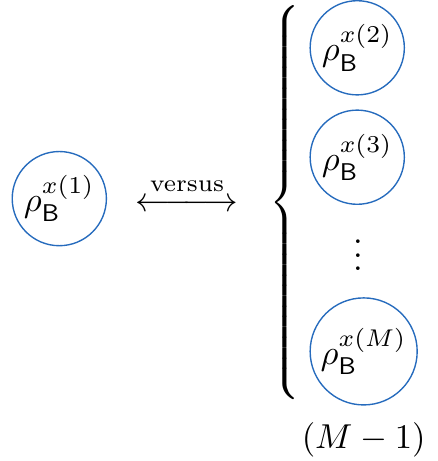}
		\caption{ \small
			Given a realization of a codebook $\mathcal{C}=\{x(1),\ldots, x(M)\}$, the error probability of sending message $m=1$ using pretty-good measurement (PGM) is upper bounded by the error of distinguishing $\rho_{\B}^{x(1)}$ against the remaining states, i.e.~$\sum_{\bar{m}=2}^M \rho_{\B}^{x(\bar{m})}$.			
			We take the sum of the remaining channel output states is because the PGM effectively works as a \emph{one-versus-rest}  classification strategy.
		}
		\label{fig:simple1}
	\end{subfigure}
	\hfill
	\begin{subfigure}[b]{0.3\textwidth}
		\centering
		\includegraphics[width=\textwidth]{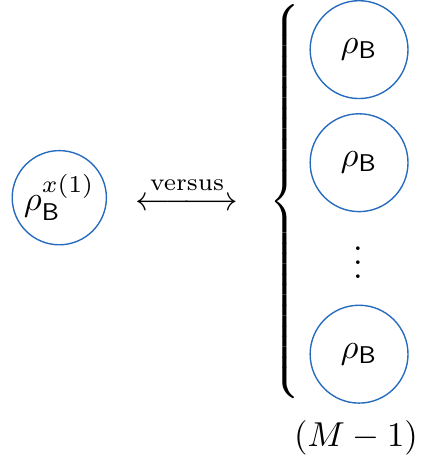}
		\caption{
			Taking the conditional expectation over the random codebook $\mathcal{C}$ conditioned on codeword $x(1)$, the error probability of sending message $m=1$ is upper bounded by the error of distinguishing $\rho_{\B}^{x(1)}$ against $(M-1) \rho_{\B}$.
			Namely, by randomly drawing a codeword $x(1)\sim~p_{\X}$, we are distinguishing the associated channel output state against the average channel output state scaled by a factor $(M-1)$.
		}
		\label{fig:simple2}
	\end{subfigure}
	\hfill
	\begin{subfigure}[b]{0.3\textwidth}
		\centering
		\includegraphics[width=\textwidth]{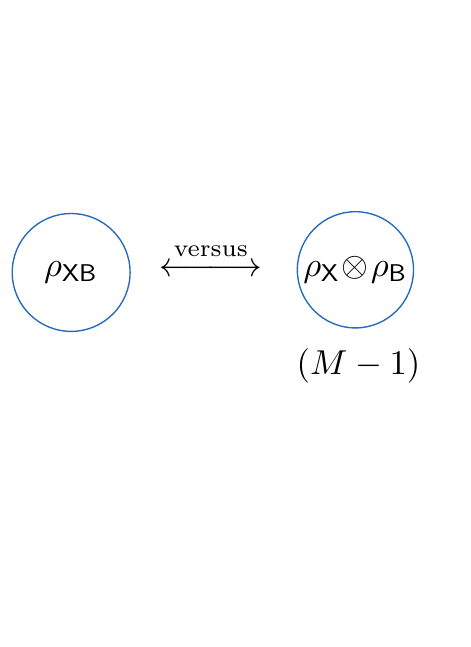}
		\caption{
			Taking the expectation over the transmitted codeword $x(1)\sim~p_{\X}$,
			Figure~\ref{fig:simple2} is equivalent to distinguishing the joint state $\rho_{\X\B}$ between channel input and output against the
			scaled product of its marginal states $(M-1)\rho_{\X}\otimes \rho_{\B}$; 
			see Eq.~\eqref{eq:3}.
			This may be viewed as a \emph{one-shot packing lemma} for classical-quantum channel coding.
		}
		\label{fig:simple3}
	\end{subfigure}
	\caption{\small
		Schematic illustration of the proposed achievability analysis for classical-quantum channel ($x\mapsto \rho_{\B}^x$) coding.
		Similar reasoning applies to various quantum information-theoretic tasks. The reader can refer to Table~\ref{table:Table} and Section~\ref{sec:application} for instances.
	}
	\label{fig:simple}
\end{figure}

\begin{figure}[h!]
	\centering
	\resizebox{1\columnwidth}{!}{
				\includegraphics{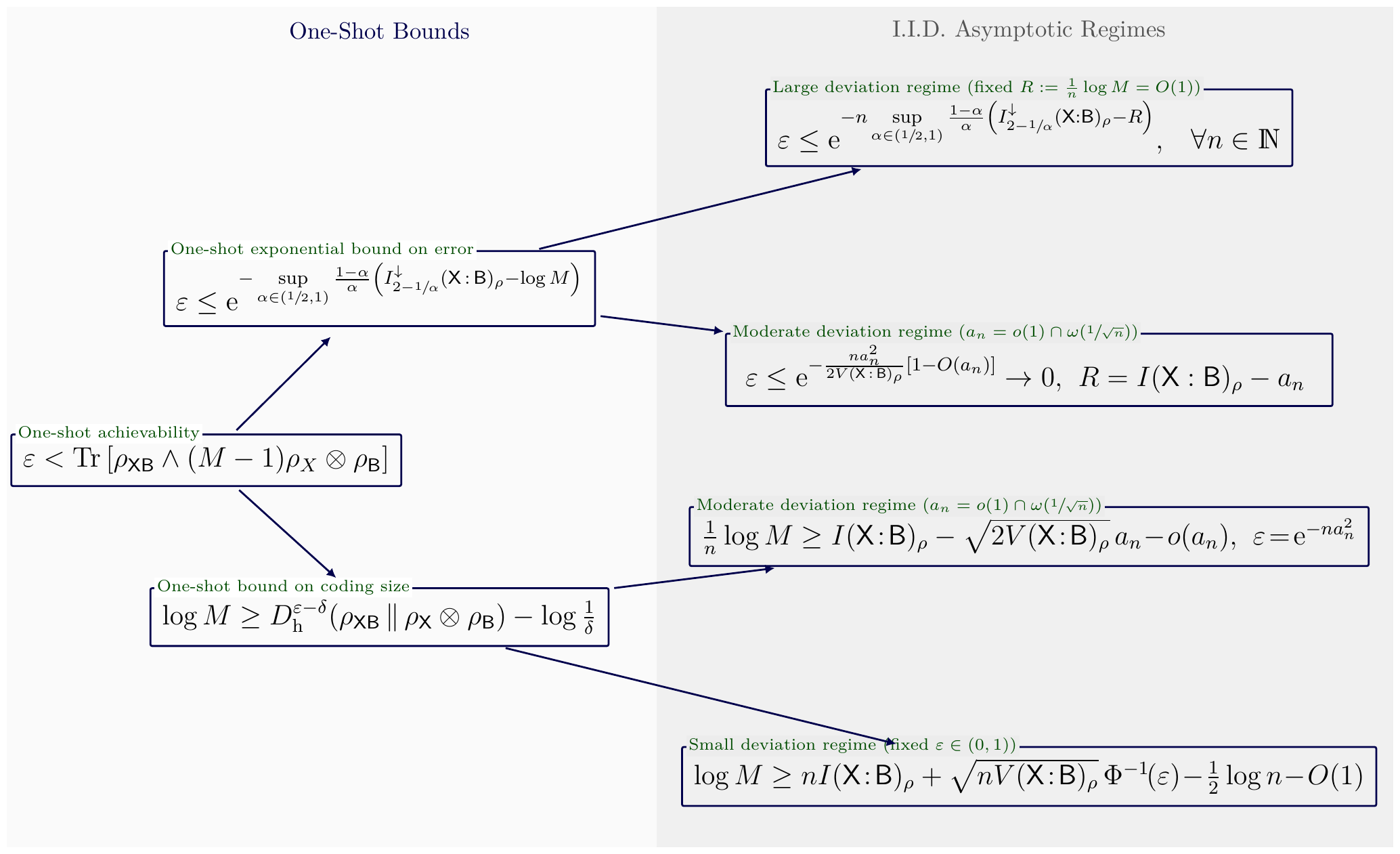}     
	}
	\caption{\small Flow chart of the implications of the established one-shot achievability bound in the large deviation, small deviation, and moderate deviation regimes.
		Here, $R:= \frac1n \log M$ denotes the coding rate for blocklength $n\in\mathds{N}$.
		The precise notation is given in Section~\ref{sec:Main}.
	}
	\label{fig:flow}
\end{figure}

\bigskip
\medskip
\begin{table}[h!]
	\resizebox{\columnwidth}{!}{
		\begin{tabular}{ c  c  c  } 
			\toprule
			
			\multirow{2}{*}{Information-theoretic tasks} & \multirow{2}{*}{\textbf{One-shot achievability}} & Bounds on coding error \\
			
			\cdashlinelr{3-3}
			
			& & Bounds on coding size \\
			
			\midrule
			\midrule
			
			\multirow{2}{*}{Point-to-point quantum channel} & \multirow{2}{*}{$\displaystyle\varepsilon \leq \Tr\left[ \mathscr{N}_{\!\A\to\B}\left(\theta_{\X\A}\right) \wedge (M-1)\theta_{\X}\otimes \mathscr{N}_{\!\A\to\B}\left(\theta_{\A}\right)\right]$} & $\eps \leq \e^{- 
				\sup\limits_{\alpha\in(\sfrac12,1)} 
				\frac{1-\alpha}{\alpha} \left( I_{2-\sfrac{1}{\alpha}}^\downarrow (\X{\,:\,}\B)_{\mathscr{N}_{\!\A\to\B}\left(\theta_{\X\A}\right)} - \log M \right) }$
			\\
			\cdashlinelr{3-3}
			& & $\log M \geq I_\textnormal{h}^{\eps-\delta}\left(\X {\,:\,} \B\right)_{\mathscr{N}_{\!\A\to\B}(\theta_{\X\A})}  - \log \frac{1}{\delta}$\\
			
			\cmidrule(lr){1-3}
			
			\multirow{2}{*}{\shortstack{Entanglement-assisted\\point-to-point quantum channel}} & \multirow{2}{*}{$\displaystyle\varepsilon \leq \Tr\left[ \mathscr{N}_{\!\A\to\B}\left(\theta_{\R\A}\right) \wedge (M-1)\theta_{\R}\otimes \mathscr{N}_{\!\A\to\B}\left(\theta_{\A}\right)\right]$} & $\eps \leq \e^{- 
				\sup\limits_{\alpha\in(\sfrac12,1)} 
				\frac{1-\alpha}{\alpha} \left( I_{2-\sfrac{1}{\alpha}}^\downarrow (\R{\,:\,}\B)_{\mathscr{N}_{\!\A\to\B}\left(\theta_{\R\A}\right)} - \log M \right) }$
			\\
			\cdashlinelr{3-3}
			& & $\log M \geq I_\textnormal{h}^{\eps-\delta}\left(\R {\,:\,} \B\right)_{\mathscr{N}_{\!\A\to\B}\left((\theta_{\R\A}\right)}  - \log \frac{1}{\delta}$\\
			
			\cmidrule(lr){1-3}
			
			\multirow{2}{*}{\shortstack{Quantum channel\\with casual state information}} &
			$\displaystyle\varepsilon \leq \Tr\left[ \mathscr{N}_{\!\A\mathsf{S}\to\B}\left(\theta_{\U\A\mathsf{S}}\right) \wedge (M-1)\theta_{\U}\otimes \mathscr{N}_{\!\A\mathsf{S}\to\B}\left(\theta_{\A\mathsf{S}}\right)\right]$
			& $\eps \leq \e^{- 
				\sup\limits_{\alpha\in(\sfrac12,1)} 
				\frac{1-\alpha}{\alpha} \left( I_{2-\sfrac{1}{\alpha}}^\downarrow (\U{\,:\,}\B)_{\mathscr{N}_{\!\A\mathsf{S}\to\B}\left((\theta_{\U\A\mathsf{S}}\right)} - \log M \right) }$
			\\
			\cdashlinelr{3-3}
			& 
			\multicolumn{1}{l}{$\forall\,\theta_{\U\A\mathsf{S}}:$ $\Tr_{\A}\left[\theta_{\U\A\mathsf{S}}\right] = \theta_{\U} \otimes \vartheta_{\mathsf{S}}$}
			& $\log M \geq I_\textnormal{h}^{\eps-\delta}\left(\U {\,:\,} \B\right)_{\mathscr{N}_{\!\A\mathsf{S}\to\B}(\theta_{\U\A\mathsf{S}})}  - \log \frac{1}{\delta}$\\	
			
			\cmidrule(lr){1-3}
			
			\multirow{2}{*}{\shortstack{Entanglement-assisted quantum channel\\with casual state information}} & 
			$\displaystyle\varepsilon \leq \Tr\left[ \mathscr{N}_{\!\A\mathsf{S}\to\B}\left(\theta_{\R\A\mathsf{S}}\right) \wedge (M-1)\theta_{\R}\otimes \mathscr{N}_{\!\A\mathsf{S}\to\B}\left(\theta_{\A\mathsf{S}}\right)\right]$
			& $\eps \leq \e^{- 
				\sup\limits_{\alpha\in(\sfrac12,1)} 
				\frac{1-\alpha}{\alpha} \left( I_{2-\sfrac{1}{\alpha}}^\downarrow (\R{\,:\,}\B)_{\mathscr{N}_{\!\A\mathsf{S}\to\B}\left((\theta_{\R\A\mathsf{S}}\right)} - \log M \right) }$
			\\
			\cdashlinelr{3-3}
			& 
			\multicolumn{1}{l}{$\forall\,\theta_{\R\A\mathsf{S}}:$ $\Tr_{\A}\left[\theta_{\R\A\mathsf{S}}\right] = \theta_{\R} \otimes \vartheta_{\mathsf{S}}$}
			& $\log M \geq I_\textnormal{h}^{\eps-\delta}\left(\R {\,:\,} \B\right)_{\mathscr{N}_{\!\A\mathsf{S}\to\B}(\theta_{\R\A\mathsf{S}})}  - \log \frac{1}{\delta}$\\	
			
			\cmidrule(lr){1-3}
			
			\multirow{2}{*}{\shortstack{Broadcast quantum channel}} & \multicolumn{2}{l}{$\displaystyle \varepsilon_{\B} \leq \Tr\left[ 
				\Tr_{\mathsf{C}}\left[ \mathscr{N}_{\A\to\B\mathsf{C}}\left( \theta_{\U\A} \right)\right] \wedge (M_{\B}-1) \theta_{\U} \otimes 	\Tr_{\mathsf{C}}\left[ \mathscr{N}_{\A\to\B\mathsf{C}}\left( \theta_{\A} \right)\right]
				\right]$} 
			\\
			& \multicolumn{2}{l}{$\displaystyle \varepsilon_{\mathsf{C}} \leq \Tr\left[ 
				\Tr_{\mathsf{B}}\left[ \mathscr{N}_{\A\to\B\mathsf{C}}\left( \theta_{\V\A} \right)\right] \wedge (M_{\C}-1) \theta_{\V} \otimes 	\Tr_{\mathsf{B}}\left[ \mathscr{N}_{\A\to\B\mathsf{C}}\left( \theta_{\A} \right)\right]
				\right]$
				\,$\forall\, \theta_{\U\V\A}:$ $\Tr_{\A}\left[ \theta_{\U\V\A} \right] = \theta_{\U} \otimes \theta_{\V}$}  \\				
			
			\cmidrule(lr){1-3}
			
			\multirow{2}{*}{\shortstack{Entanglement-assisted\\broadcast quantum channel}} & \multicolumn{2}{l}{$\displaystyle \varepsilon_{\B} \leq \Tr\left[ 
				\Tr_{\mathsf{C}}\left[ \mathscr{N}_{\A\to\B\mathsf{C}}\left( \theta_{\R_{\B}\A} \right)\right] \wedge (M_{\B}-1) \theta_{\R_{\B}} \otimes 	\Tr_{\mathsf{C}}\left[ \mathscr{N}_{\A\to\B\mathsf{C}}\left( \theta_{\A} \right)\right]
				\right]$} 
			\\
			& \multicolumn{2}{l}{$\displaystyle \varepsilon_{\mathsf{C}} \leq \Tr\left[ 
				\Tr_{\mathsf{B}}\left[ \mathscr{N}_{\A\to\B\mathsf{C}}\left( \theta_{\R_{\C}\A} \right)\right] \wedge (M_{\C}-1) \theta_{\R_{\C}} \otimes 	\Tr_{\mathsf{B}}\left[ \mathscr{N}_{\A\to\B\mathsf{C}}\left( \theta_{\A} \right)\right]
				\right]$
				\,$\forall \theta_{\R_{\B}\R_{\C}\A}\,:\,\Tr_{\A}\left[ \theta_{\R_{\B}\R_{\C}\A} \right] = \theta_{\R_{\B}} \otimes \theta_{\R_{\C}}$} \\				
			
			\cmidrule(lr){1-3}
			
			\multirow{2}{*}{\shortstack{Multiple-access quantum channel}} & \multicolumn{2}{l}{$\displaystyle\varepsilon \leq \Tr\left[ \rho_{\X\Y\mathsf{C}} \,\Exterior \left(
				(M_{\A}-1) \rho_{\X}\otimes \rho_{\Y\mathsf{C}}
				+ (M_{\B}-1) \rho_{\Y}\otimes \rho_{\X\mathsf{C}}
				+ (M_{\A}-1)(M_{\B}-1) \rho_{\X}\otimes \rho_{\Y} \otimes \rho_{\mathsf{C}}
				\right)	\right]$} 
			\\
			& \multicolumn{2}{l}{$\displaystyle \rho_{\X\Y\mathsf{C}} :=\mathscr{N}_{\!\A\B\to\mathsf{C}}\left(\theta_{\X\A}\otimes \theta_{\Y\B}\right)$} \\				
			
			\cmidrule(lr){1-3}
			
			\multirow{2}{*}{\shortstack{Entanglement-assisted \\multiple-access quantum channel}} & \multicolumn{2}{l}{$\displaystyle\varepsilon \leq \Tr\left[ \rho_{\R_{\A}\R_{\B}\mathsf{C}} \,\Exterior \left(
				(M_{\A}-1) \rho_{\R_{\A}}\otimes \rho_{\R_{\B}\mathsf{C}}
				+ (M_{\B}-1) \rho_{\R_{\B}}\otimes \rho_{\R_{\A}\mathsf{C}}
				+ (M_{\A}-1)(M_{\B}-1) \rho_{\R_{\A}}\otimes \rho_{\R_{\B}} \otimes \rho_{\mathsf{C}}
				\right)	\right]$} 
			\\
			& \multicolumn{2}{l}{$\displaystyle \rho_{\R_{\A}\R_{\B}\mathsf{C}} :=\mathscr{N}_{\!\A\B\to\mathsf{C}}\left(\theta_{\R_{\A}\A}\otimes \theta_{\R_{\B}\B}\right)$} \\				
			
			\cmidrule(lr){1-3}
			
			\multirow{2}{*}{\shortstack{Classical data compression\\with quantum side information}} & \multirow{2}{*}{$\displaystyle\varepsilon \leq \Tr\left[ \rho_{\X\B} \, \Exterior \left(\frac{1}{M} \mathds{1}_{\X}\otimes \rho_{\B}\right)\right]$} & $\eps \leq \e^{- 
				\sup\limits_{\alpha\in(\sfrac12,1)} 
				\frac{1-\alpha}{\alpha} \left( \log M - H_{2-\sfrac{1}{\alpha}}^\downarrow (\X{\,|\,}\B)_{\rho} \right) }$
			\\
			\cdashlinelr{3-3}
			& & $\log M \leq H_\textnormal{h}^{\eps-\delta}\left(\X {\,|\,} \B\right)_{\rho} + \log \frac{1}{\delta}$\\	
			
			
			\bottomrule
		\end{tabular}
	}
	\caption{\small Summary of the established one-shot achievability bounds in various quantum information-theoretic tasks.
		The precise statements and notation can be found in Section~\ref{sec:application}.
	}	\label{table:Table}	
\end{table}

\newpage
This paper is organized as follows. Section~\ref{sec:minimal} formally introduces the noncommutative minimal and its properties.
Section~\ref{sec:Main} establishes our main result of the one-shot achievability for c-q channel coding; we compare it with existing results in Section~\ref{sec:comparison}.
Section~\ref{sec:application} entails its applications in one-shot quantum information theory. We conclude the paper and discuss possible open problems in Section~\ref{sec:conclusions}.
The Appendix~\ref{sec:proof_trace} proves a useful trace inequality regarding the noncommutative minimal.

\section{The Noncommutative Minimal and Its Properties} \label{sec:minimal}
We first recall the basic concepts of (binary) quantum state discrimination, which constitutes the central tool for the proposed achievability analysis in Section~\ref{sec:Main} below.
Given arbitrary positive semi-definite operators $A$ and $B$, we define the \emph{minimum error} \footnote{
	The quantum state discrimination problems are usually concerned with distinguishing an ensemble of quantum states where each state (i.e.~a density operator with unit trace) in the ensemble is endowed with a prior probability. 
	For example, the \emph{minimum error} defined above coincides with the error probability in conventional quantum state discrimination when assuming $\Tr[A+B] = 1$.
	We note that in Holevo's early works \cite{Hol72, Hol74}, \cite[\S II]{Hol76}, the scenario of distinguishing positive semi-definite operators (even on infinite-dimensional Hilbert space) was studied (see also \cite{AM14} and \cite[\S 3]{Wat18}).
} using two-outcome positive operator-valued measures (POVM) to distinguish them as
\begin{align} \notag
	\inf_{0\leq T \leq \mathds{1}} \Tr\left[A\left(\mathds{1}-T\right)\right] + \Tr\left[ B T\right].
\end{align}
The well-known \emph{Holevo--Helstrom theorem} \cite{Hel67, Hol72, Hol74, Hol76} shows that the infimum can be attained by a Neyman--Pearson test $T = \{A - B >0 \}$ that projects onto the positive part of the difference $A-B$, and the minimization is given by its dual formulation of a semi-definite program (SDP) \cite{YKF05}, \cite[\S 1.2.3]{Wat18}:
\begin{align}
	\begin{split} \label{eq:HH}
		\inf_{0\leq T \leq \mathds{1}} \Tr\left[A\left(\mathds{1}-T\right)\right] + \Tr\left[ B T\right] &= \sup_{M=M^\dagger} \left\{ \Tr\left[M\right] : M\leq A, \, M\leq B  \right\} \\
		&= \Tr\left[ A\wedge B \right].
	\end{split}
\end{align}
Here the supremum is attained by the so-called \emph{noncommutative minimal} (i.e.~the operator with the greatest trace among the lower bounds in terms of the Loewner partial ordering)
\cite{Hel67, Hol72, Hol74, Hol76} of self-adjoint operators $A$ and $B$ \footnote{One can also define the noncommutative minimal among multiple self-adjoint operators. Throughout this paper, we will only consider the case of two positive semi-definite operators.}, i.e.~
\begin{align} \label{eq:defn_min}
	A\wedge B \in \argmax_{M = M^\dagger } \left\{ \Tr\left[M\right] : M\leq A, \, M\leq B  \right\}.
\end{align}
In other words, the {Holevo--Helstrom theorem} \cite{Hel67, Hol72, Hol76} provides an operational meaning to the noncommutative minimal ``$\wedge$" for characterizing the minimum \emph{error} of distinguishing positive semi-definite operators $A$ and $B$  \footnote{In quantum state discrimination, one sometimes studies the maximum success probability \cite{Hel67, Hol72, Hol76} and expresses it via a semi-definite programming formulation as the trace of the so-called \emph{noncommutative maximal} \cite{YKL75, AM14, Wat18}, i.e.~the least (in trace ordering) of the upper bound (in Loewner partial ordering). Since the present paper aims to relate the \emph{error} of a quantum information-theoretic task to that of quantum state discrimination, we will only focus on the error.
	Note also that for distinguishing multiple operators, say $\{A_i \}_i$, the error of the discrimination is given by the noncommutative minimal among the set of operators $\{\sum_{j\neq i} A_j \}_i$ \cite[\S II]{YKL75}.}.
We adopt such an interpretation subsequently.

The main goal of this paper is to characterize the average error probability in quantum information-theoretic tasks in terms of the noncommutative minimal ``$\wedge$". 
To that end, we first review the important properties that will be used in the proposed analysis.
We note that the following properties can be found in existing literature.

\begin{fact}[Properties of noncommutative minimal \footnote{Some of the properties can be proved by using the quantum Hockey--Stick divergence \cite{PV10a, SW12, HRF22} as well. In this paper, we will just stick to the notation of the noncommutative minimal.}] \label{fact}
	Considering arbitrary self-adjoint operators $A$ and $B$,  the noncommutative minimal defined in Eq.~\eqref{eq:defn_min} has the following properties.
	\begin{enumerate}[(i)]
		\item\label{fact:closed-form} (Unique closed-form expression.) 
		The noncommutative minimal $A\wedge B$ is unique and  
		\[
		\displaystyle A\wedge B=  \tfrac{1}{2}\left(A+B - |A-B|\right).
		\]
		
		\vspace{1pt}
		
		\item\label{fact:order} (Monotone increase in the Loewner ordering.) 
		It holds that $\displaystyle\Tr[A\wedge B] \leq \Tr[A'\wedge B']$ for $A\leq A'$ and $B\leq B'$.
		
		\vspace{1pt}
		
		\item\label{fact:monotone} (Monotone increase under positive trace-preserving maps.)
		It holds that $\displaystyle\Tr[A\wedge B] \leq \Tr[\mathscr{N}(A)\wedge \mathscr{N}(B)]$ for any positive trace-preserving map $\mathscr{N}$.
		
		\vspace{1pt}
		
		\item\label{fact:concave} (Concavity.)
		The map $(A,B)\mapsto \Tr[A\wedge B]$ is jointly concave.
		
		\vspace{1pt}
		
		\item\label{fact:direct} (Direct sum.) 
		It holds that $\displaystyle(A\oplus A')\wedge (B\oplus B') = (A\wedge B) \oplus (A'\wedge B')$ for any self-adjoint $A'$ and $B'$.
		
		\vspace{1pt}
		
		\item\label{fact:upper} (Upper bound.)  
		It holds that $\displaystyle\Tr[A\wedge B] \leq \Tr[ A^{1-s} B^s]$ for any $A,B\geq 0$ and $s\in(0,1)$.
		
		\vspace{1pt}		
		\item\label{fact:lower} (Lower bound \footnote{In case that $A+B$ is not invertible, one just uses the Moore--Penrose pseudo-inverse of $A+B$ in the definition of the noncommutative quotient Eq.~\eqref{eq:quotient}.}.)
		It holds that  $\displaystyle\Tr[A\wedge B] \geq \Tr\left[ A \frac{B}{A+B} \right]$ for any $A,B\geq 0$.
	\end{enumerate}
\end{fact}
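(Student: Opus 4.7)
The plan is to leverage the two equivalent characterizations of $\Tr[A\wedge B]$ afforded by the Holevo--Helstrom identity Eq.~\eqref{eq:HH}: the primal SDP $\sup\{\Tr M : M=M^\dagger,\, M\leq A,\, M\leq B\}$ and its dual $\inf\{\Tr[A(\mathds{1}-T)] + \Tr[BT] : 0\leq T\leq \mathds{1}\}$. Together with the Jordan decomposition $A-B = (A-B)_+ - (A-B)_-$, these will carry six of the seven properties via short verifications; only the lower bound (vii) requires a separate trace inequality, which is why it is deferred to Appendix~\ref{sec:proof_trace}.

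For (i), I would set $M := \tfrac{1}{2}(A+B-|A-B|)$ and verify feasibility via $A - M = (A-B)_+ \geq 0$ and $B - M = (B-A)_+ \geq 0$; the trace of this $M$ equals $\tfrac{1}{2}(\Tr A + \Tr B - \Tr|A-B|)$, which by the Neyman--Pearson test coincides with the dual optimum, so $M$ is primal-optimal. Uniqueness follows by a complementary-slackness argument: any other optimizer must also saturate both Loewner constraints on the supports of $(A-B)_\pm$, forcing it to coincide with $M$. Part (v) is then immediate from this closed form, since absolute value and trace both respect direct sums.

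For (ii)--(iv), I would work entirely from the dual expression. Monotonicity (ii) is instant because if $A\leq A'$ and $B\leq B'$, any primal feasible $M$ for $(A,B)$ remains feasible for $(A',B')$. Joint concavity (iv) follows because the dual is a pointwise infimum of jointly affine functions in $(A,B)$. Data processing (iii) follows by substituting the test $\mathscr{N}^\dagger(T)$ into the dual for $(\mathscr{N}(A),\mathscr{N}(B))$, using that the adjoint of a positive trace-preserving map is unital positive and hence sends $\{0\leq T\leq \mathds{1}\}$ into itself. For (vi), I would invoke the scalar bound $\min(a,b)\leq a^{1-s}b^s$ and lift it to the matrix setting via the standard block-matrix argument together with operator monotonicity of $t\mapsto t^s$ for $s\in(0,1)$, which is the well-known quantum Chernoff inequality.

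The main obstacle is (vii). The natural candidate $M = A(A+B)^{-1}B$, the parallel sum, satisfies $M\leq A$ and $M\leq B$ in the Loewner order and has trace equal to $\Tr[A\cdot B/(A+B)]$, but it is not self-adjoint in general and thus not directly admissible in the primal SDP as stated. My plan is to replace it by a symmetrized version (for instance, its Hermitian part, or an integral representation of the noncommutative quotient) and show that the resulting self-adjoint operator is still sandwiched between $0$ and each of $A$ and $B$ while retaining a trace at least $\Tr[A\cdot B/(A+B)]$. This symmetrization, being the only step that genuinely goes beyond SDP duality, is precisely what Appendix~\ref{sec:proof_trace} handles.
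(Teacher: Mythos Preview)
Parts (i)--(vi) of your proposal are sound and essentially match the paper's treatment. The paper mostly cites references, while you spell out the SDP-based arguments; for (iii) and (iv) the paper invokes contractivity and the triangle inequality of the trace norm applied to the closed form $\tfrac12(A+B-|A-B|)$, whereas you argue via the dual representation---both are equivalent routes.

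The genuine gap is in (vii). Your premises about the parallel sum $M = A(A+B)^{-1}B$ are incorrect in two respects. First, $M$ \emph{is} self-adjoint: from $A(A+B)^{-1}B = A - A(A+B)^{-1}A = B - B(A+B)^{-1}B = B(A+B)^{-1}A$ one sees $M = M^\dagger$, and this already shows $M\leq A$, $M\leq B$, so $M$ is primal feasible without any symmetrization. Second, and more seriously, its trace does \emph{not} equal $\Tr\bigl[A\,\tfrac{B}{A+B}\bigr]$ in the paper's convention $\tfrac{B}{A+B}=(A+B)^{-1/2}B(A+B)^{-1/2}$; in fact one has the strict inequality $\Tr[A(A+B)^{-1}B]\leq \Tr\bigl[A\,\tfrac{B}{A+B}\bigr]$ in general (the paper notes this in the Remark after Lemma~\ref{lemm:trace}). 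Hence feeding the parallel sum into the primal SDP yields only the weaker bound $\Tr[A(A+B)^{-1}B]\leq \Tr[A\wedge B]$, and no symmetrization of $M$ can bridge that gap since the trace was already too small.

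The paper's Appendix~\ref{sec:proof_trace} takes an entirely different route: it rewrites the quantity of interest via a collision divergence $D_2^*$ on a direct-sum space and applies the data-processing inequality for $D_2^*$ under a CPTP measure-and-prepare map built from the optimal Neyman--Pearson projectors. This yields the strengthened inequality \eqref{eq:tighter_PGM}, from which (vii) follows. Your primal-feasibility intuition does not extend to this; you would need to abandon the parallel-sum candidate altogether.
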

\begin{proof}
	For \ref{fact:closed-form}: the uniqueness (also for multiple operators) was proved by Holevo \cite[Theorem 2]{Hol74}, \cite[\S 2.2]{Hol76} and later also by Audenaert and Mosonyi \cite[Theorem A.3 \& Eq.~(85)]{AM14}; the closed-form expression may have already been known by Holevo and Helstrom \cite{Hel67, Hol72, Hol76} (see also Ref.~\cite[Lemma A.7]{AM14}).
	
	\noindent Property~\ref{fact:order} follows directly from the definition given in Eq.~\eqref{eq:defn_min} (see also \cite[Lemma A.8]{AM14}).
	
	\noindent Properties~\ref{fact:monotone} and ~\ref{fact:concave} follow from the fact that the trace norm (i.e.~$\|M\|_1:=\Tr[|M|]$) is contractive under positive trace-preserving maps and the triangle inequality (see e.g.~Ref.~\cite[Theorems 9.2 \& 9.3]{NC09}) of $\|\cdot\|_1$.
	We note that the monotone increase under a positive  trace-preserving map and the concavity for multiple operators also hold.
	
	\noindent Property~\ref{fact:direct} with trace is due to the direct-sum structure of the trace norm; the case without trace was proved in Ref.~\cite[Lemma A.9]{AM14}.
	
	\noindent Property~\ref{fact:upper} is the celebrated inequality of Audenaert \textit{et al.} \cite{ACM+07, ANS+08, Aud14} used in proving the \emph{quantum Chernoff bound}; later, it was generalized to infinite-dimensional Hilbert space \cite{JOP+12}.
	
	\noindent Property~\ref{fact:lower} in a special case of $\Tr[A+B] = 1$ is an immediate consequence of the Barnum--Knill Theorem \cite{BK02}, \cite[Theorem 3.10]{Wat18}. That is, the error probability using the {pretty good measurement} \cite{Bel75, HW94} is no larger than twice that of using the optimal measurement. 
	The proof for the general case of $A,B\geq 0$ can be found in the author's previous work \cite[Lemma 3]{SGC22a}.
	For the completeness, we provide an alternative proof of  property~\ref{fact:lower} (and a strengthened result of it) in Appendix~\ref{sec:proof_trace}.
\end{proof}


\section{Main Result: A One-Shot Achievability for Classical-Quantum Channel Coding} \label{sec:Main}

In this section, we prove our main result of establishing a one-shot achievability bound for classical-quantum channel coding via a direct application of the pretty-good measurement (PGM) \cite{Bel75, HW94}.


\begin{defn}[Classical-quantum channel coding]
	Let $\mathscr{N}_{\X \to \B}: x\mapsto \rho_{\B}^x$ be a classical-quantum channel, where each channel output $\rho_{\B}^x$ is a density operator (i.e.~a positive semi-definite operator with unit trace).
	\begin{enumerate}[1.]
		\item Alice holds classical registers $\mathsf{M}$ and $\X$, and Bob holds a quantum register $\B$.
		
		
		\item An encoding $m \mapsto x(m)$ maps equiprobable messages in $\mathsf{M}$ to a codeword in $\X$.
		
		\item The classical-quantum channel $\mathscr{N}_{\X\to \B}$ is applied on Alice's register $\X$ and outputs a state on $\B$ at Bob.
		
		\item A decoding measurement described by a positive operator-valued measure (POVM) $\left\{  {\Pi}_{\B}^{m} \right\}_{m\in\mathsf{M}}$ is performed on Bob's quantum register $\B$ to extract the sent message $m$.
	\end{enumerate}
	
	An $(M, \eps)$-code for $\mathscr{N}_{\X\to \B}$ is a protocol such that $|\mathsf{M}| = M$ and the average error probability satisfies
	\begin{align} \notag
		\frac{1}{M} \sum_{m\in \mathsf{M}}	\Tr\left[ \rho_{\B}^{x(m)} \left(\mathds{1}_{\B} -   {\Pi}_{\B}^{m} \right) \right]
		\leq \varepsilon.
	\end{align}
\end{defn}


The encoding is the standard random coding strategy.
\begin{itemize}
	\item \textbf{Encoding.} 
	Consider a random codebook $\mathcal{C} = \{x(1), x(2), \ldots, x(M)\}$, where each of the codewords $x(m) \in \X$ is pairwise independently drawn from a probability distribution $p_{\X}$.
	Alice sends codewords according to the realization of the codebook $\mathcal{C}$.
	
	\item \textbf{Decoding.} At the receiver, given a realization of the random codebook $\mathcal{C}$ and the corresponding channel output states $\{\rho_{\B}^{x(m)} \}_{ m \in \mathsf{M}} $,
	Bob performs the PGM to decode each message $m\in\mathsf{M}$:
	\begin{align} \label{eq:PGM}
		 {\Pi}^{m}_{\B} := \frac{\rho_{\B}^{x(m)}}{\sum_{\bar{m}\in \mathsf{M}} \rho_{\B}^{x(\bar{m})} }, \quad \forall m\in\mathsf{M}.
	\end{align}
\end{itemize}

Our main result is the following.

\begin{theo}[A one-shot achievability bound for classical-quantum channel coding] \label{theo:main}
	Consider an arbitrary classical-quantum channel $\mathscr{N}_{\X \to \B}: x\mapsto \rho_{\B}^x$. Then, there exists an $(M,\eps)$-code for $\mathscr{N}_{\X \to \B}$ such that for any probability distribution $p_{\X}$,
	\begin{align} \label{eq:main}
		\eps
		\leq \Tr\left[ \rho_{\X\B} \wedge (M-1) \rho_{\X}\otimes \rho_{\B} \right].
	\end{align}
	Here, $\rho_{\X\B} := \sum_{x\in\X} p_{\X} (x) |x\rangle \langle x| \otimes \rho_{\B}^x$ and the noncommutative minimal is $A\wedge B = \frac12(A+B-|A-B|)$ (see Fact~\ref{fact:closed-form}).
\end{theo}

\begin{proof}
	The claim follows from the lower bound of the noncommutative minimal ``$\wedge$'' given in Fact~\ref{fact:lower} for relating the pretty good measurement to the optimal measurement, and the concavity of ``$\wedge$'', i.e.~Fact~\ref{fact:concave}.
	Precisely, 
	given any realization of codebook $\mathcal{C} = \{ x(m)\}_{m\in \mathsf{M}}$, we calculate the average probability of erroneous decoding  using the PGM given in Eq.~\eqref{eq:PGM} as
	\begin{align}
		\frac{1}{M} \sum_{m \in \mathsf{M}} \Tr\left[ \rho_{\B}^{x(m)} \frac{ \sum_{\bar{m}\neq m} \rho_{\B}^{x(\bar{m})} }{\rho_{\B}^x + \sum_{\bar{m}\neq m} \rho_{\B}^{\bar{m}}} \right]
		&{\leq} \frac{1}{M} \sum_{m \in \mathsf{M}}  \Tr\left[  \rho_{\B}^{x(m)} \wedge \left(\sum\nolimits_{\bar{m}\neq m}  \rho_{\B}^{x(\bar{m}) } \right) \right], \label{eq:Ogawa}
	\end{align}	
	where we have applied Fact~\ref{fact:lower} 
	with $A =  \rho_{\B}^{x(m)}$ and $B = \sum\nolimits_{\bar{m}\neq m}  \rho_{\B}^{x(\bar{m})}$
	to relate the error probability under the PGM to the expression in terms of the noncommutative minimal.
	Next, we take the expectation for each $x(m)\sim p_{\mathsf{X}}$ to bound the expected average error probability (which is also called the random-coding error probability), i.e.
	\begin{align}
		&
		\frac{1}{M}\sum_{ m \in \mathsf{M}}  \mathds{E}_{x(m), x(\bar{m}) \sim p_{\X}   }\Tr\left[ \rho_{\B}^{x(m)} \wedge \left(\sum\nolimits_{\bar{m}\neq m}  \rho_{\B}^{x(\bar{m}) } \right) \right] 
		\notag
		\\
		&\overset{\text{(a)}}{\leq} \frac{1}{M}\sum_{ m \in \mathsf{M}} \mathds{E}_{ x(m)\sim p_{\X} }\Tr\left[ \rho_{\B}^{x(m)} \wedge \left( \mathds{E}_{x(\bar{m}) | x(m)} \left[\sum\nolimits_{\bar{m}\neq m}  \rho_{\B}^{x(\bar{m})} \right] \right) \right] \label{eq:main_temp}\\
		\notag
		&\overset{\text{(b)}}{=} \frac{1}{M}\sum_{ m \in \mathsf{M}} \mathds{E}_{x(m)\sim p_{\X}}\Tr\left[  \rho_{\B}^{x(m)} \wedge (M-1)\rho_{\B}\right]
		\\
		\notag
		&= \mathds{E}_{x\sim p_{\X}}\Tr\left[\rho_{\B}^x \wedge (M-1)\rho_{\B}\right],
	\end{align}
	where in (a) we used the concavity given in 
	Fact~\ref{fact:concave}
	and in (b) 
	we recalled the pairwise independence of the random codebook.
	
	Invoking the direct sum formula given in Fact~\ref{fact:direct}, we arrive at the claimed inequality at the right-hand side of Eq.~\eqref{eq:main}.
	Lastly, since the random-coding error probability using any $p_{\X}$ is larger than the error probability of the optimal code, the proof is completed.
\end{proof}

Below, we provide a detailed explanation of how PGM works.
An important feature of PGM is that the POVM element $ {\Pi}_{\B}^{x(m)}$ given in Eq.~\eqref{eq:PGM} is \emph{proportional} to the sent state $\rho_{\B}^{x(m)}$.
On the other hand, the complement of the POVM element, i.e.~$\mathds{1}_{\B} -  {\Pi}_{\B}^{x(m)}$, is proportional to the sum of the remaining states $\sum_{\bar{m}\neq m} \rho_{\B}^{x(\bar{m})}$.
Hence, the average error probability of discriminating $M$ channel output states, i.e.~the left-hand side of Eq.~\eqref{eq:Ogawa}, is equivalent to the error of deciding each sent state $\rho_{\B}^{x(m)}$ using the following two-outcome PGM:
\begin{align} \notag
	\left\{  \frac{ \rho_{\B}^{ x(m)} }{\rho_{\B}^{x(m)} + \sum_{\bar{m}\neq m} \rho_{\B}^{x(\bar{m})}} , \frac{ \sum_{\bar{ m}\neq m} \rho_{\B}^{x(\bar{m})} }{\rho_{\B}^{x(m)} + \sum_{\bar{m}\neq m} \rho_{\B}^{x(\bar{m})} } \right\}.
\end{align}
Such the discrimination between $\rho_{\B}^{x(m)}$ with prior probability $\frac{1}{M}$ against the {sum} of the remaining states (again each with prior probability $\frac{1}{M}$) reflects the nature of the union bound inherited in the PGM; cf.~the right-hand side of Eq.~\eqref{eq:union}. 
Next, taking the expectation on the remaining states ensures that we are discriminating $\rho_{\B}^{x(m)}$ with prior probability $\frac{1}{M}$ against $(M-1)$ identical marginal states $\rho_{\B}$ each with prior probability $\frac{1}{M}$.
Equivalently, this amounts to a binary hypothesis testing between $\rho_{\B}^{x(m)}$ with prior probability $\frac{1}{M}$ against the marginal states $\rho_{\B}$ with prior probability $\frac{M-1}{M}$.
Lastly, after taking the summation over $m\in\mathsf{M}$, 
the above is equal to the discrimination of the joint state $\rho_{\X\B}$ against the scaled decoupled product state $(M-1)\rho_{\X}\otimes \rho_{\B}$.
(See Figure~\ref{fig:simple} for the illustration.)
We hope that this simple proof provides a conceptually clear elucidation on the intimate relation between classical-quantum channel coding and quantum hypothesis testing in a pedagogical way.

\begin{remark}
	In the classical case where $\{\rho_{\B}^x\}_{x\in\X}$ mutually commute, Theorem~\ref{theo:main} reduces to a result by Polyanskiy \cite[Eq.~(2.121)]{Pol10}, which is only $1$-bit weaker than the \emph{dependence testing bound} by Polyanskiy, Poor, and Verd{\'u} \cite[Theorem 17]{PPV10}.
\end{remark}

\begin{remark} \label{remark:strengthened}
	As we show shortly in Sections~\ref{sec:comparison} and \ref{sec:application}, the one-shot bound established in Theorem~\ref{theo:main} already implies (and sharpens) various previously known achievability results in the so-called \emph{achievable rate region}, i.e.~rates below the quantum mutual information.
	Is the bound in Theorem~\ref{theo:main} tight outside the achievable rate region?
	Taking c-q channel coding as an example, when the message size is too large or the coding rate (i.e.~$R:=\frac1n \log M$) is way above the mutual information with respect to $\rho_{\X\B}$, the one-shot bound in Theorem~\ref{theo:main} might not be very tight.
	If $\log (M-1) \geq D_\infty^*(\rho_{\X\B} \,\Vert\, \rho_{\X}\otimes \rho_{\B})$, where $D_\infty^*(A\,\|\,B ):= \inf \left\{\gamma\in \mathds{R} : A\leq \mathrm{e}^\gamma B \right\}$ is the \emph{max-relative entropy} \cite{Dat09, Ren05, KRS09}, then Eq.~\eqref{eq:main} yields a trivial bound: $\eps \leq 1$.
	
	In view of this, Theorem~\ref{theo:main} can be strengthened to the following more involved form:
	\begin{align} \label{eq:tighter}
		\varepsilon
		\leq \left(1 - \frac{1}{M} \Tr\left[ \rho_{\X\B} \wedge (M-1) \rho_{\X}\otimes \rho_{\B} \right] \right) \Tr\left[ \rho_{\X\B} \wedge (M-1) \rho_{\X}\otimes \rho_{\B} \right].
	\end{align}
	Bound \eqref{eq:tighter} follows from the tighter inequality \eqref{eq:tighter_PGM} given in Lemma~\ref{lemm:trace} of Appendix~\ref{sec:proof_trace}, instead of Fact~\ref{fact:lower}.
	Now if $\log (M-1) \geq D_\infty^*(\rho_{\X\B} \,\Vert\, \rho_{\X}\otimes \rho_{\B})$, then the random coding error amounts to randomly guessing equiprobable messages, i.e.~$\eps\leq 1- 1/M$.
	Regardless of the message size $M$, Eq.~\eqref{eq:tighter} is technically a tighter one-shot bound compared to Eq.~\eqref{eq:main}.
	This naturally raises the question of whether Eq.~\eqref{eq:tighter} can lead to a simple proof of the upper bound on the strong converse exponent of c-q channel coding; see \cite[Section 5.4]{MO17}, \cite[Proposition IV.5]{MO18}, and \cite[Proposition VI.2.]{CHDH-2018}. We leave this for future work.
\end{remark}

The established one-shot achievability in Theorem~\ref{theo:main} immediately covers (and sharpens) various known results of deriving the minimal error given a fixed message or coding size $M$ or deriving the maximal message size given a fixed error $\eps$.
Let us define the following two important operational quantities for c-q channel coding:
\begin{align*}
	\eps^\star(\mathscr{N}_{X\to B},M) &:= \inf \left\{ \eps\in \mathds{R} : \exists \text{ an $(M,\varepsilon)$-code for $\mathscr{N}_{X\to B}$}  \right\};
	\\
	M^\star(\mathscr{N}_{X\to B}, \eps) &:= \sup \left\{ M \in \mathds{N} : \exists \text{ an $(M,\varepsilon)$-code for $\mathscr{N}_{X\to B}$} \right\}.
\end{align*}
We note that although $\eps^\star(\mathscr{N},M)$ and $M^\star(\mathscr{N},\eps)$ are inverse functions to each other in the one-shot setting, they lead to different asymptotic expansions in the large deviation and small deviation regimes, respectively.

\begin{prop}[Bounding coding error given fixed coding rate] \label{prop:error}
	Consider an arbitrary classical-quantum channel $\mathscr{N}_{\X \to \B}: x\mapsto \rho_{\B}^x$. 
	Then, for any $n\in\mathbb{N}$ and $R>0$, there exists an $(\mathrm{e}^{nR},\eps)$-code for $\mathscr{N}_{\X \to \B}^{\otimes n}$ such that for any probability distribution $p_{\X}$,
	\begin{align}
		\notag
		\eps \leq \e^{- n \frac{1-\alpha}{\alpha} \left( I_{2-\sfrac{1}{\alpha}}^\downarrow (\X{\,:\,}\B)_{\rho} - R \right) }, \quad \forall \alpha \in (\sfrac12,1).
	\end{align}
	Here, 
	$I_\alpha^\downarrow (\X{\,:\,}\B)_{\rho}  := D_\alpha (\rho_{\X\B}\,\|\,\rho_{\X}\otimes \rho_{\B})$, the state is evaluated on $\rho_{\X\B} := \sum_{x\in\X} p_{\X}(x) |x\rangle\langle x|\otimes \rho_{\B}^x$, and the quantum Petz--R\'enyi divergence \cite{Pet86} is $ D_\alpha (\rho\,\Vert\,\sigma):= \frac{1}{\alpha-1}\log \Tr\left[ \rho^\alpha \sigma^{1-\alpha} \right]$.
	
	The exponent $\sup_{\alpha \in (\sfrac12,1)} \frac{1-\alpha}{\alpha} ( I_{2-\sfrac{1}{\alpha}}^\downarrow (\X{\,:\,}\B)_{\rho} - R )$ is positive if and only if $R > I(\X{\,:\,}\B)_{\rho}:= D (\rho_{\X\B}\,\|\,\rho_{\X}\otimes \rho_{\B})$.
\end{prop}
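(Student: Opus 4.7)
The plan is to apply Theorem~\ref{theo:main} to the $n$-fold product channel $\mathscr{N}_{\X\to\B}^{\otimes n}$ with $M := \e^{nR}$ and the i.i.d.\ input distribution $p_{\X}^{\otimes n}$, whose induced joint state is the tensor power $\rho_{\X\B}^{\otimes n}$. This immediately furnishes an $(M,\eps)$-code with
\begin{align*}
\eps \leq \Tr\left[\rho_{\X\B}^{\otimes n} \wedge (M-1)\,\rho_{\X}^{\otimes n}\otimes\rho_{\B}^{\otimes n}\right].
\end{align*}

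Next I would bound the noncommutative minimal via the Chernoff-type upper bound in Fact~\ref{fact:upper}, applied with parameter $s\in(0,1)$ to $A = \rho_{\X\B}^{\otimes n}$ and $B = (M-1)(\rho_{\X}\otimes\rho_{\B})^{\otimes n}$. Multiplicativity of the trace on tensor products, together with the definition of the Petz--R\'enyi divergence $D_\alpha$ read at $\alpha = 1-s$, then yield
\begin{align*}
\eps \leq (M-1)^{s}\,\Tr\!\left[\rho_{\X\B}^{1-s}\,(\rho_{\X}\otimes\rho_{\B})^{s}\right]^{n} = (M-1)^{s}\,\e^{-n s\, I_{1-s}^{\downarrow}(\X{\,:\,}\B)_{\rho}}.
\end{align*}
The last step is a change of variable: bounding $(M-1)^{s} \leq \e^{n s R}$ and setting $s = \tfrac{1-\alpha}{\alpha}$, which bijects $s\in(0,1)$ with $\alpha\in(\sfrac12,1)$ and satisfies $1 - s = 2 - \sfrac{1}{\alpha}$, converts the estimate into the claimed exponential form.

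For the positivity claim on the exponent, the plan is to invoke the monotonicity of $\alpha \mapsto D_\alpha$ in the order parameter together with the limit $\lim_{\alpha\to 1^{-}} D_\alpha = D$ (the Umegaki relative entropy). These imply that $\alpha \mapsto I_{2-\sfrac{1}{\alpha}}^{\downarrow}(\X{\,:\,}\B)_{\rho}$ is nondecreasing on $(\sfrac12,1)$ with supremum equal to $I(\X{\,:\,}\B)_{\rho}$; combined with the positive prefactor $\tfrac{1-\alpha}{\alpha}$, this shows that the outer supremum can be made strictly positive by some $\alpha\in(\sfrac12,1)$ precisely when $R$ lies strictly below $I(\X{\,:\,}\B)_{\rho}$.

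No substantive obstacle is anticipated: the whole argument reduces to a single-letter random-coding bound from Theorem~\ref{theo:main}, the standard Chernoff-type inequality on the noncommutative minimal (Fact~\ref{fact:upper}), and a reparametrization. The only minor care is matching the ranges and the indices of $s$ and $\alpha$ and confirming that the Petz--R\'enyi divergence tensorizes additively (which follows from multiplicativity of the trace under tensor products).
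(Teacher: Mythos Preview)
Your proposal is correct and follows essentially the same approach as the paper: apply Theorem~\ref{theo:main} (to the product channel with i.i.d.\ input), invoke Fact~\ref{fact:upper} with $s=\tfrac{1-\alpha}{\alpha}$, and use tensor additivity of the Petz--R\'enyi divergence together with its monotonicity in the order parameter for the positivity claim. Note also that you correctly identified the positivity condition as $R < I(\X{\,:\,}\B)_\rho$; the ``$R > I$'' in the displayed statement appears to be a typo.
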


\begin{proof}
	For the one-shot case $n=1$, we apply Audenaert \emph{et al.}'s inequality, i.e.~Fact~\ref{fact:upper}, on the one-shot bound given in Theorem~\ref{theo:main} with $A = \rho_{\X\B}$, $B = (M-1)\rho_{\X}\otimes
	\rho_{\B}$, and $s = \frac{1-\alpha}{\alpha}$ to obtain the large deviation type bound.
	When considering product channels in the $n$-shot scenario, the exponential decays follows from the fact that $\rho\mapsto I_{2-{1}/{\alpha}}^\downarrow (\X{\,:\,}\B)_{\rho}$ is additive for any $n$-fold product state.
	The positivity holds by noting that the map $\alpha \mapsto I_{2-{1}/{\alpha}}^\downarrow (\X{\,:\,}\B)_{\rho}$ is non-decreasing on $[\sfrac12, 1]$ \cite[Lemma 3.12]{MO17}.
\end{proof}


\begin{prop}[Bounding coding rate given fixed coding error] \label{prop:rate}
	Consider an arbitrary classical-quantum channel $\mathscr{N}_{\X \to \B}: x\mapsto \rho_{\B}^x$. 
	Then, for any $\eps\in(0,1)$ there exists an $\left(M,\eps\right)$-code for $\mathscr{N}_{\X \to \B}$ such that for any probability distribution $p_{\X}$ and any $\delta \in (0,\eps)$,
	\begin{align} \label{eq:M0}
		\log M \geq D_\textnormal{h}^{\eps-\delta}(\rho_{\X\B}\,\|\,\rho_{\X}\otimes \rho_{\B})  - \log \tfrac{1}{\delta}.
	\end{align}
	Here, $D_\textnormal{h}^\varepsilon(\rho \,\|\, \sigma) := \sup_{0\leq T\leq \mathds{1} } \left\{ -\log \Tr[\sigma T] : \Tr[\rho T] \geq 1 - \varepsilon \right\}$ is the $\varepsilon$-\emph{hypothesis testing divergence} \cite{TH13, WR13, Li14}.
	
	Moreover, for any $\eps\in(0,1)$ and for sufficiently large $n\in\mathds{N}$, 
	there exists an $\left(M,\eps\right)$-code for $\mathscr{N}_{\X \to \B}^{\otimes n}$ such that for any probability distribution $p_{\X}$:
	\begin{align}
		\notag		
		\log M \geq n I(\X{\,:\,}\B)_\rho + \sqrt{n V(\X{\,:\,}\B)_\rho } \,   {\Phi}^{-1}(\eps) - \tfrac12\log n - O(1).
	\end{align}
	where $V(\X{\,:\,}\B)_\rho := V(\rho_{\X\B}\,\|\,\rho_{\X}\otimes \rho_{\B})$, $V(\rho\,\|\,\sigma):= \Tr\left[ \rho(\log \rho - \log \sigma)^2 \right] - D(\rho\,\|\,\sigma)^2$, and $  {\Phi}^{-1}(\eps):= \sup\{u: \int_{-\infty}^u \frac{1}{\sqrt{2\pi}} \e^{-\frac12 t^2}\mathrm{d}t \leq \eps \}$ is the inverse of the cumulative distribution of the standard normal distribution.
\end{prop}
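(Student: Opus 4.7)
The plan is to deduce both claims directly from the one-shot bound in Theorem~\ref{theo:main} together with the Holevo--Helstrom variational identity in Eq.~\eqref{eq:HH}.

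For the one-shot statement, I would begin with Theorem~\ref{theo:main} applied to a probability distribution $p_{\X}$, giving $\eps \leq \Tr[\rho_{\X\B}\wedge (M-1)\rho_{\X}\otimes\rho_{\B}]$. Next, I would feed the variational formula from Eq.~\eqref{eq:HH} with the operators $A=\rho_{\X\B}$, $B=(M-1)\rho_{\X}\otimes\rho_{\B}$, and plug in as a specific (suboptimal) feasible test $T^*$ the optimizer of the hypothesis-testing divergence $D_\textnormal{h}^{\eps-\delta}(\rho_{\X\B}\,\|\,\rho_{\X}\otimes\rho_{\B})$, which by definition satisfies $\Tr[\rho_{\X\B} T^*]\geq 1-(\eps-\delta)$ and $\Tr[(\rho_{\X}\otimes\rho_{\B})T^*]=\e^{-D_\textnormal{h}^{\eps-\delta}}$. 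This yields
\begin{align*}
	\Tr[\rho_{\X\B}\wedge (M-1)\rho_{\X}\otimes\rho_{\B}]
	\;\leq\; \Tr[\rho_{\X\B}(\mathds{1}-T^*)] + (M-1)\Tr[(\rho_{\X}\otimes\rho_{\B})T^*]
	\;\leq\; (\eps-\delta) + (M-1)\e^{-D_\textnormal{h}^{\eps-\delta}}.
\end{align*}
Requiring the last quantity to be at most $\eps$ reduces to $(M-1)\e^{-D_\textnormal{h}^{\eps-\delta}}\leq \delta$, i.e.\ $\log(M-1)\leq D_\textnormal{h}^{\eps-\delta}-\log\tfrac{1}{\delta}$. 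Choosing $M=\lceil \delta\,\e^{D_\textnormal{h}^{\eps-\delta}}\rceil$ then satisfies the feasibility constraint while guaranteeing $\log M\geq D_\textnormal{h}^{\eps-\delta}-\log\tfrac{1}{\delta}$, which is exactly Eq.~\eqref{eq:M0}.

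For the small-deviation (second-order) expansion, I would apply the one-shot bound to the $n$-fold product channel $\mathscr{N}_{\X\to\B}^{\otimes n}$ with the i.i.d.\ input distribution $p_{\X}^{\otimes n}$ chosen so that $p_{\X}$ maximizes the mutual information $I(\X{\,:\,}\B)_\rho$ (selecting among maximizers, if several exist, the one with minimal $V(\X{\,:\,}\B)_\rho$). Then $\rho_{\X\B}^{\otimes n}$ and $(\rho_{\X}\otimes\rho_{\B})^{\otimes n}$ are i.i.d., so the quantum Stein-type second-order expansion of Tomamichel--Hayashi and Li \cite{TH13, Li14} yields
\begin{align*}
	D_\textnormal{h}^{\eps-\delta}\bigl(\rho_{\X\B}^{\otimes n}\,\|\,(\rho_{\X}\otimes\rho_{\B})^{\otimes n}\bigr)
	\;\geq\; n\,I(\X{\,:\,}\B)_\rho + \sqrt{n\,V(\X{\,:\,}\B)_\rho}\,\Phi^{-1}(\eps-\delta) + O(\log n).
\end{align*}
I would pick $\delta=1/\sqrt{n}$, so $\log\tfrac{1}{\delta}=\tfrac12\log n$, and use the Taylor expansion $\Phi^{-1}(\eps-1/\sqrt{n})=\Phi^{-1}(\eps)+O(1/\sqrt{n})$, which costs only an $O(1)$ term after multiplication by $\sqrt{nV}$. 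Substituting back into the one-shot bound produces
\begin{align*}
	\log M \;\geq\; n\,I(\X{\,:\,}\B)_\rho + \sqrt{n\,V(\X{\,:\,}\B)_\rho}\,\Phi^{-1}(\eps) - \tfrac{1}{2}\log n - O(1),
\end{align*}
as claimed.

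The only genuinely delicate step is invoking the sharp enough expansion of $D_\textnormal{h}^{\eps}$ on i.i.d.\ sequences with an explicit control on the third-order correction; this is imported from \cite{TH13, Li14}. Everything else is just concatenating the Holevo--Helstrom identity with the one-shot trace bound and managing the choice $\delta=1/\sqrt n$, so no nontrivial obstacle remains once Theorem~\ref{theo:main} is in hand.
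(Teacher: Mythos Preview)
Your proposal is correct and matches the paper's proof: both derive Eq.~\eqref{eq:M0} by feeding a feasible test for $D_\textnormal{h}^{\eps-\delta}$ into the variational identity Eq.~\eqref{eq:HH} applied to the bound of Theorem~\ref{theo:main}, and then obtain the asymptotic expansion by setting $\delta=1/\sqrt{n}$ and invoking the i.i.d.\ expansion of $D_\textnormal{h}$ from \cite{TH13,Li14}. Two cosmetic points: your aside about choosing $p_\X$ to maximize $I(\X{\,:\,}\B)_\rho$ is extraneous (the proposition is stated for \emph{any} $p_\X$, and your argument works verbatim for each fixed choice), and to land exactly at $-\tfrac12\log n-O(1)$ you must quote Li's sharp remainder $-O(1)$ \cite{Li14} rather than the looser $O(\log n)$ you wrote, since the latter would only yield $-c\log n$ with some $c>\tfrac12$.
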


\begin{proof}
	By recalling the definition of the noncommutative minimal given in Eq.~\eqref{eq:HH} and by Theorem~\ref{theo:main}, for any test $0\leq T_{\X\B}\leq \mathds{1}_{\X\B}$ satisfying $\Tr[\rho_{\X\B} (\mathds{1}_{\X\B}-T_{\X\B}) ] \leq \eps-\delta$, one has,
	\begin{align}
		\notag
		\eps &\leq \Tr[\rho_{\X\B} (\mathds{1}_{\X\B}-T_{\X\B}) ] + (M-1) \Tr\left[\rho_{\X}\otimes \rho_{\B} T_{\X\B}\right] \\
		\notag
		&\leq \eps - \delta + (M-1) \e^{- D_\textnormal{h}^{\eps - \delta}(\rho_{\X\B}\,\|\,\rho_{\X}\otimes \rho_{\B}) },
	\end{align}
	completing the proof.
	
	The second-order achievability then follows from the expansion of the quantum hypothesis-testing divergence \cite{TH13, Li14, DPR16, OMW19, PW20} by choosing $\delta = \sfrac{1}{\sqrt{n}}$:
	$
	D_\textnormal{h}^{\varepsilon \pm \delta} \left(\rho^{\otimes n} \,\|\, \sigma^{\otimes n} \right) \geq n D\left(\rho \,\|\, \sigma\right) + \sqrt{n V\left(\rho \,\|\, \sigma\right)}   {\Phi}^{-1} (\varepsilon) - O(1).
	$
\end{proof}

We remark that both Propositions~\ref{prop:error} and \ref{prop:rate} extends to the moderate deviation regime by directly following the approaches from \cite{CH17, CTT2017}.
The reader may refer to Figure~\ref{fig:flow} for the corresponding expressions.

\begin{remark}
	Given that Theorem~\ref{theo:main} already provides a one-shot bound on the average error probability, one may wonder why to weaken Eq.~\eqref{eq:main} in Theorem~\ref{theo:main} to obtain another one-shot bound in Proposition~\ref{prop:error} (note that they both have closed-form expressions).
	The reason is that the minimum error in terms of the noncommutative minimal on the right-hand side of Eq.~\eqref{eq:main} is not multiplicative under product states.
	Nevertheless, it can be further upper bounded by certain multiplicative R\'enyi-type quantities. That is exactly the spirit of the quantum Chernoff bound \cite{ACM+07, ANS+08, Aud14, JOP+12}, and hence, we term the result of Proposition~\ref{prop:error} as a kind of \emph{large deviation type bound} \footnote{By large deviations, we meant the scenario where the coding rate $R=\frac1n \log M$ deviates from the fundamental threshold (i.e.~the quantum mutual information or the Holevo quantity when optimizing the input distributions) by a \emph{large amount}; i.e.~namely, it is constant $O(1)$ away from the fundamental limit. 
		In the small deviation regime, the optimal rate $R$ for some fixed error $\varepsilon \in (0,1)$ converges to the fundamental limit at a speed of $O(\sfrac{1}{\sqrt{n}})$, meaning that $R$	deviates by a \emph{small amount}.
		In between, we refer to the \emph{moderate deviation regime}, where $R$ deviates by the order $\omega(\sfrac{1}{\sqrt{n}}) \cap o(1)$ \cite{CH17,CTT2017}. In this case, the optimal error vanishes at a sub-exponential speed.} accordingly.
	
	On the other hand, Theorem~\ref{theo:main} also gives a one-shot and asymptotic expansions in the small deviation regime (Proposition~\ref{prop:rate}).
	Hence, to some extent, Theorem~\ref{theo:main} may be viewed as a ``meta'' achievability for classical communication over quantum channels (see also Theorem~\ref{lemm:packing} later in Section~\ref{sec:EA}).
\end{remark}

\begin{remark} \label{remark:time}
	Most existing one-shot achievability bounds to date (e.g.~\cite{HN03, WR13, Wil17b, AJW19a}) are expressed in terms of the quantum hypothesis testing divergence $D_\text{h}^\eps$ as in Eq.~\eqref{eq:M0} of Proposition~\ref{prop:rate}, for the reason that they directly provide a one-shot characterization (lower bound) on the maximal message or coding size $M$ given a fixed coding error $\varepsilon$, which is also called the \emph{$\varepsilon$-one-shot channel capacity}.
	To numerically compute $D_\text{h}^\eps$, one can formulate the quantity into a standard form of a semi-definite program (SDP); namely, it is an optimization over a $d_{\B}\times d_{\B}$ matrix-valued variable with $m:=d_{\B}^2+1$ linear (scalar) constraints, where we use $d_{\B}$ to denote the dimension of the underlying Hilbert space representing the quantum register $\B$.
	(Here, we only consider the computation on the quantum part  of register $\B$ for simplicity without involving computation on the classical register $\X$.)
	Using the state-of-the-art (classical) SDP solver \cite{HJS+21}, the running time \footnote{
		We use notation $O^*$ to hide $m^{o(1)}$ and $\log\frac{1}{\epsilon}$ factors, where $\epsilon$ is accuracy parameter. 
	} is $O^*(m^{\omega}) = O^*(d_{\B}^{2\omega}) = O^*(d_{\B}^{4.746})$, where $\omega\leq 2.373$ is the exponent of matrix multiplication \cite{DDH+07}.
	
	On the other hand, the one-shot bound provided in Theorem~\ref{theo:main} admits a closed-form expression in terms of the trace norm.
	Using the state-of-the-art algorithm for approximating singular values \cite{BGK19}, it requires running time $O^*(d_{\B}^{\omega} \log^2 d_{\B}) = O^*(d_{\B}^{2.373}\log^2 d_{\B})$.
	This then shows that the computation of the proposed one-shot achievability bound in terms of the noncommutative minimal in Theorem~\ref{theo:main} is nearly \emph{quadratically efficient} compared to the computation of the one-shot bounds in terms of the quantum hypothesis-testing divergence.
\end{remark}

\subsection{Comparison to Existing Results} \label{sec:comparison}

In the following, we compare the implications of the established one-shot achievability bounds, i.e.~Propositions~\ref{prop:error} and \ref{prop:rate} with existing results. 
We refer the reader can refer to Table~\ref{table:comparison} below for a summary.

The exponential decaying rate of error probability given in Proposition~\ref{prop:error} matches the one proved by Hayashi \cite[Eq.~(9)]{Hay07}.
However, in the one-shot setting, the large deviation type bound in Proposition~\ref{prop:error} is tighter than \cite[Eq.~(9)]{Hay07} (without the factor $4$).
If further the $\mathscr{N}_{\X\to \B}$ is a pure-state channel, one has $I_{2-\sfrac{1}{\alpha}}^\downarrow (\X{\,:\,}\B)_{\rho} = I_{{\alpha}} (\X{\,:\,}\B)_{\rho} := \inf_{\sigma_B\in\mathcal{S}(\mathcal{H}_{\B})} D_\alpha(\rho_{\X\B}\,\|\,\rho_{\X}\otimes \rho_{\B})
= \frac{\alpha}{\alpha-1} \log \Tr[ (\sum_{x\in\X} p_{\X}(x) (\rho_{\B}^x)^{\alpha})^{1/\alpha} ]$ (where the minimization is over all density operators on Hilbert space $\mathcal{H}_{\B}$, i.e.$~\mathcal{S}(\mathcal{H}_{\B})$).
Hence, the bound in Proposition~\ref{prop:error} is tighter than the bound proved by Burnashev and Holevo \cite[Proposition 1]{BH98} (without the factor $2$).

Hayashi--Nagaoka \cite[Lemma 3]{HN03}, and Wang--Renner \cite[Theorem 1]{WR13} employed the Hayashi--Nagaoka inequality Eq.~\eqref{eq:HN03} to obtain a one-shot achievability bound \footnote{More precisely, Hayashi and Nagaoka obtained the first one-shot achievability bound (for general c-q channels) as in Eq.~\eqref{eq:M_WR} but in terms of the \emph{information-spectrum divergence} $D_\textnormal{s}^\eps$ \cite{VH94, Han03, HN03, NH07, TH13} instead of the hypothesis-testing divergence $D_\textnormal{h}^\eps$ defined in Proposition~\ref{prop:rate}.
	On the other hand, it is known that $D_\textnormal{h}^\eps(\,\cdot \| \cdot \,) \geq D_\textnormal{s}^\eps(\,\cdot \| \cdot \,)$ \cite[Lemma 12]{TH13}, and hence, the one-shot achievability bound in terms of $D_\textnormal{h}^\eps$ is tighter than that in terms of $D_\textnormal{s}^\eps$.
	Here, we remark that the approach proposed by Hayashi and Nagaoka \cite{HN03} allows for choosing any measurement along with applying the Hayashi--Nagaoka inequality Eq.~\eqref{eq:HN03} in establishing the achievability. Namely, the analysis in \cite{HN03} with Eq.~\eqref{eq:HN03} can already lead to Eq.~\eqref{eq:M_WR}.
	We remark that the terminology and concept of the hypothesis-testing divergence $D_\textnormal{h}^\eps$ might already appear in the contexts of statistical hypothesis testing by Stein--Chernoff \cite{Che56}, Strassen \cite{Str62}, Csisz{\'a}r--Longo \cite{CL71}, 
	Polyanskiy--Poor--Verd{\'u} \cite{PPV10}, and by Wang--Renner \cite{WR13} in the quantum setting.
} on the message or coding size $M$: for any $0<\delta<\eps<1$ with choosing $c = \frac{\delta}{2\eps - \delta}$ in Eq.~\eqref{eq:HN03},
\begin{align} \label{eq:M_WR}
	\log M \geq D_\textnormal{h}^{\eps-\delta}(\rho_{\X\B}\,\|\,\rho_{\X}\otimes \rho_{\B})  - \log \frac{4}{\delta^2}
\end{align}
in terms of the hypothesis-testing divergence $D_\textnormal{h}^{\eps}$ introduced in Proposition~\ref{prop:rate}.
The term $- \log \frac{4}{\delta^2}$ results from optimizing  coefficient $c$ when applying the Hayashi--Nagaoka inequality.
Compared to Eq.~\eqref{eq:M_WR}, the established Proposition~\ref{prop:rate} in Section~\ref{sec:Main} does not need to choose the appropriate coefficient $c$, and hence, it gives a tighter one-shot achievability bound on $M$ (especially when $\delta$ is small):
\begin{align} \label{eq:M}
	\log M \geq D_\textnormal{h}^{\eps-\delta}(\rho_{\X\B}\,\|\,\rho_{\X}\otimes \rho_{\B})  - \log \frac{1}{\delta}.
\end{align}
Specialized to the i.i.d.~asymptotic scenario of $n$-fold product channels with $\delta = \sfrac{1}{\sqrt{n}}$, Eq.~\eqref{eq:M} yields an improved third-order coding rate by a factor $\frac12 \log n$ compared to the asymptotics based on Eq.~\eqref{eq:M_WR}.

Beigi and Gohari \cite{BG14} generalized a superb classical achievability approach by Yassaee \textit{et al.} \cite{YRG13} to establish a one-shot achievability bound on $M$
\cite[Corollary 1]{BG14} as well:
\begin{align} \label{eq:BG}
	\log M \geq D_\textnormal{s}^{\eps-\delta}(\rho_{\X\B}\,\|\,\rho_{\X}\otimes \rho_{\B})  - \log \frac{1-\eps}{\delta},
\end{align}
with $D_\textnormal{s}^\eps(\rho \,\|\, \sigma ):= \sup\left\{ \gamma \in \mathds{R} : \Tr\left[ \rho \left\{ \rho\leq \mathrm{e}^\gamma \sigma \right\} \leq \eps \right]  \right\} $ the information-spectrum divergence \cite{VH94, Han03, HN03, NH07, TH13}.
Comparing Eq.~\eqref{eq:M} to Eq.~\eqref{eq:BG}, we recall the relation between quantum hypothesis-testing divergence $D_\textnormal{h}^\eps$ and quantum information-spectrum divergence $D_\textnormal{s}^\eps$ \cite[Lemma 12]{TH13}:
\begin{align} \label{eq:TH13}
	D_\textnormal{h}^\eps(\rho \,\|\, \sigma )
	\geq 	D_\textnormal{s}^\eps(\rho \,\|\, \sigma )
	\geq D_\textnormal{h}^{\eps-\delta'}(\rho \,\|\, \sigma ) - \log\frac{1}{\delta'},
	\quad \forall \, 0<\delta'< \eps.
\end{align}
This indicates that the proposed one-shot bound Eq.~\eqref{eq:M} has a stronger leading term  $D_\textnormal{h}^{\eps-\delta}$ instead of $D_\textnormal{s}^{\eps-\delta}$ \footnote{
	In the i.i.d.~asymptotic setting, the leading term (i.e.~either $D_\textnormal{h}^{\eps-\delta}$ or $D_\textnormal{s}^{\eps-\delta}$) dominates both the first-order and the second-order coding rates.
	On the other hand, Eq.~\eqref{eq:BG} does have a better constant $-\log (1-\eps)$, which corresponds to the \emph{fourth-order term} in the small deviation regime.
	Such a term is negligible for small errors (say e.g.~$\eps\leq 10^{-3}$). For large errors, one can invoke the strengthened one-shot bound in Eq.~\eqref{eq:tighter} (though the formula is more involved).
}.

When considering the asymptotic expansion of the coding rate in the i.i.d.~setting, one has to translate $	D_\textnormal{s}^\eps$ in Eq.~\eqref{eq:BG} back to 	$D_\textnormal{h}^\eps$ using Eq.~\eqref{eq:TH13} and to employ the second-order achievability \footnote{
	The second-order expansion of the quantum hypothesis-testing divergence was concurrently proposed by Tomamichel--Hayashi \cite{TH13} and Ke Li \cite{Li14}.
	Here in Eq.~\eqref{eq:Li}, we cite Li's result \cite[Theorem 5]{Li14} since it has a better third-order term $- O(1)$ than that of Tomamichel--Hayashi \cite[Eqs.~(28) and (33)]{TH13}, in which the third-order term is $-(\frac12 + \min(\lambda(\sigma), \nu(\sigma )) \log n - O(1)$ for $\lambda(\sigma )$ being the \emph{logarithmic condition number} of $\sigma$ and $\nu(\sigma )$ being the number of distinct eigenvalues of $\sigma$.
	
	We also remark that Li's result Eq.~\eqref{eq:Li} was later generalized to infinite-dimensional separable Hilbert space by Datta--Pautrat--Rouz{\'{e}} \cite[Proposition 1]{DPR16} and Oskouei--Mancini--Wilde \cite[Lemma A.1]{OMW19}.
	The same result for general von Neumann algebras was proved by Pautrat and Wang \cite[Theorem 1]{PW20}.
} 
of the quantum hypothesis-testing divergence $D_\textnormal{h}^\eps$ \cite{TH13, Li14, DPR16, PW20}:
\begin{align} \label{eq:Li}
	D_\textnormal{h}^{\varepsilon \pm \delta} \left(\rho^{\otimes n} \,\|\, \sigma^{\otimes n} \right) \geq n D\left(\rho \,\|\, \sigma\right) + \sqrt{n V\left(\rho \,\|\, \sigma\right)}   {\Phi}^{-1} (\varepsilon) - O(1).
\end{align}
Then, Beigi and Gohari's result, Eq.~\eqref{eq:BG}, leads to
\begin{align} \notag 
	\log M \geq n I(\X{\,:\,}\B)_\rho + \sqrt{n V(\X{\,:\,}\B)_\rho } \,   {\Phi}^{-1}(\eps) -\log n - O(1).
\end{align}
This achieves the same third-order term 
as the asymptotic expansion using Eq.~\eqref{eq:M_WR} and Eq.~\eqref{eq:Li}.
On the other hand, the established Eq.~\eqref{eq:M} with Eq.~\eqref{eq:Li} gives a tighter third-order term of coding rate:
\begin{align*} 
	\log M \geq n I(\X{\,:\,}\B)_\rho + \sqrt{n V(\X{\,:\,}\B)_\rho } \,   {\Phi}^{-1}(\eps) -\frac12\log n - O(1).
\end{align*}

Inspired from the third-order asymptotics of the classical hypothesis-testing divergence proved by Strassen \cite[Theorem 3.1]{Str62} (see also \cite[Lemma 46]{PPV10} and \cite[Proposition 2.3]{Tan14}), we conjecture the following third-order achievability of the quantum hypothesis-testing divergence:
\begin{align} \label{eq:third_HT}
	D_\textnormal{h}^{\varepsilon \pm \delta} \left(\rho^{\otimes n} \,\|\, \sigma^{\otimes n} \right) \geq n D\left(\rho \,\|\, \sigma\right) + \sqrt{n V\left(\rho \,\|\, \sigma\right)}   {\Phi}^{-1} (\varepsilon) + \frac12 \log n - O(1).
\end{align}
If Eq.~\eqref{eq:third_HT} was true, then the established Eq.~\eqref{eq:M} will imply
\begin{align} \label{eq:third_cq}
	\log M \geq n I(\X{\,:\,}\B)_\rho + \sqrt{n V(\X{\,:\,}\B)_\rho } \,   {\Phi}^{-1}(\eps) - O(1).
\end{align}
We remark that Eq.~\eqref{eq:third_cq} will give the best possible achievable third-order coding rate for c-q channel coding without further assumptions on the channel \footnote{
	We note that Altu{\u{g}} and Wagner proved that the third-order coding rate $O(1)$ is optimal for classical symmetric and singular channels \cite[Proposition 1]{AW14_3} (see also \cite[Theorem 4.3]{Tan14}).
	In other words, if the non-singularity condition is not imposed, larger third-order coding rate than $O(1)$ is not possible.
}.

\begin{remark}
	At the writing of this paper, a very recent work by Renes establishes the optimal error exponent for \emph{symmetric} classical-quantum channels \cite{Ren22}.
	The result is asymptotic, but it matches the quantum sphere-packing bound \cite{Dal13, DW14, CHT17} for the high achievable rate region, and hence, it is asymptotically optimal and tighter than the error exponent obtained in Proposition~\ref{prop:error} in the i.i.d.~asymptotic setting for symmetric classical-quantum channels.
	A one-shot bound for that is still missing.
	We leave this for future work.
\end{remark}

\begin{table}[t!]
	\centering
	\resizebox{1\columnwidth}{!}{
		\begin{tabular}{lcc}
			\toprule
			\addlinespace
			& \multicolumn{2}{c}{\textbf{One-shot exponential bounds on coding error}} \\ 
			
			\midrule
			
			\cellcolor{gray!20} Proposition~\ref{prop:error} & 
			\multicolumn{2}{c}{\cellcolor{gray!20}$\displaystyle\eps \leq \e^{- \sup\limits_{\alpha \in (\sfrac12,1)} \frac{1-\alpha}{\alpha} \left( I_{2-\sfrac{1}{\alpha}}^\downarrow (\X{\,:\,}\B)_{\rho} - R \right) }$} \\
			
			\cmidrule{1-1}\cmidrule(lr){2-3}
			Hayashi \cite{Hay07} & \multicolumn{2}{c}{$\displaystyle \eps \leq 4\e^{- \sup\limits_{\alpha \in (\sfrac12,1)} \frac{1-\alpha}{\alpha} \left( I_{2-\sfrac{1}{\alpha}}^\downarrow (\X{\,:\,}\B)_{\rho} - R \right) }$} \\
			
			\cmidrule{1-1}\cmidrule(lr){2-3}
			
			Burnashev--Holevo \cite{BH98} & \multicolumn{2}{c}{\multirow{2}{*}{$\displaystyle \eps \leq 2\e^{- \sup\limits_{\alpha \in (\sfrac12,1)} \frac{1-\alpha}{\alpha} \left( I_{2-\sfrac{1}{\alpha}}^\downarrow (\X{\,:\,}\B)_{\rho} - R \right) }$}} \\
			(pure-state channels) & & \\
			
			\midrule
			\midrule
			
			& \multicolumn{2}{c}{\bf Achievability bounds on coding size} \\ 
			\cmidrule(lr){2-2}\cmidrule(lr){3-3}
			& one-shot bounds & i.i.d.~asymptotic expansion \\
			
			\midrule
			
			\cellcolor{gray!20} Proposition~\ref{prop:rate} & \cellcolor{gray!20} $\displaystyle\log M \geq D_\textnormal{h}^{\eps-\delta}(\rho_{\X\B}\,\|\,\rho_{\X}\otimes \rho_{\B})  - \log \frac{1}{\delta}$ & \cellcolor{gray!20} $\displaystyle\log M \geq n I + \sqrt{n V } \,   {\Phi}^{-1}(\eps) - \tfrac12\log n - O(1)$ \\
			
			\cmidrule{1-1}\cmidrule(lr){2-2}\cmidrule(lr){3-3}
			
			Hayashi--Nagaoka \cite{HN03} & \multirow{2}{*}{$\displaystyle\log M \geq D_\textnormal{h}^{\eps-\delta}(\rho_{\X\B}\,\|\,\rho_{\X}\otimes \rho_{\B})  - \log \frac{4}{\delta^2}$} &  \multirow{2}{*}{$\displaystyle\log M \geq n I + \sqrt{n V } \,   {\Phi}^{-1}(\eps) - \log n - O(1)$} \\
			
			Wang--Renner \cite{WR13} & & \\
			
			\cmidrule{1-1}\cmidrule(lr){2-2}\cmidrule(lr){3-3}
			
			\multirow{2}{*}{Beigi--Gohari \cite{BG14}} & $\displaystyle\log M \geq D_\textnormal{s}^{\eps-\delta}(\rho_{\X\B}\,\|\,\rho_{\X}\otimes \rho_{\B})  - \log \frac{1-\eps}{\delta}$
			&
			\multirow{2}{*}{ $\displaystyle\log M \geq n I+ \sqrt{n V } \,   {\Phi}^{-1}(\eps) - \log n - O(1)$} \\
			
			& $\displaystyle{\color{White}{\log M }} \,\,\geq D_\textnormal{h}^{\eps-2\delta}(\rho_{\X\B}\,\|\,\rho_{\X}\otimes \rho_{\B})  - \log \frac{1-\eps}{\delta^2}$ & \\
			
			\cmidrule{1-1}\cmidrule(lr){2-2}\cmidrule(lr){3-3}
			
			\multirow{2}{*}{Ogawa \cite{Oga15}} & $\displaystyle\log M \geq D_\textnormal{s}^{\eps-\delta}(\rho_{\X\B}\,\|\,\rho_{\X}\otimes \rho_{\B})  - \log \frac{1}{\delta}$
			&
			\multirow{2}{*}{ $\displaystyle\log M \geq n I+ \sqrt{n V } \,   {\Phi}^{-1}(\eps) - \log n - O(1)$} \\
			
			& $\displaystyle{\color{White}{\log M }} \,\,\,\geq D_\textnormal{h}^{\eps-2\delta}(\rho_{\X\B}\,\|\,\rho_{\X}\otimes \rho_{\B})  - \log \frac{1}{\delta^2}$ & \\
			
			\bottomrule
		\end{tabular}
	}
	\caption{Comparisons of the  one-shot achievability bounds on the error probability and the coding size and rate established in Propositions~\ref{prop:error} and \ref{prop:rate} of Section~\ref{sec:Main} with existing results.
		We also present the i.i.d.~asymptotic expansion of the coding rate to highlight the resulting third-order terms, where we shorthand $I\equiv I(\X{\,:\,}\B)_\rho$ and $V\equiv V(\X{\,:\,}\B)_\rho$ for brevity.
	} \label{table:comparison}
\end{table}

\section{Applications in Quantum Information Theory} \label{sec:application}
The analysis proposed in Section~\ref{sec:Main} naturally extends to classical communication over quantum channels, network information theory \cite{GK11}, and beyond; see Table~\ref{table:Table} in Section~\ref{sec:introduction}.
We apply our analysis using the pretty-good measurement to binary quantum hypothesis testing in Section~\ref{sec:HT}.
We present entanglement-assisted classical communication over quantum channels in
Section~\ref{sec:EA}.
Section~\ref{sec:CQSW} is for classical data compression with quantum side information.
Section~\ref{sec:MAC} studies entanglement-assisted and unassisted classical communication over quantum multiple-access channels.
Section~\ref{sec:broadcast} considers entanglement-assisted and unassisted classical communication over quantum broadcast channels.
Section~\ref{sec:SI} is devoted to entanglement-assisted classical communication over quantum channels with casual state information available at the encoder.

\subsection{Binary Quantum Hypothesis Testing} \label{sec:HT}
Binary quantum hypothesis testing and the optimal quantum measurement is a relatively well-studied topic in quantum information theory due to its simpler mathematical structure and operational significance \cite{Hel67, Hol72, Hol76, YKL75, HP91, ON00, OH04, Hay02, ACM+07, ANS+08, Hay07, NS09, JOP+12, TH13, Li14, MO14, HT14, CTT2017, CH17, CHT19, CGH19, Hao-Chung, CWY20}. 
The goal of this section is not to re-do the analysis via optimal measurements, but to show how the sub-optimal pretty-good measurement along with the properties of the noncommutative minimal given in Section~\ref{sec:minimal} can recover the existing results with only a slightly sub-optimal coefficient. Specifically, we will show that pretty-good measurement can also achieve the \emph{quantum Hoeffding bound} \cite[\S 5.5]{ANS+08}. This indicates that the proposed analysis should not be too loose in terms of the one-shot exponential bounds (at least for binary quantum hypothesis testing).

\textit{Symmetric scenario}. We first consider the symmetric scenario, where the two quantum hypotheses are described by density operators $\rho$ and $\sigma$ with prior probability $p\in(0,1)$ and $1-p$, respectively.
Note that the one-shot quantum hypothesis testing is also known as the \emph{quantum state discrimination}; the relation between the optimal measurement (i.e.~the quantum Neyman--Pearson test) and the pretty-good measurement was proved by Barnum and Knill \cite{BK02}, \cite[Theorem 3.10]{Wat18}. 

Subsequently, we show that the lower bound on the noncommutative minimal (Fact~\ref{fact:lower}) can be interpreted as an adaptation of the Barnum--Knill Theorem.
On one hand, the Holevo--Helstrom theorem \cite{Hel67, Hol72, Hol76} shows that the minimal error for distinguishing $\rho$ and $\sigma$ in the symmetric scenario is given by $\Tr\left[ p \rho \wedge (1-p) \sigma \right]$.
On the other hand, by using pretty-good measurement with respect to the weighted states $(p\rho,(1-p)\sigma)$ and applying the lower bound of the noncommutative minimal, Fact~\ref{fact:lower}, the corresponding error probability is given by
\begin{align} \label{eq:HT0}
	p\Tr\left[ \rho \frac{(1-p)\sigma}{p\rho + (1-p)\sigma } \right]
	+ (1-p)\Tr\left[ \sigma \frac{p\rho}{p\rho + (1-p)\sigma} \right]
	\leq 2\Tr\left[ p\rho \wedge (1-p) \sigma \right],
\end{align}
which is twice the error probability compared to the minimal error via the optimal measurement. This coincides with the claim made by Barnum and Knill on the relation between the error probability using the optimal measurement and that of using the pretty-good measurement.

\begin{remark}
	As mentioned in Remark~\ref{remark:strengthened} of Section~\ref{sec:Main}, the upper bound in Eq.~\eqref{eq:HT0} can be strengthened to:
	\begin{align} \notag
		2\left(1 - \Tr\left[ p\rho \wedge (1-q) \sigma \right] \right) \Tr\left[ p\rho \wedge (1-q) \sigma \right],
	\end{align}
	by using Eq.~\eqref{eq:tighter_PGM} of Lemma~\ref{lemm:trace} in Appendix~\ref{sec:proof_trace} instead of Fact~\ref{fact:lower}.
	
	On the other hand, one can also use the pretty-good measurement of the form $\left\{ \frac{\rho}{\rho+\sigma}, \frac{\sigma}{\rho+\sigma}  \right\}$ to obtain an achievable error probability $\left(1 - \Tr\left[\rho \wedge \sigma \right] \right)\Tr\left[\rho \wedge \sigma \right]$ or simply $\Tr\left[\rho \wedge \sigma \right]$.
\end{remark}

\textit{Asymmetric scenario}.
We move on to consider the asymmetric scenario, namely, the tradeoff between the type-I error and the type-II error without knowing the prior distribution.
We use pretty-good measurement $\left\{ \frac{\rho}{\rho+\mu\sigma}, \frac{\mu\sigma}{\rho+\mu\sigma}  \right\}$ with a parameter $\mu$ that will be specified later and apply Fact~\ref{fact:lower} to bound the type-I error $\alpha$ and the type-II error $\beta$:
\begin{align} \label{eq:HT}
	\begin{dcases}
		\alpha = \Tr\left[ \rho \frac{\mu\sigma}{\rho+\mu\sigma} \right]
		\leq \Tr\left[ \rho \wedge \mu \sigma \right], & \\
		\beta = \Tr\left[ \sigma \frac{\rho}{\rho+\mu\sigma} \right]
		\leq \mu^{-1} \Tr\left[ \rho \wedge \mu \sigma \right]. &\\
	\end{dcases}
\end{align}
Next, we will show how Eq.~\eqref{eq:HT} implies both the small deviation type bound and the large deviation type bound.
As in the proof of Proposition~\ref{prop:rate}, we invoke the definition of ``$\wedge$" in Eq.~\eqref{eq:HH} with any test $T$ satisfying $\Tr\left[\rho(\mathds{1}-T)\right]\leq\eps-\delta$, i.e.~
\begin{align}
	\notag
	\Tr\left[ \rho \wedge \mu \sigma \right] &= \inf_{0\leq T\leq \mathds{1}} \Tr\left[\rho(\mathds{1}-T)\right] + \Tr\left[ \mu \sigma T \right]\\
	&\leq \eps-\delta +\mu \mathrm{e}^{- D_\text{h}^{\eps-\delta}(\rho\Vert\sigma)}. \label{eq:HT1}
\end{align}
Choosing $\mu$ such that the right-hand side of Eq.~\eqref{eq:HT1} equals $\eps$ and recalling the upper bound on ``$\wedge$" (Fact~\ref{fact:upper} with taking $s\to0$) to obtain the following bound on the type-II error:
\begin{align} \label{eq:HT2}
	\beta \leq \mu^{-1} \Tr\left[ \rho \wedge \mu \sigma \right] 
	\leq \mu^{-1} = \mathrm{e}^{- D_\text{h}^{\eps-\delta}(\rho\,\Vert\,\sigma) - \log \delta} , \quad \forall\, 0<\delta < \eps< 1.
\end{align}
We remark that Eq.~\eqref{eq:HT2} is stronger than the analysis provided Beigi and Gohari \cite[Theorem 6]{BG14} in view of the relation between the quantum hypothesis-testing divergence and the quantum information-spectrum divergence, Eq.~\eqref{eq:TH13}.
This again magnifies the fact that the pretty-good measurement yields a one-shot achievability bound on the \emph{Stein exponent} (i.e.~the maximal exponent of the type-II error provided that the type-I error is at most $\eps\in(0,1)$), and it can achieve the second-order asymptotics in the i.i.d~setting as well.
Note here that since the pretty-good measurement is sub-optimal, it incurs a cost $-\log \frac{1}{\delta}$  on the Stein exponent in the one-shot setting and a third-order term $-\frac12 \log n$ in the $n$-fold i.i.d.~scenario. Yet, it is still sufficient to achieve the moderate deviation asymptotics \cite{CTT2017} (i.e.~the inferior third-order term $-\frac12 \log n$ does not affect the moderate deviation expansion).

Next, we show that the pretty-good measurement can recover the \emph{quantum Hoeffding bound} \cite[\S 5.5]{ANS+08}.
Applying the upper bound on ``$\wedge$" (Fact~\ref{fact:upper} with $\alpha = 1-s \in (0,1)$) on Eq.~\eqref{eq:HT}, we obtain:
\begin{align} \notag 
	\begin{dcases}
		\alpha 
		\leq \Tr\left[ \rho \wedge \mu \sigma \right]
		\leq \mu^{1-\alpha } \Tr\left[ \rho^\alpha \sigma^{1-\alpha} \right],
		& \\
		\beta 
		\leq \mu^{-1} \Tr\left[ \rho \wedge \mu \sigma \right]
		\leq \mu^{-\alpha} \Tr\left[ \rho^\alpha \sigma^{1-\alpha} \right]
		. &\\
	\end{dcases}
\end{align}
Choosing $\mu = \mathrm{e}^{ \frac{\alpha-1}{\alpha}D_\alpha(\rho\,\Vert\,\sigma) + \frac{r}{\alpha}  }$ with the quantum Petz--R\'enyi divergence $D_\alpha$ introduced in Proposition~\ref{prop:error}, we arrive at the one-shot quantum Hoeffding bound: For all $r>0$ and $\alpha \in(0,1)$,
\begin{align} \notag
	\begin{dcases}
		\alpha \leq \mathrm{e}^{ - \frac{1-\alpha}{\alpha}\left(D_\alpha(\rho\,\Vert\,\sigma) - r\right) }, &\\
		\beta \leq \mathrm{e}^{-r}. &\\
	\end{dcases}
\end{align}
To our best knowledge, this is the first time that the quantum Hoeffding bound is achieved by using the pretty-good measurement.

\subsection{Entanglement-Assisted Classical Communication Over Quantum Channels} \label{sec:EA}	

In this section, we elaborate on how the achievability of entanglement-assisted (EA) classical communication \cite{BSS+99, BSS+02, Hol02, Dup10, DH13, MW14} follows from the proposed simple derivation in Section~\ref{sec:Main} in the same fashion.

\begin{defn}[Entanglement-assisted (EA) classical communication over quantum channels]
	Let $\mathscr{N}_{\A\to \B}$ be a quantum channel.
	\begin{enumerate}[1.]
		\item Alice holds a classical register $\mathsf{M}$ and quantum registers $\A$ and $\A'$,  and Bob holds  quantum registers $\B$ and $\R'$.
		
		\item A resource of an arbitrary state $\theta_{\R'\A'}$ is shared between Bob and Alice beforehand.
		
		\item For any (equiprobable) message $m\in\mathsf{M}$ Alice wanted to send, she performs an encoding quantum operation $\mathscr{E}_{\A'\to \A}^m$ on $\theta_{\R'\A'}$.
		
		\item The quantum channel $\mathscr{N}_{\A\to \B}$ is applied on Alice's quantum register $\A$ and outputs a state on Bob's quantum register $\B$.
		
		\item Bob performs a decoding measurement $\{  {\Pi}_{\R'\B}^m\}_{m\in\mathsf{M}}$ on registers $\R'$ and $\B$ to extract the sent message $m$.
	\end{enumerate}
	
	An $(M, \eps)$-EA-code for $\mathscr{N}_{\A\to \B}$ is a protocol such that $|\mathsf{M}| = M$ and  the average error probability satisfies
	\begin{align} \notag
		\frac{1}{M} \sum_{m\in \mathsf{M}}	\Tr\left[ \left( \mathds{1}-   {\Pi}_{\R'\B}^m \right) \mathscr{N}_{\A\to\B} \circ \mathscr{E}_{\A'\to \A}^m\left(\theta_{\R'\A'}\right)\right]
		\leq \varepsilon.
	\end{align}
\end{defn}

We adopt the encoder of the \emph{position-based coding} \cite{AJW19a} but with the pretty-good measurement as the decoder. 
\begin{itemize}
	\item \textbf{Preparations:} Alice and Bob pre-share an $M$-fold product state $\theta_{\R'\A'} := \theta_{\R\A}^{\otimes M} = \theta_{\R_1\A_1}\otimes \cdots \otimes \theta_{\R_M\A_M}$.
	
	\item \textbf{Encoding.} 
	For sending each $m\in\mathsf{M}$, 
	Alice simply sends her system $\A_m$, i.e.~$\mathscr{E}_{\A^M\to \A}^m = \Tr_{\A^{\mathsf{M}\backslash \{m\}}}$,
	for tracing out systems $\A_{\bar{m}}$ for all $\bar{m}\neq  m$. 
	
	\item \textbf{Decoding.} At receiver, the channel output states are 
	\begin{align} \label{eq:rho_m}
		\rho_{\R^M  \B}^m := \theta_{\R}^{\otimes (m-1)} \otimes \mathscr{N}_{\A\to\B}(\theta_{\R_m\A_m}) \otimes \theta_{\R}^{\otimes (M-m)}, \quad \forall\, m\in\mathsf{M}.
	\end{align}
	Then, Bob performs the pretty-good measurement with respect to the channel output states:
	\begin{align} \notag
		 {\Pi}_{\R^M\B}^m := \frac{ \rho_{\R^M  \B}^m }{ \sum_{\bar{m}\in\mathsf{M}} \rho_{\R^M  \B}^{\bar{m}} }, \quad \forall m\in\mathsf{M}.
	\end{align}
\end{itemize}

Note that the decoding part constitutes the main difference from the previous results, such as the original position-based coding \cite{AJW19a, QWW18} based on the Hayashi--Nagaoka operator inequality \cite[Lemma 2]{HN03}, the sequential decoding strategy \cite{Wil13} with an auxiliary probe system, and a quantum union bound \cite[Theorem 2.1]{OMW19} \footnote{The pretty-good measurement was also used in Ref.~\cite[\S 4]{DTW16} by following Beigi--Gohari's approach \cite{BG12}, wherein the obtained one-shot expression and the asymptotic analysis are more involved.}.

Below, we analyze the conditional error probability for sending each message $m\in\mathsf{M}$.
Let $\Tr_{\R^{\mathsf{M}\backslash \{m\}}}$ be the partial trace for tracing out systems $\R_{\bar{m}}$ for all $\bar{m}\neq m$, except $\R_m$.
By \eqref{eq:rho_m}, we have the following identities:
\begin{align} \notag
	\begin{dcases}
		\Tr_{\R^{\mathsf{M}\backslash \{m\}}}[\rho_{\R^M \B}^{m}] = \mathscr{N}_{\A\to\B}(\theta_{\R_m \A_m}) = \mathscr{N}_{\A\to\B}(\theta_{\R \A}), & \\
		\Tr_{\R^{\mathsf{M}\backslash \{m\}}}[\rho_{\R^M \B}^{\bar m}] = \theta_{\R_m}\otimes \mathscr{N}_{\A\to\B}(\theta_{\A_m}) = \theta_{\R}\otimes \mathscr{N}_{\A\to\B}(\theta_{\A}), & \forall\,\bar{m}\neq m. \\
	\end{dcases}
\end{align}
Then, the error probability conditioned on sending each message $m\in\mathsf{M}$ is 
\begin{align}
	\notag
	\Tr\left[ \rho_{\R^M \B}^{m} \frac{ \sum_{\bar{m}\neq m} \rho_{\R^M  \B}^{\bar m} }{\rho_{\R^M  \B}^{m} + \sum_{\bar{m}\neq m} \rho_{\R^M  \B}^{\bar m}} \right]
	&\overset{\text{(a)}}{\leq} \Tr\left[ \rho_{\R^M  B}^{m} \, \Exterior \left(\sum\nolimits_{\bar{m}\neq m} \rho_{\R^M  \B}^{\bar m} \right) \right] \\
	\notag
	&\overset{}{=} \Tr\left[\Tr_{\R^{\mathsf{M}\backslash \{m\}}}\left[   \rho_{\R^M  \B}^{m} \, \Exterior \left(\sum\nolimits_{\bar{m}\neq m} \rho_{\R^M  \B}^{\bar m} \right) \right] \right] \\	
	&\overset{\text{(b)}}{\leq} \Tr\left[ \left( \Tr_{\R^{\mathsf{M}\backslash \{m\}}}\left[ \rho_{\R^M  \B}^{m} \right] \right) \Exterior  \left(\sum\nolimits_{\bar{m}\neq m} \Tr_{\R^{\mathsf{M}\backslash \{m\}}}\left[ \rho_{\R^M  \B}^{\bar m} \right) \right] \right] \label{eq:EA_temp}\\	
	\notag
	&= \Tr\left[ \mathscr{N}_{\A\to\B}(\theta_{\R\A}) \wedge \left(M-1\right)  \theta_{\R}\otimes \mathscr{N}_{\A\to\B}(\theta_{\A})\right], 
\end{align}
where, likewise in the proof of Theorem~\ref{theo:main}, (a) follows from the lower bound of the noncommutative minimal, Fact~\ref{fact:lower};
(b) is due to the monotonicity of the  noncommutative minimal under positive trace-preserving maps, Fact~\ref{fact:monotone}.
Hence, we establish the following one-shot achievability for entanglement-assisted classical communication over quantum channels.

\begin{theo}[A one-shot achievability bound for EA classical communication over quantum channels] \label{theo:EA} 
	Consider an arbitrary quantum channel $\mathscr{N}_{\A \to \B}$. Then, there exists an $(M,\eps)$-EA-code for $\mathscr{N}_{\A \to \B}$ such that for any density operator $\theta_{\R\A}$,
	\begin{align} \notag
		\eps
		\leq \Tr\left[ \mathscr{N}_{\A\to\B}(\theta_{\R\A}) \wedge \left(M-1\right)  \theta_{\R}\otimes \mathscr{N}_{\A\to\B}(\theta_{\A})\right].
	\end{align}
\end{theo}

\begin{remark}
	The above derivations re-emphasize the central idea of the position-based coding proposed by Anshu, Jain, and Warsi \cite{AJW19a}. Namely, the pre-shared entanglement $\theta_{\R^M\A^M} = \theta_{\R\A}^{\otimes M}$ along with the encoding $m\mapsto \rho_{\R^M \B}^m = \mathscr{N}_{\A\to\B}(\theta_{\R^M\A_m})$
	ensure the \emph{mutual independence} between each subsystem $\R_m \A_m$, for all $m\in\mathsf{M}$, and accordingly, $\Tr_{\R^{\mathsf{M}\backslash \{m\}}}[\rho_{\R^M \B}^{\bar m}] = \theta_{\R}\otimes \mathscr{N}_{\A\to\B}(\theta_{\A})$ for all $\bar{m}\neq m$.
	Here, the partial trace $\Tr_{\R^{\mathsf{M}\backslash \{m\}}}$ may be considered as an \emph{expectation conditioned on $m$}  
	(see Remark~\ref{remark:conditional_expectation} for a detailed discussion).
	Such independence between the register $\R_m$ associated to each channel output state thus plays the same role as the independent random codebook used in classical-quantum channel coding (Theorem~\ref{theo:main}).
	On the other hand, we would like to point out that, normally, a communication system operates on a message set whose size is exponentially large, i.e.~$M\geq \e^{n I(\R{\,:\,}\B)_{\mathscr{N}(\theta)} - O(n) }$.
	Preparing exponentially many copies of the pre-shared state $\theta_{\R\A}^{\otimes M}$ might be practically challenging. 
	(Note that even in the classical case, resources required to generate mutual independence among an exponentially large set could not be considered as non-expensive \cite[Corollary 3.34]{Vad12}).
	Nevertheless, Anshu, Jain, and Warsi \cite[\S IV]{AJW19a} adapted an \emph{entanglement recycling} technique by Strelchuk, Horodecki, and Oppenheim \cite{SHO13} to reduce the required amount of entanglement resource.
	
	From our analysis given above, only \emph{pairwise independence} among each subsystem $\R_m \A_m$, $m\in\mathsf{M}$, is needed. That is, we only requires that $\Tr_{\R^{\mathsf{M}\backslash \{m,\bar{m}\}} \A^{\mathsf{M}\backslash \{m,\bar{m}\}}}[\theta_{\R^M\A^M} ] = \theta_{\R_m\A_m}\otimes \theta_{\R_{\bar{m}}\A_{\bar{m}}}$ for each $m\neq \bar{m}$.
	This point of view may provide another angle to reduce the required entanglement resource.
	Though, to our best knowledge, its explicit construction is not clear in noncommutative probability space.
	We leave this for future work.
\end{remark}

\begin{remark} \label{remark:conditional_expectation}
	The analysis of Theorem~\ref{theo:EA} actually shares the same flavor as that of Theorem~\ref{theo:main}.
	More precisely, the partial trace $\Tr_{\R^{\mathsf{M}\backslash \{m\}}}$ in Eq.~\eqref{eq:EA_temp} plays the same role as
	the averaging over the random codebook as in Eq.~\eqref{eq:main_temp}.
	In other words, the partial trace $\Tr_{\R^{\mathsf{M}\backslash \{m\}}}$ can be interpreted as a \emph{conditional expectation} \cite{Ume54, Ume56, Ume59, Car09} (which is a completely positive and trace-preserving map) from the operator algebra of bounded operators on $\mathsf{R}^M \B$, i.e.~$\mathcal{B}(\R^M  \B)$, to its subalgebra \footnote{Namely, the subalgebra consists of all operators in $\cB(\R^M  \B)$ of the form $\mathds{1}_{\R^{m-1}}\otimes  \Upsilon_{\R_m \B} \otimes \mathds{1}_{\R^{M-m}}$ for all $ \Upsilon_{\R_m \B} \in \cB(\R_m \B)$.} $\mathds{1}_{\R^{m-1}} \otimes \cB(\R_m\B) \otimes \mathds{1}_{\R^{M-m}}$.
	%
\end{remark}


Directly applying the pretty-good measurement as above allows us to obtain a tighter and cleaner one-shot achievability bound in a more general form. This then revisits the \emph{position-based coding} proposed by Anshu, Jain, and Warsi \cite[Lemma 4]{AJW19a}.
We summarize it as the following \emph{one-shot quantum packing lemma} that is not only prominent to Theorems~\ref{theo:main} and \ref{theo:EA}, and all the forthcoming results in this section, but we believe it is applicable elsewhere in quantum information theory as well.

\begin{theo}[A one-shot quantum packing lemma] \label{lemm:packing}
	Let $\rho_{\R\B}$ and $\tau_{\R}$ be arbitrary density operators and let $M$ be an integer.
	For every $m\in\mathsf{M} := \{1,\ldots,M\}$, define
	\begin{align} \notag 
		\omega_{\R_1 \R_2 \ldots \R_M \B}^m := \rho_{\R_m \B} \otimes \tau_{\R_1} \otimes \tau_{\R_2} \otimes \cdots \otimes \tau_{\R_{m-1}} \otimes \tau_{\R_{m+1}} \otimes \cdots \otimes \tau_{\R_M},
	\end{align}
	where $\rho_{\R_m \B} = \rho_{\R\B}$ and $\tau_{\R_m} = \tau_{\R}$ for every $m\in\mathsf{M}$.
	Then, there exists a measurement
	\begin{align} \notag
		 {\Pi}_{\R_1 \R_2 \ldots \R_M \B}^m := \frac{ \omega_{\R_1 \R_2 \ldots \R_M \B}^m }{ \sum_{\bar{m}\in\mathsf{M}} \omega_{\R_1 \R_2 \ldots \R_M \B}^{\bar{m}} }, \quad \forall m\in\mathsf{M},
	\end{align}
	satisfying, for every $m\in\mathsf{M}$,
	\begin{align} \notag 
		\Tr\left[\omega_{\R_1 \R_2 \ldots \R_M \B}^m \left( \mathds{1} -   {\Pi}_{\R_1 \R_2 \ldots \R_M \B}^m \right) \right] \leq \Tr\left[ \rho_{\R\B} \wedge (M-1) \tau_{\R} \otimes \rho_{\B} \right]. 
	\end{align}
\end{theo}

To see how the one-shot quantum packing lemma is applied to the previous achievability bounds, we make the following substitutions: 
$\rho_{\R_m \B} \to \mathscr{N}_{\A\to\B}(\theta_{\R_m\A_m})$,
and $\tau_{\R_{\bar{m}}} \to \theta_{\R_{\bar{m}}}$ for all $m\in\mathsf{M}$ and $\bar{m}\neq m$.
Then, Theorem~\ref{lemm:packing} covers Theorem~\ref{theo:EA} for entanglement-assisted classical communication over quantum channels.

On the other hand, in the scenario where 
$\R_m \B \to \X_m \B$,
$\rho_{\R_m \B} \to \rho_{\X_m \B}$,
and $\tau_{\R_{\bar{m}}} \to \rho_{\X_{\bar{m}}}$  for all $m\in \mathsf{M}$ and $\bar{m}\neq m$,
the setting in Theorem~\ref{lemm:packing} corresponds to the \emph{randomness-assisted communication over c-q channels}, 
where the systems $\X_{m}$'s are the shared randomness at Bob; and the joint state $\rho_{\X_m \B}$ resulted from Alice sending her $m$-th classical system through the c-q channel
(see also the paper by Wilde \cite{Wil17b} and Anshu--Jain--Warsi \cite{AJW19}).
Then, Theorem~\ref{lemm:packing} yields the achievability bound on the average error probability over the ensemble of codes, i.e.~the right-hand side of \eqref{eq:main} in Theorem~\ref{theo:main}.
Via de-randomization, one can always claim the existence of a good code in the ensemble to achieve such an error bound without randomness assistance.
This concludes the statement of Theorem~\ref{theo:main} for c-q channel coding~\footnote{We note that 
	by applying Theorem~\ref{lemm:packing} in randomness-assisted communication over c-q channels, it is equivalent to calculating the average error probability using \emph{mutually independent} random codebook, while in Theorem~\ref{theo:main}, we only require \emph{pairwise independent} random codebook.
}.

\medskip
Following the same reasoning as in Proposition~\ref{prop:error}, Theorem~\ref{theo:EA} (or Theorem~\ref{lemm:packing}) leads to a large deviation type bound, which is tighter than \cite[Theorem 6]{QWW18} without a prefactor $4$; 
following the same reasoning as in Proposition~\ref{prop:rate}, Theorem~\ref{theo:EA} provides a tighter lower bound on the \emph{$\varepsilon$-one-shot entanglement-assisted capacity} for $\mathscr{N}_{\A\to\B}$ (i.e.~the maximal logarithmic size of messages with average error probability below $\varepsilon$) than \cite[Theorem 1]{AJW19a}, \cite[Theorem 8]{QWW18}, and \cite[Theorem 5.1]{OMW19} (with the same improvements as the comparison made in Section~\ref{sec:comparison}).
\begin{prop}[Bounding coding error given fixed coding rate]
	Consider an arbitrary quantum channel $\mathscr{N}_{\A \to \B}$. 
	Then, for any $R>0$, there exists an $\left(\mathrm{e}^{R},\eps\right)$-EA-code for $\mathscr{N}_{\A \to \B}$ such that for any $\theta_{\R\A}$,
	\begin{align} \notag
		\eps \leq \e^{- \frac{1-\alpha}{\alpha} \left( I_{2-\sfrac{1}{\alpha}}^\downarrow (\R{\,:\,}\B)_{\mathscr{N}(\theta)} - R \right) }, \quad \forall \alpha \in (\sfrac12,1).
	\end{align}
	Here, we follow the notation given in Proposition~\ref{prop:error}.
\end{prop}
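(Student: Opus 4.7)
The plan is to imitate the derivation of Proposition~\ref{prop:error} essentially verbatim, since Theorem~\ref{theo:EA} has exactly the same structural form as Theorem~\ref{theo:main}: both bound the coding error by $\Tr[A\wedge B]$ where $A$ is a joint channel input--output state and $B$ is $(M-1)$ times a decoupled product of its marginals. The only differences are that the input register is quantum ($\R$) rather than classical ($\X$), and the ``$A$'' operator becomes $\mathscr{N}_{\!\A\to\B}(\theta_{\R\A})$ instead of $\rho_{\X\B}$. Since Fact~\ref{fact} is stated for arbitrary self-adjoint (resp. positive) operators on Hilbert space, quantum-ness of $\R$ costs nothing.

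First, I would invoke Theorem~\ref{theo:EA} with $M=\mathrm{e}^{R}$ to obtain
\[
\varepsilon \leq \Tr\!\left[ \mathscr{N}_{\!\A\to\B}(\theta_{\R\A}) \wedge (\mathrm{e}^{R}-1)\,\theta_{\R}\otimes \mathscr{N}_{\!\A\to\B}(\theta_{\A}) \right].
\]
Then I would apply Audenaert \emph{et al.}'s inequality (Fact~\ref{fact:upper}) with $A = \mathscr{N}_{\!\A\to\B}(\theta_{\R\A})$, $B = (\mathrm{e}^{R}-1)\,\theta_{\R}\otimes \mathscr{N}_{\!\A\to\B}(\theta_{\A})$, and the exponent $s = \tfrac{1-\alpha}{\alpha}\in(0,1)$ for $\alpha\in(\tfrac12,1)$. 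A short book-keeping step --- pulling the scalar $(\mathrm{e}^R-1)^{s}$ out of the trace, identifying $1-s = \tfrac{2\alpha-1}{\alpha}$, and recognising $\Tr\bigl[A^{1-s}B^{s}\bigr]$ as $\exp\!\bigl(\tfrac{\alpha-1}{\alpha} I_{2-\sfrac1\alpha}^{\downarrow}(\R{:}\B)_{\mathscr{N}(\theta)}\bigr)$ via the definition of the Petz--R\'enyi divergence --- yields
\[
\varepsilon \leq \mathrm{e}^{-\frac{1-\alpha}{\alpha}\bigl(I_{2-\sfrac1\alpha}^{\downarrow}(\R{:}\B)_{\mathscr{N}(\theta)} - \log(\mathrm{e}^{R}-1)\bigr)} \leq \mathrm{e}^{-\frac{1-\alpha}{\alpha}\bigl(I_{2-\sfrac1\alpha}^{\downarrow}(\R{:}\B)_{\mathscr{N}(\theta)} - R\bigr)},
\]
where the last step uses $\mathrm{e}^{R}-1\leq \mathrm{e}^{R}$. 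Taking the infimum (equivalently, choosing any $\alpha\in(\tfrac12,1)$) gives the claimed inequality.

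I expect no real obstacle here: Proposition~\ref{prop:error} is proved by exactly this two-line template (Theorem~\ref{theo:main} $+$ Fact~\ref{fact:upper}), and Theorem~\ref{theo:EA} is designed to make this argument go through unchanged for entanglement-assisted coding. The only point that deserves a sentence of justification is that the Petz--R\'enyi divergence $D_{2-\sfrac1\alpha}$ is well-defined and matches the expression obtained after applying Fact~\ref{fact:upper}, which is a routine algebraic check of the identity $1-(2-\tfrac1\alpha) = \tfrac{1-\alpha}{\alpha}=s$ together with $\tfrac{1}{(2-\sfrac1\alpha)-1} = \tfrac{\alpha}{\alpha-1}$. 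The closing remark on positivity of the exponent, although not asserted in this particular proposition, would follow from monotonicity of $\alpha\mapsto I_{2-\sfrac1\alpha}^{\downarrow}(\R{:}\B)_{\mathscr{N}(\theta)}$ on $[\tfrac12,1]$ together with the limit $I_1^{\downarrow}=I(\R{:}\B)_{\mathscr{N}(\theta)}$, exactly as in Proposition~\ref{prop:error}.
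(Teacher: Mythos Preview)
Your proposal is correct and follows exactly the approach the paper indicates: it states that the proposition is obtained by ``following the same reasoning as in Proposition~\ref{prop:error}'' applied to Theorem~\ref{theo:EA}, i.e., invoking Fact~\ref{fact:upper} with $A=\mathscr{N}_{\!\A\to\B}(\theta_{\R\A})$, $B=(M-1)\theta_{\R}\otimes\mathscr{N}_{\!\A\to\B}(\theta_{\A})$, and $s=\tfrac{1-\alpha}{\alpha}$. Your bookkeeping and the final relaxation $\mathrm{e}^R-1\leq \mathrm{e}^R$ are precisely what is needed.
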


\begin{prop}[Bounding coding rate given fixed coding error]
	Consider an arbitrary quantum channel $\mathscr{N}_{\A \to \B}$. 
	Then, for any $\eps\in(0,1)$ there exists an $\left(M,\eps\right)$-EA-code for $\mathscr{N}_{\A \to \B}$ such that for any $\theta_{\R\A}$ and any $\delta \in (0,\eps)$,
	\begin{align} \notag
		\log M \geq D_\textnormal{h}^{\eps-\delta}(\mathscr{N}_{\A\to\B}(\theta_{\R\A})\,\|\,\theta_{\R}\otimes \mathscr{N}_{\A\to\B}(\theta_{\A}))  - \log \tfrac{1}{\delta}.
	\end{align}
	Here, we follow the notation given in Proposition~\ref{prop:rate}.
\end{prop}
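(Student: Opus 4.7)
The plan is to mirror the proof of Proposition~\ref{prop:rate} almost verbatim, now starting from the one-shot bound in Theorem~\ref{theo:EA} rather than Theorem~\ref{theo:main}. Let me write $\rho_{\R\B} := \mathscr{N}_{\A\to\B}(\theta_{\R\A})$ and $\sigma_{\R\B} := \theta_{\R}\otimes \mathscr{N}_{\A\to\B}(\theta_{\A})$ for brevity. Theorem~\ref{theo:EA} supplies the bound $\eps \leq \Tr[\rho_{\R\B} \wedge (M-1)\sigma_{\R\B}]$, and the claim is a direct consequence of this together with the Holevo--Helstrom variational representation of $\Tr[A\wedge B]$ recorded in Eq.~\eqref{eq:HH}.

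First, I invoke Eq.~\eqref{eq:HH} to write
\begin{align*}
\Tr\left[\rho_{\R\B} \wedge (M-1)\sigma_{\R\B}\right] = \inf_{0\leq T\leq \mathds{1}} \Tr\left[\rho_{\R\B}(\mathds{1}-T)\right] + (M-1)\Tr\left[\sigma_{\R\B}\, T\right].
\end{align*}
Next, I restrict the infimum to tests $T_{\R\B}$ which are feasible for $D_\textnormal{h}^{\eps-\delta}(\rho_{\R\B}\,\|\,\sigma_{\R\B})$, that is, tests obeying $\Tr[\rho_{\R\B}(\mathds{1}-T_{\R\B})]\leq \eps-\delta$. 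For such a test, $\Tr[\sigma_{\R\B} T_{\R\B}] \leq \e^{-D_\textnormal{h}^{\eps-\delta}(\rho_{\R\B}\,\|\,\sigma_{\R\B})}$ by definition of the hypothesis-testing divergence (taking the supremum over feasible $T$). Combining these two inequalities with Theorem~\ref{theo:EA} yields
\begin{align*}
\eps \leq (\eps-\delta) + (M-1)\,\e^{-D_\textnormal{h}^{\eps-\delta}(\rho_{\R\B}\,\|\,\sigma_{\R\B})}.
\end{align*}

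Rearranging this gives $(M-1)\geq \delta\,\e^{D_\textnormal{h}^{\eps-\delta}(\rho_{\R\B}\,\|\,\sigma_{\R\B})}$, whence $\log M \geq D_\textnormal{h}^{\eps-\delta}(\rho_{\R\B}\,\|\,\sigma_{\R\B}) - \log(1/\delta)$, which is precisely the claim after substituting back for $\rho_{\R\B}$ and $\sigma_{\R\B}$. Because the argument is essentially a two-line reduction from Theorem~\ref{theo:EA} and a near-supremum selection inside $D_\textnormal{h}^{\eps-\delta}$, there is no real obstacle; the only minor care-point is choosing a test whose type-II probability approximates the supremum that defines $D_\textnormal{h}^{\eps-\delta}$ arbitrarily well, which can be handled by taking a limit (or by noting that the supremum in the definition of $D_\textnormal{h}^\varepsilon$ is attained in finite dimension by an SDP optimum). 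No additional second-order expansion step is needed, since the proposition states only the one-shot conclusion.
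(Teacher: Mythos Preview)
Your proposal is correct and follows essentially the same approach as the paper: the paper does not give a separate proof of this proposition but explicitly states that it follows ``following the same reasoning as in Proposition~\ref{prop:rate},'' i.e., by combining Theorem~\ref{theo:EA} with the variational representation of the noncommutative minimal in Eq.~\eqref{eq:HH} and a feasible test for $D_\textnormal{h}^{\eps-\delta}$---exactly what you do. Your remark about approximating the supremum defining $D_\textnormal{h}^{\eps-\delta}$ by a near-optimal test (or attaining it via an SDP optimum in finite dimension) is a useful clarification that the paper leaves implicit.
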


\subsection{Classical Data Compression with Quantum Side Information} \label{sec:CQSW}

In this section, we show that how the proposed method in Section~\ref{sec:Main} can be applied to classical data compression with quantum side information \cite{DW03, renes_one-shot_2012, CHDH-2018, CHDH2-2018, Hao-Chung}. Subsequently, we will  term such an protocol as CQSW (which stands for \emph{classical-quantum Slepian--Wolf}).

\begin{defn}[Classical data compression with quantum side information]
	Let $\rho_{\X\B} = \sum_{x\in\X} p_{\X} (x) |x\rangle \langle x| \otimes \rho_{\B}^x$ be a classical-quantum state.
	\begin{enumerate}[1.]
		\item Alice holds classical registers $\X$ and $\mathsf{M}$, and Bob holds a quantum register $\B$.
		
		
		\item Alice performs an encoding $\mathscr{E}:\X \to \mathsf{M}$ that compresses the source in $\X$ to an index in $\mathsf{M}$.
		
		\item Bob performs a decoding measurement described by a family of POVMs indexed by $m\in\mathsf{M}$, i.e.~$\{  {\Pi}_{\B}^{x,m} \}_{x\in\mathsf{X}}$ on register $\B$, to recover the source $x\in\X$.
	\end{enumerate}
	
	An $(M, \eps)$-CQSW-code for $\rho_{\X\B}$ is a protocol such that  $|\mathsf{M}| = M$ and the error probability satisfies
	\begin{align} \notag
		\sum_{ x\in\X }	p_{\X}(x) \Tr\left[ \rho_{\B}^{x} \left(\mathds{1}_{\B} -   {\Pi}_{\B}^{x, \mathscr{E}(x)} \right) \right]
		\leq \varepsilon.
	\end{align}
\end{defn}

Without loss of generality, we assume that the prior distribution of the source, $p_{\X}$, has full support for brevity.
We also adopt the standard random coding strategy as in Section~\ref{sec:Main}.
\begin{itemize}
	\item \textbf{Encoding.} 
	The encoder maps each $x\in\X$ pairwise independently to uniform index $m\in\mathsf{M}$.
	
	\item \textbf{Decoding.} We use the following pretty-good measurement (again given the realization of the above encoding):
	\begin{align} \label{eq:PGM_CQSW}
		 {\Pi}_{\B}^{x, m} := \frac{ p_{\X}(x) \rho_{\B}^x }{ \sum_{\bar{x}: \mathscr{E}(\bar{x}) = m } p_{\X}(\bar{x}) \rho_{\B}^{\bar{x}} }, \quad \forall x\in \X, \, m \in\mathsf{M}.
	\end{align}
\end{itemize}


\begin{theo}[A one-shot achievability bound for classical data compression with quantum side information] 
	Consider an arbitrary classical-quantum state $\rho_{\X\B} = \sum_{x\in\X} p_{\X} (x) |x\rangle \langle x| \otimes \rho_{\B}^x$. Then, there exists an $(M,\eps)$-CQSW-code for $\rho_{\X\B}$ such that
	\begin{align} \notag
		\eps
		\leq \Tr\left[ \rho_{\X\B} \wedge \tfrac{1}{M}  \mathds{1}_{\X}\otimes \rho_{\B}\right].
	\end{align}
\end{theo}
\begin{proof}
	We use the pretty-good measurement given in Eq.~\eqref{eq:PGM_CQSW} to calculate the expected error probability (over the random encoding) as follows:
	\begin{align*}
		\mathds{E}_{x\sim p_{\X}} \mathds{E}_{m \sim \frac{1}{M} } \Tr\left[ \rho_{\B}^x \frac{ \sum_{\bar{x}\neq x, \, \mathscr{E}(\bar{x})= m } p_{\X}(\bar{x}) \rho_{\B}^{\bar{x}} }{ \sum_{\bar{x}: \mathscr{E}(\bar{x}) = m } p_{\X}(\bar{x}) \rho_{\B}^{\bar{x}} } \right]
		&\overset{\text{(a)}}{\leq}
		\mathds{E}_{x\sim p_{\X}} \mathds{E}_{m \sim \frac{1}{M} } \Tr\left[ \rho_{\B}^x  \,\Exterior \left( \sum_{\bar{x}\neq x, \, \mathscr{E}(\bar{x})= m } \frac{p_{\X}(\bar{x})}{ p_{\X}({x}) } \rho_{\B}^{\bar{x}} \right) \right]\\
		&\overset{\text{(b)}}{\leq}
		\mathds{E}_{x\sim p_{\X}} \Tr\left[ \rho_{\B}^x  \,\Exterior\,\mathds{E}_{ m\sim \frac{1}{M} } \left[ \sum_{\bar{x}\neq x } \mathbf{1}_{\{ \mathscr{E}(\bar{x}) = m \} } \frac{p_{\X}(\bar{x})}{ p_{\X}({x}) } \rho_{\B}^{\bar{x}} \right] \right]\\
		&\overset{\text{(c)}}{=} \mathds{E}_{x\sim p_{\X}} \Tr\left[ \rho_{\B}^x  \,\Exterior \left(\frac{1}{M} \sum_{\bar{x}\neq x } \frac{p_{\X}(\bar{x})}{ p_{\X}({x}) }  \rho_{\B}^{\bar{x}} \right) \right] \\
		&\overset{\text{(d)}}{\leq} \mathds{E}_{x\sim p_{\X}} \Tr\left[ \rho_{\B}^x \,\Exterior \left( \frac{1}{M p_{\X} (x)} \rho_{\B} \right)\right] \\
		&\overset{\text{(e)}}{\leq} \Tr\left[ \rho_{\X\B}  \,\Exterior \left(\frac{1}{M}  \mathds{1}_{\X}\otimes \rho_{\B} \right)\right],
	\end{align*}
	where (a) uses the lower bound of the noncommutative minimal given in Fact~\ref{fact:lower};
	(b) follows from the concavity given in Fact~\ref{fact:concave};
	(c) follows from the pairwise-independent and uniform random encoding;
	(d) follows from the monotone increase in the Loewner ordering and $\sum_{\bar{x}\neq x} p_{\X}(\bar{x})\rho_{\B}^{\bar{x}} \leq \sum_{\bar{x}} p_{\X}(\bar{x})\rho_{\B}^{\bar{x}} = \rho_{\B}$;
	and lastly, (e) follows from the direct sum formula given in Fact~\ref{fact:direct}.
\end{proof}

Using the same reasoning as in Propositions~\ref{prop:error} and \ref{prop:rate} of Section~\ref{sec:Main}, we have the following one-shot bounds for CQSW.
\begin{prop}[Bounding coding error given fixed coding rate]
	Consider an arbitrary classical-quantum state $\rho_{\X\B} = \sum_{x\in\X} p_{\X} (x) |x\rangle \langle x| \otimes \rho_{\B}^x$.
	Then, for any $R>0$, there exists an $\left(\mathrm{e}^{R},\eps\right)$-CQSW-code for $\rho_{\X\B}$ such that
	\begin{align} \notag
		\eps \leq \e^{- \frac{1-\alpha}{\alpha} \left( R -  H_{2-\sfrac{1}{\alpha}}^\downarrow (\X{\,|\,}\B)_{\rho} \right) }, \quad \forall \alpha \in (\sfrac12,1),
	\end{align}
	where $H_\alpha^\downarrow (\X{\,|\,}\B)_{\rho}  := - D_\alpha (\rho_{\X\B}\,\|\,\mathds{1}_{\X}\otimes \rho_{\B})$.
\end{prop}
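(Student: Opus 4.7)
The plan is to apply Audenaert et al.'s inequality (Fact~\ref{fact:upper}) to weaken the one-shot bound $\eps \leq \Tr[\rho_{\X\B}\wedge \tfrac{1}{M}\mathds{1}_{\X}\otimes \rho_{\B}]$ established in the preceding CQSW achievability theorem, converting it into a multiplicative Petz-R\'enyi expression from which the claimed exponential bound can be read off directly. The key choice of parameters is $s := \tfrac{1-\alpha}{\alpha}$, which for $\alpha \in (\sfrac{1}{2},1)$ lies in $(0,1)$ so that the inequality is applicable, with the complementary parameter $\beta := 1-s = 2 - \sfrac{1}{\alpha}$ also lying in $(0,1)$.

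Applying Fact~\ref{fact:upper} with $A := \rho_{\X\B}$ and $B := \tfrac{1}{M}\mathds{1}_{\X}\otimes \rho_{\B}$ yields
\begin{align*}
\Tr[A\wedge B] \;\leq\; \Tr[A^\beta B^{1-\beta}] \;=\; M^{-(1-\beta)}\,\Tr\!\bigl[\rho_{\X\B}^\beta (\mathds{1}_{\X}\otimes \rho_{\B})^{1-\beta}\bigr].
\end{align*}
By the definition of the Petz-R\'enyi divergence, the remaining trace equals $\exp\!\bigl((\beta-1) D_\beta(\rho_{\X\B}\,\|\,\mathds{1}_{\X}\otimes \rho_{\B})\bigr)$, which by the identity $H_\beta^\downarrow(\X{\,|\,}\B)_\rho = -D_\beta(\rho_{\X\B}\,\|\,\mathds{1}_{\X}\otimes \rho_{\B})$ supplied in the statement rewrites as $\exp\!\bigl((1-\beta)\,H_\beta^\downarrow(\X{\,|\,}\B)_\rho\bigr)$. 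Substituting $R = \log M$ and $1-\beta = (1-\alpha)/\alpha$ collapses the whole expression to $\exp\!\bigl(-\tfrac{1-\alpha}{\alpha}(R - H_{2-\sfrac{1}{\alpha}}^\downarrow(\X{\,|\,}\B)_\rho)\bigr)$, which is the asserted bound for each fixed $\alpha \in (\sfrac{1}{2},1)$.

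Since this argument is a direct transposition of the reasoning behind Proposition~\ref{prop:error} (the analogous large-deviation bound for c-q channel coding) to the CQSW setting, no genuine analytic obstacle arises: the step relating the error probability to the noncommutative minimal has already been handled by the preceding one-shot theorem, and the remaining weakening is purely algebraic. The only points requiring care are bookkeeping ones, namely verifying that the parametrization $\alpha \leftrightarrow \beta = 2 - \sfrac{1}{\alpha}$ maps $(\sfrac{1}{2},1)$ bijectively onto $(0,1)$ so that Fact~\ref{fact:upper} is applied within its domain of validity, and correctly tracking signs when converting between $D_\beta$ and $H_\beta^\downarrow$.
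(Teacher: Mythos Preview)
Your proposal is correct and matches the paper's approach exactly: the paper simply states that the result follows ``using the same reasoning as in Propositions~\ref{prop:error} and \ref{prop:rate},'' and your write-up carries out precisely that reasoning---applying Fact~\ref{fact:upper} with $s=\frac{1-\alpha}{\alpha}$ to the one-shot CQSW bound $\eps\leq\Tr[\rho_{\X\B}\wedge\tfrac{1}{M}\mathds{1}_{\X}\otimes\rho_{\B}]$ and unwinding the definitions.
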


\begin{prop}[Bounding coding rate given fixed coding error]
	Consider an arbitrary classical-quantum state $\rho_{\X\B} = \sum_{x\in\X} p_{\X} (x) |x\rangle \langle x| \otimes \rho_{\B}^x$.
	Then, for any $\eps\in(0,1)$ there exists an $\left(M,\eps\right)$-CQSW-code for $\rho_{\X\B}$ such that for any $\delta \in (0,\eps)$,
	\begin{align} \notag
		\log M \leq H_\textnormal{h}^{\eps-\delta}\left(\X {\,|\,} \B\right)_\rho  + \log \tfrac{1}{\delta},
	\end{align}
	where $H_\textnormal{h}^{\eps-\delta}\left(\X {\,|\,} \B\right)_\rho :=  - D_\textnormal{h}^{\eps-\delta}(\rho_{\X\B} \,\|\, \mathds{1}_{\X}\otimes \rho_{\B} )$.
\end{prop}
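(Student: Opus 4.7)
The plan is to mirror the proof of Proposition~\ref{prop:rate} almost verbatim, replacing the one-shot achievability from Theorem~\ref{theo:main} by the CQSW one-shot bound $\eps' \leq \Tr[\rho_{\X\B}\wedge \tfrac{1}{M}\mathds{1}_{\X}\otimes \rho_{\B}]$ established in the preceding theorem, and converting the noncommutative minimal into the $\varepsilon$-hypothesis-testing divergence via the Holevo--Helstrom variational formula~\eqref{eq:HH}.

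Concretely, I would first invoke Eq.~\eqref{eq:HH}: for every test $0\leq T_{\X\B}\leq \mathds{1}_{\X\B}$,
\begin{align*}
	\Tr\!\left[\rho_{\X\B}\wedge \tfrac{1}{M}\mathds{1}_{\X}\otimes \rho_{\B}\right]
	\leq \Tr\!\left[\rho_{\X\B}(\mathds{1}_{\X\B}-T_{\X\B})\right] + \tfrac{1}{M}\Tr\!\left[(\mathds{1}_{\X}\otimes \rho_{\B})\,T_{\X\B}\right].
\end{align*}
Then I would pick $T_{\X\B}$ to be an optimizer of $D_\textnormal{h}^{\eps-\delta}(\rho_{\X\B}\,\|\,\mathds{1}_{\X}\otimes \rho_{\B})$, so that $\Tr[\rho_{\X\B}(\mathds{1}_{\X\B}-T_{\X\B})]\leq \eps-\delta$ and $\Tr[(\mathds{1}_{\X}\otimes\rho_{\B})T_{\X\B}]=\mathrm{e}^{-D_\textnormal{h}^{\eps-\delta}(\rho_{\X\B}\|\mathds{1}_{\X}\otimes\rho_{\B})}=\mathrm{e}^{H_\textnormal{h}^{\eps-\delta}(\X|\B)_\rho}$. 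Combining with the one-shot bound yields the code error estimate
\begin{align*}
	\eps' \;\leq\; (\eps-\delta) + \tfrac{1}{M}\,\mathrm{e}^{H_\textnormal{h}^{\eps-\delta}(\X|\B)_\rho}.
\end{align*}

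Finally, I would take $M$ to be the smallest positive integer satisfying $M\,\delta \geq \mathrm{e}^{H_\textnormal{h}^{\eps-\delta}(\X|\B)_\rho}$; the preceding display then guarantees $\eps'\leq \eps$, while the defining choice of $M$ gives $\log M \leq H_\textnormal{h}^{\eps-\delta}(\X|\B)_\rho + \log(1/\delta)$, matching the claim. I do not anticipate any real obstacle: this is exactly the source-coding dual of Proposition~\ref{prop:rate}, with the direction of the inequality on $\log M$ reversed (smaller $M$ is better for compression) because the factor $(M-1)$ in the channel-coding bound is replaced by $1/M$ in the CQSW bound. The moderate/second-order asymptotics could then be recovered, if desired, by invoking the same expansion of $D_\textnormal{h}^{\eps\pm\delta}$ used in Proposition~\ref{prop:rate} with $\delta=1/\sqrt{n}$.
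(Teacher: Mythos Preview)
Your proposal is correct and follows precisely the route the paper intends: it simply says ``using the same reasoning as in Propositions~\ref{prop:error} and \ref{prop:rate}'' without spelling out a separate argument, and your derivation is exactly the CQSW analogue of the Proposition~\ref{prop:rate} proof with $(M-1)\rho_{\X}\otimes\rho_{\B}$ replaced by $\tfrac{1}{M}\mathds{1}_{\X}\otimes\rho_{\B}$. The only cosmetic difference is that the paper phrases the hypothesis-testing step as ``for any test $T$ with $\Tr[\rho(\mathds{1}-T)]\leq \eps-\delta$'' rather than fixing an optimizer, which sidesteps existence of the supremum; both the paper and you are equally informal about the integer rounding of $M$.
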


\subsection{Multiple-Access Channel Coding} \label{sec:MAC}
In this section, we show one-shot achievability bounds for classical-quantum multiple-access channel (MAC) coding and entanglement-assisted classical communication over quantum MACs \cite{Win01, HDW08, XW12, QWW18}. Note that the former naturally extends to (unassisted) classical communication over quantum MACs.
We will present the scenario for only two senders with one receivers; the result applies to multiple senders in the same fashion. 

\begin{defn}[Classical-quantum multiple-access channel coding]
	Let $\mathscr{N}_{\X\Y\to \C}: (x,y)\mapsto \rho_{\C}^{x,y}$ be a classical-quantum multiple-access channel.
	\begin{enumerate}[1.]
		\item Alice holds classical registers $\mathsf{M}_{\A}$ and $\X$, Bob holds $\mathsf{M}_{\B}$ and $\Y$, and Charlie holds quantum register $\C$.
		
		
		\item Alice performs an encoding $m_{\A} \mapsto x(m_{\A}) \in \X$ for any equiprobable message $m_{\A} \in\mathsf{M}_{\A}$ she wanted to send.
		Bob performs an encoding $m_{\B} \mapsto y(m_{\B}) \in \Y$ for any equiprobable message $m_{\B} \in \mathsf{M}_{\B}$ he wanted to send.
		
		\item The channel $\mathscr{N}_{\X\Y\to \C}$ is applied on Alice and Bob's registers $\X$ and $\Y$ and outputs a state on $\C$ at Charlie.
		
		\item A decoding measurement $\{  {\Pi}_{\C}^{m_{\A}, m_{\B}}\}_{(m_{\A}, m_{\B})\in\mathsf{M}_{\A} \times \mathsf{M}_{\B} }$ is performed on register $ \C$ to extract the sent message $(m_{\A}, m_{\B})$.
	\end{enumerate}
	
	An $(M_{\A}, M_{\B}, \eps)$-code for $\mathscr{N}_{\X\Y\to \C}$ is a protocol such that  $|\mathsf{M}_{\A}| = M_{\A}$,  $|\mathsf{M}_{\B}| = M_{\B}$, and the average error probability satisfies
	\begin{align} \notag
		\frac{1}{M_{\A} M_{\B}} \sum_{ (m_{\A}, m_{\B})\in\mathsf{M}_{\A} \times \mathsf{M}_{\B} }	\Tr\left[ \left( \mathds{1}-   {\Pi}_{\C}^{m_{\A}, m_{\B}} \right) \rho_{\C}^{x(m_{\A}), y(m_{\B})} \right]
		\leq \varepsilon.
	\end{align}
\end{defn}

We follow the strategy presented in Section~\ref{sec:Main}. 
\begin{itemize}
	\item \textbf{Encoding.} Each pair of messages $(m_{\A}, m_{\B})\in\mathsf{M}_{\A} \times \mathsf{M}_{\B}$ is mapped to a codeword $\left(x(m_{\A}), y(m_{\B}) \right) \in \X\times \Y$ pairwise independently according to some probability distribution $p_{\X} \otimes p_{\Y}$.
	
	\item \textbf{Decoding.} We use the pretty-good measurement with respect to the corresponding channel output states (given the realization of the random codebook):
	\begin{align} \notag
		 {\Pi}_{ \C}^{m_{\A}, m_{\B}} = \frac{ \rho_{\C}^{x(m_{\A}), y(m_{\B}) } }{ \sum_{(\bar{m}_{\A}, \bar{m}_{\B})} \rho_{ \C}^{x(\bar{m}_{\A}), y(\bar{m}_{\B})} }, \quad (m_{\A}, m_{\B})\in\mathsf{M}_{\A} \times \mathsf{M}_{\B}  .
	\end{align}
\end{itemize}

Then, we obtain the following result (without duplicating the proof).
\begin{theo}[A one-shot achievability bound for classical-quantum MAC coding] \label{theo:MAC_cq}
	Consider an arbitrary classical-quantum multiple-access channel $\mathscr{N}_{\X\Y\to \C}: (x,y)\mapsto \rho_{\C}^{x,y}$. Then, there exists an $\left(M_{\A}, M_{\B},\eps\right)$-code for $\mathscr{N}_{\X \Y\to \C}$ such that for any probability distributions $p_{\X}$ and $p_{\Y}$,
	\begin{align}  \notag
		\varepsilon \leq \Tr\left[ \rho_{\X\Y\mathsf{C}} \,\Exterior \left(
		(M_{\A}-1) \rho_{\X}\otimes \rho_{\Y\mathsf{C}}
		+ (M_{\B}-1) \rho_{\Y}\otimes \rho_{\X\mathsf{C}}
		+ (M_{\A}-1)(M_{\B}-1) \rho_{\X}\otimes \rho_{\Y} \otimes \rho_{\mathsf{C}}
		\right)	\right],
	\end{align}
	where $\rho_{\X\Y\mathsf{C}} := \sum_{(x,y)\in \X\times \Y} p_{\X}(x) |x\rangle\langle x|_{\X} \otimes p_{\Y}(y) |y\rangle\langle y|_{\Y} \otimes \rho_{\C}^{x,y}$.
\end{theo}

\medskip
Next, we consider the entanglement-assisted setting.

\begin{defn}[Entanglement-assisted classical communication over quantum multiple-access channels]
	Let $\mathscr{N}_{\A\B\to \C}$ be a quantum multiple-access channel.
	\begin{enumerate}[1.]
		\item Alice holds classical register $\mathsf{M}_{\A}$ and quantum registers $\A$ and $\A'$.
		Bob holds classical register $\mathsf{M}_{\B}$ and quantum registers $\B$ and $\B'$.
		Charlie holds quantum registers $\C$, $\R_{\A}'$ and $\R_{\B}'$.
		
		\item 
		Charlie and Alice share an arbitrary state $\theta_{\R_{\A}'\A'}$.
		Charlie and Bob share an arbitrary state $\theta_{\R_{\B}'\B'}$.
		
		\item Alice performs an encoding $\mathscr{E}_{\A'\to \A}^{m_{\A}}$ on  $\theta_{\R_{\A}'\A'}$ for any equiprobable message $m_{\A} \in\mathsf{M}_{\A}$ she wanted to send;
		Bob performs an encoding $\mathscr{E}_{\B'\to \B}^{m_{\B}}$ on $\theta_{\R_{\B}'\B'}$ for any equiprobable message $m_{\B} \in \mathsf{M}_{\B}$ he wanted to send.
		
		\item The channel $\mathscr{N}_{\A\B\to \C}$ is applied on Alice and Bob's registers $\A$ and $\B$, and outputs a state on quantum register $\C$ at Charlie.
		
		\item Charlie performs a decoding measurement $\{  {\Pi}_{\R_{\A}' \R_{\B}' \C}^{m_{\A}, m_{\B}}\}_{(m_{\A}, m_{\B})\in\mathsf{M}_{\A} \times \mathsf{M}_{\B} }$ on registers $\R_{\A}' \R_{\B}' \C$ to extract the sent message $(m_{\A}, m_{\B})$.
	\end{enumerate}
	
	An $(M_{\A}, M_{\B}, \eps)$-EA-code for $\mathscr{N}_{\A\B\to \C}$ is a protocol such that $|\mathsf{M}_{\A}| = M_{\A}$,  $|\mathsf{M}_{\B}| = M_{\B}$, and the average error probability satisfies
	\begin{align} \notag
		\frac{1}{M_{\A} M_{\B}} \sum_{ (m_{\A}, m_{\B})\in\mathsf{M}_{\A} \times \mathsf{M}_{\B} }	\Tr\left[ \left( \mathds{1}-   {\Pi}_{\R_{\A}' \R_{\B}' \C}^{m_{\A}, m_{\B}} \right) \mathscr{N}_{\A\B\to\C} \left( \mathscr{E}_{\A'\to \A}^{m_{\A}}\left( \theta_{\R_{\A}' \A'} \right) \otimes \mathscr{E}_{\B'\to \B}^{m_{\B}}\left( \theta_{\R_{\B}' \B'} \right) \right)\right]
		\leq \varepsilon.
	\end{align}
\end{defn}

As in Section~\ref{sec:EA}, we use the encoder of the position-based coding (see also \cite{QWW18}) and the pretty-good measurement for decoding.
\begin{itemize}
	\item \textbf{Preparation.} $M_{\A}$-fold product states $\theta_{\R_{\A}'\A'} := \theta_{\R_{\A}\A}^{\otimes M_{\A}}$ are shared between Charlie and Alice, and 
	$M_{\B}$-fold product states $\theta_{\R_{\B}'\B'} := \theta_{\R_{\B}\B}^{\otimes M_{\B}}$ are shared between  Charlie and Bob.
	
	\item \textbf{Encoding.} Alice adopts encoding $\mathscr{E}_{\A^{M_{\A}}\to \A}^{m_{\A}} = \Tr_{\A^{\mathsf{M}_{\A}\backslash \{m_{\A}\}}}$ for each $m_{\A}\in\mathsf{M}_{\A}$, and Bob adopts encoding $\mathscr{E}_{\B^{M_{\B}}\to \B}^{m_{\B}} = \Tr_{\B^{\mathsf{M}_{\B}\backslash \{m_{\B}\}}}$ for each $m_{\B}\in\mathsf{M}_{\B}$.
	
	\item \textbf{Decoding.} Denote the corresponding channel output state for sending message $(m_{\A}, m_{\B})\in\mathsf{M}_{\A} \times \mathsf{M}_{\B} $ by 
	\begin{align} \notag
		\rho_{\R_{\A}^{M_{\A}} \R_{\B}^{M_{\B}} \C}^{m_{\A}, m_{\B}} := \theta_{\R_{\A}}^{\otimes (m_{\A}-1)} \otimes \theta_{\R_{\B}}^{\otimes (m_{\B}-1)} \otimes \mathscr{N}_{\A\B\to\C}\left(\theta_{\R_{\A}\A}\otimes \theta_{\R_{\B}\B}\right) \otimes \theta_{\R_{\A}}^{\otimes (M_{\A}-m_{\A})} \otimes \theta_{\R_{\B}}^{\otimes (M_{\B}-m_{\B})}.
	\end{align}
	Then, we use the associated pretty-good measurement:
	\begin{align} \notag
		 {\Pi}_{\R_{\A} \R_{\B} \C}^{m_{\A}, m_{\B}} = \frac{ \rho_{\R_{\A}^{M_{\A}} \R_{\B}^{M_{\B}} \C}^{m_{\A}, m_{\B}} }{ \sum_{(\bar{m}_{\A}, \bar{m}_{\B})} \rho_{\R_{\A}^{M_{\A}} \R_{\B}^{M_{\B}} \C}^{\bar{m}_{\A}, \bar{m}_{\B}} }, \quad (m_{\A}, m_{\B})\in\mathsf{M}_{\A} \times \mathsf{M}_{\B}  .
	\end{align}
\end{itemize}

Following the analysis presented in Section~\ref{sec:EA}, we immediately obtain the following result (without duplicating the proof).
\begin{theo}[A one-shot achievability bound for EA classical communication over quantum MAC] 
	Consider an arbitrary quantum multiple-access channel $\mathscr{N}_{\A \B\to \C}$. Then, there exists an $\left(M_{\A}, M_{\B},\eps\right)$-EA-code for $\mathscr{N}_{\A \B\to \C}$ such that for any $\theta_{\R_{\A}\A}$ and $\theta_{\R_{\B}\B}$,
	\begin{align}  \notag
		\varepsilon \leq \Tr\left[ \rho_{\R_{\A}\R_{\B}\mathsf{C}} \,\Exterior \left(
		(M_{\A}-1) \rho_{\R_{\A}}\otimes \rho_{\R_{\B}\mathsf{C}}
		+ (M_{\B}-1) \rho_{\R_{\B}}\otimes \rho_{\R_{\A}\mathsf{C}}
		+ (M_{\A}-1)(M_{\B}-1) \rho_{\R_{\A}}\otimes \rho_{\R_{\B}} \otimes \rho_{\mathsf{C}}
		\right)	\right],
	\end{align}
	where $\rho_{\R_{\A}\R_{\B}\mathsf{C}} :=\mathscr{N}_{\!\A\B\to\mathsf{C}}\left(\theta_{\R_{\A}\A}\otimes \theta_{\R_{\B}\B}\right)$.
\end{theo}

\subsection{Broadcast Channel Coding} \label{sec:broadcast}
In this section, we study entanglement-assisted and unassisted classical communication over quantum broadcast channels \cite{YHD11, SW15, Dup10, DHL10, RSW16, WDW17, AJW19a, AJW19, Cheng2021b}.

\begin{defn}[Classical-quantum broadcast channel coding]
	Let $\mathscr{N}_{\X\to \B\C}: x\mapsto \rho_{\B\C}^x$ be a classical-quantum broadcast channel.
	\begin{enumerate}[1.]
		\item Alice holds classical registers $\mathsf{M}_{\B}$, $\mathsf{M}_{\C}$, and $\X$.
		Bob holds quantum register $\B$ and Charlie holds quantum register $\C$.
		
		
		\item Alice performs an encoding $(m_{\B}, m_{\C}) \mapsto x(m_{\B}, m_{\C}) \in \X$ for any equiprobable message $(m_{\B}, m_{\C})\in\mathsf{M}_{\B} \times \mathsf{M}_{\C}$ she wanted to send to Bob and Charlie, respectively.
		
		\item The channel $\mathscr{N}_{\X\to \B\C}$ is applied on Alice's register $\X$, and outputs a marginal state on $\B$ at Bob and a marginal state on $\C$ at Charlie.
		
		\item Bob performs a decoding measurement $\{  {\Pi}_{\B}^{ m_{\B}}\}_{ m_{\B}\in \mathsf{M}_{\B} }$ on register $ \B$ to extract the sent message $m_{\B}$, and Charlie performs a decoding measurement $\{  {\Pi}_{\C}^{ m_{\C}}\}_{ m_{\C}\in \mathsf{M}_{\C} }$ on register $ \C$ to extract the sent message $m_{\C}$.
	\end{enumerate}
	
	An $(M_{\B}, M_{\C}, \eps_{\B}, \eps_{\C})$-code for $\mathscr{N}_{\X\to \B\C}: x\mapsto \rho_{\B\C}^{x}$ is a protocol such that $|\mathsf{M}_{\B}| = M_{\B}$,  $|\mathsf{M}_{\C}| = M_{\C}$, and the average error probabilities satisfy
	\begin{align} \notag
		\begin{dcases}
			\frac{1}{ M_{\B}M_{\C} } \sum_{ (m_{\B}, m_{\C})\in\mathsf{M}_{\B} \times \mathsf{M}_{\C} }	\Tr\left[ \left( \mathds{1}-   {\Pi}_{\B}^{ m_{\B}} \right) \rho_{\B}^{x(m_{\B},m _{\C})} \right]
			\leq \varepsilon_{\B}; &\\
			\frac{1}{ M_{\B}M_{\C} } \sum_{ (m_{\B}, m_{\C})\in\mathsf{M}_{\B} \times \mathsf{M}_{\C} }	\Tr\left[ \left( \mathds{1}-   {\Pi}_{\C}^{ m_{\C}} \right) \rho_{\C}^{x(m_{\B},m_{\C})} \right]
			\leq \varepsilon_{\C}. &\\
		\end{dcases}	
	\end{align}
\end{defn}

We follow the analysis proposed in Section~\ref{sec:Main} by considering communication from Alice to Bob and from Alice to Charlie, separately.
\begin{itemize}
	\item \textbf{Encoding.} 
	We introduce two auxiliary classical registers $\U$ and $\V$ for precoding.
	Message $m_{\B} \in \mathsf{M}_{\B}$ for Bob is encoded to a pre-codeword $u(m_{\B}) \in \U$ pairwise independently  according to some probability distribution $p_{\U}$; message $m_{\C}\in \mathsf{M}_{\C}$ for Charlie is  encoded to a pre-codeword $v(m_{\C}) \in \V$ pairwise independently according to some probability distribution $p_{\V}$.
	Then, Alice picks a (deterministic) encoding $(u(m_{\B}), v(m_{\C})) \mapsto x(u(m_{\B}), v(m_{\C})) \in \X$.
	
	\item \textbf{Decoding.} 
	Denoting (with a slight abuse of notation) the marginal channel output states at Bob by 
	$\rho_{\B}^{m_{\B}} := \frac{1}{M_{\C}} \sum_{m_{\C} \in \mathsf{M}_{\C}} \rho_{\B}^{x(u(m_{\B}), v(m_{\C}))}$, $m_{\B} \in \mathsf{M}_{\B}$, Bob performs the corresponding pretty-good measurement:
	\begin{align} \notag
		 {\Pi}_{\B}^{ m_{\B}} := \frac{ \rho_{\B}^{m_{\B}} }{ \sum_{\bar{m}_{\B} \in \mathsf{M}_{\B} } \rho_{\B}^{\bar{m}_{\B}} }, \quad m_{\B} \in \mathsf{M}_{\B}.
	\end{align}
	Similarly, denoting (with a slight abuse of notation again) the marginal channel output states at Charlie by 
	$\rho_{\C}^{m_{\C}} := \frac{1}{M_{\B}} \sum_{m_{\B} \in \mathsf{M}_{\B}} \rho_{\C}^{x(u(m_{\B}), v(m_{\C}))}$, $m_{\C} \in \mathsf{M}_{\C}$, Charlie performs the corresponding pretty-good measurement:
	\begin{align} \notag
		 {\Pi}_{\C}^{ m_{\C}} := \frac{ \rho_{\C}^{m_{\C}} }{ \sum_{\bar{m}_{\C} \in \mathsf{M}_{\C} } \rho_{\C}^{\bar{m}_{\C}} }, \quad m_{\C} \in \mathsf{M}_{\C}.
	\end{align}
\end{itemize}

Then, we obtain the following result (without duplicating the proof).
\begin{theo}[A one-shot achievability bound for classical-quantum broadcast channel coding]\label{theo:broadcast_cq} 
	Consider an arbitrary classical-quantum broadcast channel $\mathscr{N}_{\X\to \B\C}: x \mapsto \rho_{\B\C}^{x}$. Then, there exists an $\left(M_{\B}, M_{\C},\eps_{\B}, \eps_{\C}\right)$-code for $\mathscr{N}_{\X\to \B\C}$ such that for any probability distributions $p_{\U}$ and $p_{\V}$, and (deterministic) encoding $(u,v)\mapsto x(u,v)$,
	\begin{align}  \notag
		\varepsilon_{\B} \leq \Tr\left[ \rho_{\U\mathsf{B}} \wedge 
		(M_{\B}-1) \rho_{\U}\otimes \rho_{\B} \right]; \\
		 \notag
		\varepsilon_{\C} \leq \Tr\left[ \rho_{\U\mathsf{C}} \wedge 
		(M_{\C}-1) \rho_{\V}\otimes \rho_{\C} \right],
	\end{align}
	where $\rho_{\U\V\B\mathsf{C}} := \sum_{(u,v)\in \U\times \V} p_{\U}(u) |u\rangle\langle u|_{\U} \otimes p_{\V}(v) |v\rangle\langle v|_{\V} \otimes \rho_{\B\C}^{x(u,v)}$.
\end{theo}
Note that Theorem~\ref{theo:broadcast_cq} extends to classical communication over quantum broadcast channels straightforwardly (see Table~\ref{table:Table}).

\begin{remark}
	Theorem~\ref{theo:broadcast_cq} employs independent pre-coding $p_{\U} \otimes p_{\V}$ and hence it provides a simple and clean one-shot achievability bound. We note that such a scenario was considered by Anshu, Jain, and Warsi \cite{AJW19}. Hence, Theorem~\ref{theo:broadcast_cq} improves on the achievability in Ref.~\cite[Theorem 13]{AJW19}.
	
	Generally, Alice can adopt a joint pre-coding $p_{\U\V}$, which is called \emph{Marton's inner bound} in the classical setting \cite{Mar79, GvdM81} (see also the studies in the quantum setting \cite{YHD11, SW15, Dup10, DHL10, RSW16, AJW19a}); however, it would require additional covering techniques.
	We leave this for future work \cite{CG22b}.
\end{remark}

\medskip
Next, we present entanglement-assisted classical communication over quantum broadcast channels.

\begin{defn}[Entanglement-assisted classical communication over quantum broadcast channels]
	Let $\mathscr{N}_{\A\to \B\C}$ be a quantum broadcast channel.
	\begin{enumerate}[1.]
		\item 
		Alice holds classical registers $\mathsf{M}_{\B}$ and $\mathsf{M}_{\C}$, and
		quantum registers $\A_{\B}'$ and $\A_{\C}'$.
		Bob holds quantum registers $\B$ and $\R_{\B}'$.
		Charlie holds quantum registers $\C$ and $\R_{\C}'$.
		
		\item 
		Bob and Alice share an arbitrary state $\theta_{\R_{\B}'\A_{\B}'}$.
		Charlie and Alice share an arbitrary state $\theta_{\R_{\C}'\A_{\C}'}$.
		
		\item Alice performs an encoding $\mathscr{E}^{m_{\B}, m_{\C}}_{\A_{\B}' \A_{\C}' \to \A}$ on $\theta_{\R_{\B}'\A_{\B}'} \otimes \theta_{\R_{\C}'\A_{\C}'}$
		for any equiprobable message $(m_{\B}, m_{\C})\in\mathsf{M}_{\B} \times \mathsf{M}_{\C}$ she wanted to send to Bob and Charlie, respectively.
		
		\item The channel $\mathscr{N}_{\A\to \B\C}$ is applied on Alice's register $\A$, and outputs a marginal state on $\B$ at Bob and a marginal state on $\C$ at Charlie.
		
		\item Bob performs a decoding measurement $\{  {\Pi}_{\R_{\B}'\B}^{ m_{\B}}\}_{ m_{\B}\in \mathsf{M}_{\B} }$ on register $ \B$ to extract the sent message $m_{\B}$, and Charlie performs a decoding measurement $\{  {\Pi}_{\R_{\C}'\C}^{ m_{\C}}\}_{ m_{\C}\in \mathsf{M}_{\C} }$ on register $ \C$ to extract the sent message $m_{\C}$.
	\end{enumerate}
	
	An $(M_{\B}, M_{\C}, \eps_{\B}, \eps_{\C})$-EA-code for $\mathscr{N}_{\A\to \B\C}$ is a protocol such that $|\mathsf{M}_{\B}| = M_{\B}$,  $|\mathsf{M}_{\C}| = M_{\C}$ and the average error probabilities satisfy
	\begin{align}  \notag
		\begin{dcases}
			\frac{1}{ M_{\B}M_{\C} } \sum_{ (m_{\B}, m_{\C})\in\mathsf{M}_{\B} \times \mathsf{M}_{\C} }	\Tr\left[ \left( \mathds{1}-   {\Pi}_{\R_{\B}'\B}^{ m_{\B}} \right) \mathscr{N}_{\A\to\B\C} \circ \mathscr{E}^{m_{\B}, m_{\C}}_{\A_{\B}' \A_{\C}' \to \A} \left( \theta_{\R_{\B}'\A_{\B}'} \otimes \theta_{\R_{\C}'\A_{\C}'} \right) \right]
			\leq \varepsilon_{\B}; & \\
			\frac{1}{ M_{\B}M_{\C} } \sum_{ (m_{\B}, m_{\C})\in\mathsf{M}_{\B} \times \mathsf{M}_{\C} }	\Tr\left[ \left( \mathds{1}-   {\Pi}_{\R_{\C}'\C}^{ m_{\C}} \right)\mathscr{N}_{\A\to\B\C} \circ \mathscr{E}^{m_{\B}, m_{\C}}_{\A_{\B}' \A_{\C}' \to \A} \left( \theta_{\R_{\B}'\A_{\B}'} \otimes \theta_{\R_{\C}'\A_{\C}'} \right) \right]
			\leq \varepsilon_{\C}. &\\
		\end{dcases}	
	\end{align}
\end{defn}

We follow the similar analysis as above by considering communication from Alice to Bob and from Alice to Charlie, separately.
Again, we also employ the encoder of the position-based coding as in Refs.~\cite[Theorem 6]{AJW19a} and \cite[Theorem 6]{AJW19}, and apply the pretty-good measurement for decoding.
\begin{itemize}
	\item \textbf{Preparation.} 	
	Consider an arbitrary state $\theta_{\R_{\B} \R_{\C} \A}$ satisfying $\theta_{\R_{\B} \R_{\C}} = \theta_{\R_{\B}} \otimes \theta_{\R_{\C}}$. Let $\theta_{\R_{\B}\A_{\B} }$ be a purified state of $\theta_{\R_{\B}}$ and let $\theta_{\R_{\C} \A_{\C} }$ be a purified state of $\theta_{\R_{\C}}$.
	Then, Alice and Bob share the $M_{\B}$-fold product states
	$\theta_{\R_{\B}' \A_{\B}'} := \theta_{\R_{\B}\A_{\B} }^{\otimes M_{\B}}$,
	and Alice and Charlie share the $M_{\C}$-fold product states
	$\theta_{\R_{\C}' \A_{\C}'} := \theta_{\R_{\C}\A_{\C} }^{\otimes M_{\C}}$.
	
	\item \textbf{Encoding.} 
	For each $(m_{\B}, m_{\C} ) \in \mathsf{M}_{\B} \times \mathsf{M}_{\C}$, Alice sends the $(m_{\B}, m_{\C} )$-th registers $\A_{\B}$ and  $\A_{\C}$ and then performs an isometry transformation $\mathcal{V}_{\A_{\B}\A_{\C}\to \mathsf{E} \A}$ such that $\mathcal{V}_{\A_{\B}\A_{\C}\to \mathsf{E} \A}\left(\theta_{\R_{\B} \A_{\B}} \otimes \theta_{\R_{\C} \A_{\C}}\right)$ equals a purified state $\theta_{\mathsf{E}\R_{\B} \R_{\C} \A}$ of $\theta_{\R_{\B} \R_{\C} \A}$ with an additional purifying register $\mathsf{E}$.
	The overall encoding map is then $\mathscr{E}^{m_{\B}, m_{\C}}_{\A_{\B}^{M_{\B}} \A_{\C}^{M_{\C}} \to \A} = \Tr_{\mathsf{E}} \circ
	\mathcal{V}_{\A_{\B}\A_{\C}\to \mathsf{E} \A} \circ
	\Tr_{ \A_{\B}^{M_{\B}\backslash \{ m_{\B }\}} \A_{\C}^{M_{\C}\backslash \{ m_{\C }\}}  }$.
	%
	
	\item \textbf{Decoding.} 
	Denoting (with a slight abuse of notation) the marginal channel output states at Bob by 
	\begin{align} \notag
		\rho_{\R_{\B}^{M_\B} \B}^{m_{\B}} := \theta_{\R_{\B}}^{\otimes (m_{\B}-1)} \otimes \Tr_{\C\mathsf{E}\R_{\C} }  \circ
		\mathscr{N}_{\A\to\B\C} \circ \mathcal{V}_{\A_{\B}\A_{\C}\to \mathsf{E} \A} \left( \theta_{\R_{\B} \A_{\B}} \otimes \theta_{\R_{\C} \A_{\C}} \right) \otimes \theta_{\R_{\B}}^{\otimes (M_{\B} - m_{\B})}, \quad m_{\B} \in \mathsf{M}_{\B}.
	\end{align} 
	Bob performs the corresponding pretty-good measurement:
	\begin{align} \notag
		 {\Pi}_{\R_{\B}^{M_\B}\B}^{ m_{\B}} := \frac{ \rho_{\R_{\B}^{M_\B}\B}^{m_{\B}} }{ \sum_{\bar{m}_{\B} \in \mathsf{M}_{\B} } \rho_{\R_{\B}^{M_\B}\B}^{\bar{m}_{\B}} }, \quad m_{\B} \in \mathsf{M}_{\B}.
	\end{align}
	Similarly, denoting (with a slight abuse of notation again) the marginal channel output states at Charlie by 
	\begin{align} \notag
		\rho_{\R_{\C}^{M_\C} \C}^{m_{\C}} := \theta_{\R_{\C}}^{\otimes (m_{\C}-1)} \otimes \Tr_{\B\mathsf{E}\R_{\B} } \circ
		\mathscr{N}_{\A\to\B\C} \circ \mathcal{V}_{\A_{\B}\A_{\C}\to \mathsf{E} \A} \left( \theta_{\R_{\B} \A_{\B}} \otimes \theta_{\R_{\C} \A_{\C}} \right) \otimes \theta_{\R_{\C}}^{\otimes (M_{\C} - m_{\C})}, \quad m_{\C} \in \mathsf{M}_{\C}.
	\end{align} 
	Charlie performs the corresponding pretty-good measurement:
	\begin{align} \notag
		 {\Pi}_{\R_{\C}^{M_\C}\C}^{ m_{\C}} := \frac{ \rho_{\R_{\C}^{M_\C}\C}^{m_{\C}} }{ \sum_{\bar{m}_{\C} \in \mathsf{M}_{\C} } \rho_{\R_{\C}^{M_\C}\C}^{\bar{m}_{\C}} }, \quad m_{\C} \in \mathsf{M}_{\C}.
	\end{align}
\end{itemize}

Then, we obtain the following result (without duplicating the proof).
\begin{theo}[A one-shot achievability bound for EA classical communication over quantum broadcast channels]\label{theo:broadcast_EA} 
	Consider an arbitrary quantum broadcast channel $\mathscr{N}_{\A\to \B\C}$. Then, there exists an $\left(M_{\B},  M_{\C},\eps_{\B}, \eps_{\C}\right)$-EA-code for $\mathscr{N}_{\A\to \B\C}$ such that for any $\theta_{\R_{\B} \R_{\C} \A}$ satisfying $\theta_{\R_{\B} \R_{\C}} = \theta_{\R_{\B}} \otimes \theta_{\R_{\C}}$,
	\begin{align}  \notag
		\varepsilon_{\B} \leq \Tr\left[ \rho_{\R_{\B} \B} \wedge 
		(M_{\B}-1) \rho_{\R_{\B}}\otimes \rho_{\B} \right]; \\
		\notag
		\varepsilon_{\C} \leq \Tr\left[ \rho_{\R_{\C} \mathsf{C}} \wedge 
		(M_{\C}-1) \rho_{\R_{\C}}\otimes \rho_{\C} \right],
	\end{align}
	where $\rho_{\R_{\B} \R_{\C} \B\C} := \mathscr{N}_{\A\to\B\C} \left( \theta_{\R_{\B} \R_{\C} \A} \right) $.
\end{theo}

\subsection{Communication with Casual State Information at The Encoder} \label{sec:SI}
In this section, we consider entanglement-assisted and unassisted classical communication over quantum channels with causal channel state information available at the encoder \cite{Dup10, AJW19a, AJW19}, which is the quantum generalization of the classical Gel'fand--Pinsker channel \cite{GP80}; see also Ref.~\cite[\S 7]{GK11}.

\begin{defn}[Classical-quantum channel coding with causal state information]
	Let $\mathscr{N}_{\X\mathsf{S} \to \B}: (x,s)\mapsto \rho_{\B}^{x,s}$ be a  classical-quantum channel parameterized by $s\in\mathsf{S}$ and assume that a channel state $p_{\mathsf{S}}$ is available at the encoder.
	\begin{enumerate}[1.]
		\item 
		The channel holds a classical register $\mathsf{S}$.
		Alice holds classical registers $\mathsf{M}$, $\X$, and $\mathsf{S}'$ (an identical copy of $\mathsf{S}$).
		Bob holds a quantum register $\B$.
		
		
		\item 
		Given a realization of the channel state $s\in\mathsf{S}'$,
		an encoding $(m,s) \mapsto x(m,s)$ maps an equiprobable message $m \in \mathsf{M}$  to a codeword in $\X$.
		
		\item The classical-quantum channel $\mathscr{N}_{\X\mathsf{S}\to \B}$ is applied on Alice's register $\X$ given the realization of the channel state $s\in\mathsf{S}$, and outputs a state on $\B$ at Bob.
		(The realizations $s\in\mathsf{S}'$ at Alice and $s\in\mathsf{S}$ at the channel is identical.)
		
		\item Bob performs a decoding measurement described by a measurement $\{  {\Pi}_{\B}^{m} \}_{m\in\mathsf{M}}$ on $\B$ to extract the sent message $m\in\mathsf{M}$.
		(Note that Bob is aware of the mathematical description of the probability distribution $p_{\mathsf{S}}$ but not of the realization of a specific $s\in\mathsf{S}$.)
	\end{enumerate}
	
	An $(M, \eps)$-code for $\mathscr{N}_{\X\mathsf{S}\to \B}$ with state information $p_{\mathsf{S}}$ is a protocol such that  $|\mathsf{M}| = M$ and the average error probability satisfies
	\begin{align} \notag
		\frac{1}{M} \sum_{(m,s)\in \mathsf{M}\times \mathsf{S}} p_{\mathsf{S}}(s)	\Tr\left[ \rho_{\B}^{x(m,s),s} \left(\mathds{1}_{\B} -   {\Pi}_{\B}^{m} \right) \right]
		\leq \varepsilon.
	\end{align}
\end{defn}

We adopt the standard random coding strategy as follows.
\begin{itemize}
	\item \textbf{Encoding.} 
	We introduce an auxiliary classical register $\U$ for precoding.
	The message $m\in\mathsf{M}$ is mapped to a pre-codeword $u\in\U$ pairwise independently according to $p_{\U}$.
	With the realization of the channel state $s\in\mathsf{S}'$, Alice picks a (deterministic) encoding $(u(m), s) \mapsto x(u(m), s)\in\X$.
	
	\item \textbf{Decoding.} At the receiver, 
	denoting (with a slight abuse of notation)
	the channel output state by $\rho_{\B}^{m} := \sum_{s\in\mathsf{S}} p_{\mathsf{S}} (s) \rho_{\B}^{x(u(m),s), s}$ for each $m\in\mathsf{M}$,
	Bob performs the corresponding pretty-good measurement:
	\begin{align} \notag
		 {\Pi}^m_{\B} := \frac{\rho_{\B}^m }{\sum_{\bar{m}\in\mathsf{M}} \rho_{\B}^{\bar m}}, \quad \forall m\in\mathsf{M}.
	\end{align}
\end{itemize}

Then, following the analysis in Section~\ref{sec:Main}, we obtain a one-shot achievability bound (without duplicating the proof).
\begin{theo}[A one-shot achievability bound classical-quantum channel coding with casual state information] \label{theo:state_cq} 
	Consider an arbitrary classical-quantum channel $\mathscr{N}_{\X\mathsf{S} \to \B}: (x,s)\mapsto \rho_{\B}^{x,s}$ with state information $p_{\mathsf{S}}$.
	Then, there exists an $(M,\eps)$-code for $\mathscr{N}_{\X\mathsf{S} \to \B}$ with state information $p_{\mathsf{S}}$ such that for any probability distribution $p_{\U}$ and (deterministic) map $(u,s)\mapsto x(u,s)$,
	\begin{align} \notag
		\eps
		\leq \Tr\left[ \rho_{\U\B} \wedge (M-1) \rho_{\U}\otimes \rho_{\B} \right].
	\end{align}
	Here, $\rho_{\U\B} := \sum_{(u,s)\in\U\times \mathsf{S}} p_{\U}(u) |u\rangle \langle u| \otimes p_{\mathsf{S}}(s) \rho_{\B}^{x(u,s),s}$.
\end{theo}
The result extends to classical communication over quantum channels $\mathscr{N}_{\A\mathsf{S} \to \B}$ with quantum state information $\vartheta_{\mathsf{S}}$ (see also the definition below in the entanglement-assisted setting). We refer the reader to Table~\ref{table:Table} for the corresponding results.

\begin{remark}
	In the precoding phase of Theorem~\ref{theo:state_cq}, the chosen probability distribution $p_{\U}$ is independent of the channel state $p_{\mathsf{S}}$. This coding strategy is called \emph{casual state information} at the encoder \cite[\S 7.5]{GK11} and it was also studied in the quantum setting \cite[Theorem 12]{AJW19}.
	(Hence, Theorem~\ref{theo:state_cq} improves on \cite[Theorem 12]{AJW19}.)
	
	For the scenario of \emph{non-casual state information} at the encoder, the pre-coding probability distribution on $\U$ may be correlated with the channel state $p_{\mathsf{S}}$ \cite[\S 4]{Dup10}, \cite[\S V]{AJW19a}. This would require additional techniques. We leave this for future work \cite{CG22b}.
\end{remark}

\medskip

Next, we move on to the entanglement-assisted setting.
\begin{defn}[Entanglement-assisted classical communication over quantum channels with causal state information]
	Let $\mathscr{N}_{\A\mathsf{S} \to \B}$ be a quantum channel with a channel state $\vartheta_{\mathsf{S}}$.
	\begin{enumerate}[1.]
		\item 
		The channel holds a quantum register $\mathsf{S}$.
		Alice holds a classical register $\mathsf{M}$ and quantum registers $\A'$ and $\mathsf{S}'$.
		Bob holds quantum registers $\R'$ and $\B$.
		
		\item A resource of arbitrary state $\theta_{\R' \A'}$ is shared between Bob and Alice.
		Moreover, let $\vartheta_{\mathsf{S}'\mathsf{S}}$ be a purified state of $\vartheta_{\mathsf{S}}$ shared between Alice and the channel.
		
		\item Alice performs an encoding $\mathscr{E}^m_{\A'\mathsf{S}'\to \A}$ on registers $\A'$ and $\mathsf{S}'$ of state
		$\theta_{\R' \A'}\otimes \vartheta_{\mathsf{S}'\mathsf{S}}$ for any equiprobable message  $m \in \mathsf{M}$.
		
		\item The quantum channel $\mathscr{N}_{\A\mathsf{S}\to \B}$ is applied on Alice's register $\A$ and the register $\mathsf{S}$ of the channel state, and outputs a state on $\B$ at Bob.
		
		\item Bob performs a decoding measurement $\{  {\Pi}_{\R'\B}^{m} \}_{m\in\mathsf{M}}$ on $\R'\B$ to extract the sent message $m$.
		(Note that Bob is aware of the mathematical description of the channel state $\vartheta_{\mathsf{S}}$ but Bob cannot access to the channel's register $\mathsf{S}$ nor
		operate on such a channel state.)
	\end{enumerate}
	
	An $(M, \eps)$-EA-code for $\mathscr{N}_{\A\mathsf{S}\to \B}$ with state information $\vartheta_{\mathsf{S}}$ is a protocol such that $|\mathsf{M}| = M$ and the average error probability satisfies
	\begin{align} \notag
		\frac{1}{ M } \sum_{ m\in\mathsf{M} }	
		\Tr\left[ \left( \mathds{1}-   {\Pi}_{\R'\B}^{ m} \right) \mathscr{N}_{\A\mathsf{S}\to\B} \circ \mathscr{E}^{m}_{\A' \mathsf{S}' \to \A} \left(\theta_{\R' \A'}\otimes \vartheta_{\mathsf{S}'\mathsf{S}} \right) \right]
		\leq \varepsilon.
	\end{align}
\end{defn}

We also use the encoder of the position-based coding as in \cite[Theorem 5]{AJW19a}, \cite[Theorem 4]{AJW19}, and the pretty-good measurement for decoding.
\begin{itemize}
	\item \textbf{Preparation.} 	
	Consider an arbitrary state $\theta_{\R \A \mathsf{S}}$ satisfying $\theta_{\R \mathsf{S}} = \theta_{\R} \otimes \vartheta_{\mathsf{S}}$. 
	Let $\theta_{\R \U}$ be a purified state of $\theta_{\R}$ with an additional quantum register $\U$ at Alice.
	Then, Alice and Bob share $M$-fold product states
	$\theta_{\R' \A'} := \theta_{\R \U }^{\otimes M }$.
	
	\item \textbf{Encoding.} 
	For each $m  \in \mathsf{M}$, Alice sends the $m$-th register of $\U$ and applies an isometry transformation $\mathcal{V}_{\mathsf{S}' \U \to \mathsf{E} \A}$ such that $\mathcal{V}_{\mathsf{S}' \U \to \mathsf{E} \A}\left(\theta_{\R \U} \otimes \vartheta_{\mathsf{S}' \mathsf{S}} \right)$ equals a purified state $\theta_{\mathsf{E}\R \A \mathsf{S}} $ of $\theta_{\R \A \mathsf{S}}$ with an additional purifying register $\mathsf{E}$.
	Then, the overall encoding map is  $\mathscr{E}^{m}_{\U^{M}  \to \A} =  \Tr_{\mathsf{E} } \circ \mathcal{V}_{\mathsf{S}' \U \to \mathsf{E} \A} \circ
	\Tr_{ \U^{M \backslash \{ m\} } }$.
	
	\item \textbf{Decoding.}
	Denoting (with a slight abuse of notation again) the marginal channel output states at Bob by 
	\begin{align} \notag
		\rho_{\R^{M} \B}^{m} := \theta_{\R}^{\otimes (m-1)} \otimes 
		\mathscr{N}_{\A\mathsf{S}\to\B} \circ \Tr_{\mathsf{E} } \circ \mathcal{V}_{\mathsf{S}' \U \to \mathsf{E} \A} \left( \theta_{\R \U} \otimes \vartheta_{\mathsf{S}' \mathsf{S}} \right)  \otimes \theta_{\R}^{\otimes (M - m)}, \quad m \in \mathsf{M}.
	\end{align} 
	Bob performs the corresponding pretty-good measurement:
	\begin{align} \notag
		 {\Pi}_{\R^{M} \B}^{ m } := \frac{ \rho_{\R^{M}\B}^{m} }{ \sum_{\bar{m} \in \mathsf{M} } \rho_{\R^{M}\B}^{\bar{m}} }, \quad m \in \mathsf{M}.
	\end{align}
\end{itemize}

Then, applying the analysis in Section~\ref{sec:EA}, we obtain the following result (without duplicating the proof).
\begin{theo}[A one-shot achievability bound for EA classical communication over quantum channels with casual state information] 
	Consider an arbitrary quantum channel $\mathscr{N}_{\A\mathsf{S} \to \B}$ with state information $\vartheta_{\mathsf{S}}$.
	Then, there exists an $(M,\eps)$-EA-code for $\mathscr{N}_{\A\mathsf{S} \to \B}$ with state information $\vartheta_{\mathsf{S}}$
	such that for any $\theta_{\R\A\mathsf{S}}$ satisfying $\theta_{\R\mathsf{S}} = \theta_{\R} \otimes \vartheta_{\mathsf{S}}$,
	\begin{align} \notag
		\eps
		\leq \Tr\left[ \rho_{\R \B } \wedge (M-1) \rho_{\R}\otimes \rho_{\B} \right].
	\end{align}
	Here, $\rho_{\R \B } := \mathscr{N}_{\A\mathsf{S}\to\B } \left( \theta_{\R\A\mathsf{S}} \right) $.
\end{theo}

\section{Conclusions} \label{sec:conclusions}

We propose a conceptually simple analysis of one-shot achievability for processing classical information in quantum systems.
The key point of this work is to demonstrate that the pretty-good measurement directly translates the conditional error probability of a multiple-state discrimination to the error of discriminating a state against the rest. 
This can be viewed as the \emph{one-versus-rest} strategy, and, hence, the pretty-good measurement effectively resembles the quantum union bound in quantum coding design and analysis. 
We obtain an elegant closed-form expression of the average error probability for classical communication over quantum channels with standard random coding and basic properties of the noncommutative minimal.
The proposed method is tight in the sense that it gives tighter one-shot achievability bounds for channels without further constraints such as symmetry (see Section~\ref{sec:comparison}), and it unifies the asymptotic derivations in the large, small, and moderate deviation regimes (Figure~\ref{fig:flow})
Moreover, the analysis naturally applies to various quantum information-theoretic tasks (see Section~\ref{sec:application} and Table~\ref{table:Table}).
This manifests that the proposed method may be considered a fundamental and unified approach to deriving achievable error bounds in quantum information theory. 
In this regard, we may term it as a \emph{one-shot quantum packing lemma} (Theorem~\ref{lemm:packing}).
Essentially, the proposed analysis can be applied to and can sharpen almost all existing results that rely on the Hayashi--Nagaoka operator inequality \cite[Lemma 2]{HN03}; see e.g.~Refs.~\cite{HN03, WR13, DL15, AJW19a, AJW19, Wil17b, QWW18, CHDH-2018, CHDH2-2018}.
The improvement is crucial because every bit counts in a one-shot bound because weaker one-shot bounds could be trivial in certain practical scenarios.
Hence, we expect more applications of the proposed analysis to emerge.
As for the computational aspect, we point out a recently developed quantum algorithm for implementing the pretty-good measurement \cite{GLM+22}.
Last but not least, the proposed achievability analysis also applies to the converse analysis for the covering-type problems \cite{CG22, SGC22a, SGC22b, CG22b}.

We list some open problems along these research directions.
\begin{itemize}
	\item Standard second-order analysis of the achievable coding rate consists of two steps: (i) reducing the underlying task to binary quantum hypothesis testing and (ii) an asymptotic expansion of the quantum hypothesis-testing divergence \cite{TH13, Li14, DPR16, OMW19, PW20}.
	The proposed approach simplifies step (i), and hence, now the bottleneck lies in step (ii).
	Specifically, we conjecture a third-order achievable expansion of the quantum hypothesis-testing divergence in Eq.~\eqref{eq:third_HT}. If Eq.~\eqref{eq:third_HT} holds, then Proposition~\ref{prop:rate} will lead to the best possible third-order coding rate for general classical-quantum channels without further assumptions, i.e.~
	\begin{align} \notag
		\log M \geq n I(\X{\,:\,}\B)_\rho + \sqrt{n V(\X{\,:\,}\B)_\rho } \,   {\Phi}^{-1}(\eps) - O(1).
	\end{align}
	
	\item To the best of our knowledge, conjectures made by Mosonyi and Audenaert \cite[Conjectrue 4.2]{AM14}, and Qi, Wang, and Wilde \cite[Conjecture 18]{QWW18} are still open.
	If they were true, then the established one-shot achievability bound for classical-quantum multiple-access channels (Theorem~\ref{theo:MAC_cq}) will directly imply an upper bound on the error probability by a sum of exponential decays \footnote{Otherwise, one could invoke \cite[Theorem 4.3]{AM14}. However, the resulting error exponents will be weakened by one-half.}.

	\item Can the established strengthened one-shot bound in Eq.~\eqref{eq:tighter} provide a simple proof for the upper bound on the strong converse exponent of classical-quantum channel coding (see Refs.~\cite[Section 5.4]{MO17}, \cite[Proposition IV.5]{MO18}, and \cite[Proposition VI.2.]{CHDH-2018})?
	
	\item In the classical setting (where all the channel output states $\{\rho_{\B}^x\}_{x\in\X}$ mutually commute), the derived bound in Theorem~\ref{theo:main} is still weaker than the \emph{random-coding union bound} proved by Polyanskiy, Poor, and Verd\'u \cite[Theorem 16]{PPV10}, i.e.~the latter implies Theorem~\ref{theo:main} in the commuting case.
	Nevertheless, we remark that Eq.~\eqref{eq:main} can already yield Gallager's random-coding bound in the commuting case. Hence, there are technical noncommutativitiy difficulties that remain to be solved.
	
	\item It is not clear whether the derived bound in Theorem~\ref{theo:main} can lead to Gallager's random-coding exponent \cite{Gal65, Gal68} for general classical-quantum channels.
	It is interesting to see if the recent techniques proposed by Dupuis \cite{Dup21} (see also \cite{CG22}) and Renes \cite{Ren22} can be combined with the proposed analysis.
\end{itemize}

%
%
%
%
%
%
%

\appendix
\section{A Trace Inequality} \label{sec:proof_trace}

This section is devoted to proving the trace inequality in Fact~\ref{fact:lower}:
\begin{align} \label{eq:trace_inequality}
	\Tr[A\wedge B] \geq \Tr\left[ A \frac{B}{A+B} \right].
\end{align}
Note that a special case of Eq.~\eqref{eq:trace_inequality} for $\Tr[A] = \Tr[B] = 1$ is a consequence of Barnum and Knill's theorem \cite{BK02} (see also \cite[Theorem 3.10]{Wat18}).
The result has been extended to the case of general positive semi-definite $A$ and $B$ in the author's previous work \cite[Lemma 3]{SGC22a}.
For completeness, we provide a strengthened proof in the following Lemma~\ref{lemm:trace} that implies the desired Eq.~\eqref{eq:trace_inequality} by extending Sason--Verd{\'u} \cite{SV18} and Renes' idea \cite{Ren17b}.

\begin{lemm}[A trace inequality for noncommutative parallel sum] \label{lemm:trace}
	Let $A$ and $B$ be arbitrary positive semi-definite operators satisfying $\Tr\left[A+B\right]>0$.
	Then, the following holds,
	\begin{align}
		\Tr\left[ A \frac{B}{A+B} \right] &\leq \frac{\Tr\left[ A\vee B \right] \cdot \Tr\left[ A\wedge B \right]}{\Tr[A+B]} \label{eq:tighter_PGM} \\
		\notag
		&\leq \Tr\left[ A\wedge B \right].
	\end{align}
	Here, $A\vee B := \frac{A+B+|A-B|}{2}$ and $A\wedge B := \frac{A+B-|A-B|}{2}$.
\end{lemm}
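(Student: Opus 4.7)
The second inequality is immediate from Fact~\ref{fact:closed-form}: taking the trace of the identity $A\vee B + A\wedge B = A+B$ gives $\Tr[A\vee B] \leq \Tr[A+B]$, since $\Tr[A\wedge B]\geq 0$ because $M=0$ is a feasible lower bound in the defining variational problem for $A\wedge B$. Dividing by $\Tr[A+B]$ and multiplying by $\Tr[A\wedge B]$ yields the bound.

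For the first inequality, my plan is to recognize it as a quadratic refinement of the Barnum--Knill bound that underlies Fact~\ref{fact:lower}. Setting $\bar\rho = (A+B)/\Tr[A+B]$ and taking the two-outcome pretty-good measurement $\Pi_i = \bar\rho^{-1/2}(p_i\rho_i)\bar\rho^{-1/2}$ for the weighted states $(A,B)$ yields $\Tr[A\cdot B/(A+B)] = \tfrac{1}{2}\Tr[A+B]\cdot P_\textnormal{PGM}$ (by a direct computation using cyclicity and self-adjointness), while the Holevo--Helstrom theorem identifies $\Tr[A\wedge B]/\Tr[A+B] = P_H$. The target is therefore $P_\textnormal{PGM} \leq 2P_H(1-P_H)$, which sharpens the classical Barnum--Knill bound $P_\textnormal{PGM} \leq 2P_H$ by the factor $1-P_H = \Tr[A\vee B]/\Tr[A+B]$. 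The commutative case follows from Jensen's inequality for the concave function $g(t) = t/(1+t)$: with $s_i = \min(a_i,b_i)$ and $v_i = \max(a_i,b_i)$, the bound reads $\sum_i v_i g(s_i/v_i) \leq (\sum_j v_j)\,g(\sum_i s_i/\sum_j v_j)$, which is Jensen applied with probability weights $v_i/\sum_j v_j$ at points $s_i/v_i$.

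To extend to noncommuting $A,B$, my plan is to combine a Hilbert--Schmidt Cauchy--Schwarz step $(\Tr[S])^2 \leq \Tr[SRSR]\,\Tr[A+B]$ (with $S = A\wedge B$ and $R = (A+B)^{-1/2}$, obtained by writing $\Tr[S] = \Tr[(A+B)^{-1/4}S(A+B)^{-1/4}\cdot (A+B)^{1/2}]$ and applying $|\Tr[XY]|^2 \leq \Tr[X^2]\Tr[Y^2]$ for self-adjoint $X,Y$) with the algebraic identity
\[
\Tr[A\wedge B] - \Tr[A\cdot B/(A+B)] = \Tr[SRSR] - \Tr[PRNR].
\]
Here I use the Jordan-type decomposition $A = S+P$, $B = S+N$ with $P = (A-B)_+$, $N = (A-B)_-$, and the crucial orthogonality $PN = NP = 0$; the identity itself is verified by expanding $\Tr[A\cdot B/(A+B)] = \Tr[(S+P)R(S+N)R]$ and $\Tr[S] = \Tr[SR(A+B)R] = 2\Tr[SRSR] + \Tr[SR(P+N)R]$, then cancelling $\Tr[SRPR]$ against $\Tr[PRSR]$ by trace cyclicity. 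In the commutative case the cross term $\Tr[PRNR]$ vanishes via $PN=0$, and Cauchy--Schwarz closes the bound to recover the scalar Jensen inequality.

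The main obstacle is precisely this cross term $\Tr[PRNR]$, which can be strictly positive in the noncommutative setting and thus prevents a naive Cauchy--Schwarz from closing the gap (one can check by small $2\times 2$ examples that $\Tr[SRSR]$ alone overshoots $\Tr[A\wedge B] - \Tr[A\cdot B/(A+B)]$ by exactly $\Tr[PRNR]$). My plan to overcome this is to sharpen the CS step via a noncommutative Jensen-type argument exploiting the operator-concavity of $g(t) = t/(1+t)$ applied to $\tilde A = (A+B)^{-1/2}A(A+B)^{-1/2} \in [0,I]$, in the spirit of Sason--Verd{\'u} \cite{SV18} and Renes \cite{Ren17b}; the operator-concavity of $g$ (equivalently, operator convexity of $1/(1+t)$) should furnish the compensating contribution needed to absorb $\Tr[PRNR]$ at the trace level and deliver $P_\textnormal{PGM} \leq 2P_H(1-P_H)$.
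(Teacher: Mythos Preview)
Your treatment of the second inequality is fine and matches the paper's. For the first inequality, however, you have correctly set up the problem and even isolated the exact obstruction---the cross term $\Tr[PRNR]$ that survives in the noncommutative case---but you do not actually overcome it. The final paragraph is a plan, not an argument: invoking ``operator concavity of $g(t)=t/(1+t)$ applied to $\tilde A=(A+B)^{-1/2}A(A+B)^{-1/2}$'' does not, by itself, produce the needed inequality, and you do not indicate what functional of $\tilde A$ you would bound, with respect to which reference operator, or how the resulting Jensen step would cancel $\Tr[PRNR]$. As it stands the proof is incomplete at precisely the point where the noncommutative content lies.

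The paper's route is different and avoids the Jordan decomposition entirely. It rewrites the target as a data-processing inequality for the collision (sandwiched R\'enyi-$2$) divergence: one checks the identity
\[
\exp D_2^*\bigl(A\oplus B\,\big\|\,(A+B)^{\oplus 2}\bigr)=\Tr\!\Bigl[A\,\tfrac{A}{A+B}\Bigr]+\Tr\!\Bigl[B\,\tfrac{B}{A+B}\Bigr]=\Tr[A+B]-2\Tr\!\Bigl[A\,\tfrac{B}{A+B}\Bigr],
\]
then applies the CPTP measure-and-prepare map induced by the optimal Helstrom test $\{\Pi_A,\Pi_B\}$ to both arguments. Monotonicity of $D_2^*$ under CPTP maps gives
\[
\Tr[A+B]-2\Tr\!\Bigl[A\,\tfrac{B}{A+B}\Bigr]\ \ge\ \frac{(\Tr[A\vee B])^{2}+(\Tr[A\wedge B])^{2}}{\Tr[A+B]},
\]
and the claim follows from $(\Tr[A+B])^{2}-(\Tr[A\vee B])^{2}-(\Tr[A\wedge B])^{2}=2\Tr[A\vee B]\Tr[A\wedge B]$. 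This is exactly the Sason--Verd\'u/Renes mechanism you cite, but executed through $D_2^*$-monotonicity rather than through an operator-Jensen step; it handles the noncommutative cross term automatically because the DPI already holds at the operator level.
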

\begin{remark}
	In the scalar case of positive $a$ and $b$, the inequality $\frac{ab}{a+b}\leq a\wedge b$ is obvious, and the term $\frac{ab}{a+b} = (a^{-1} + b^{-1})^{-1}$ is called the \emph{parallel sum} of $a,b$. Hence, Lemma~\ref{lemm:trace} may be viewed as a noncommutative generalization of it.
	We note that an operator parallel sum $(A^{-1} + B^{-1})^{-1}$ has been studied before (e.g.~\cite[\S 3]{Hia10},\cite[\S 5]{HP14}), and it is related to the Kubo--Ando operator (Harmonic) mean \cite{And79, KA80}.
	Further, it can be shown that $\Tr[(A^{-1} + B^{-1})^{-1}]\leq \Tr[A \frac{B}{A+B}]$, and hence Lemma~\ref{lemm:trace} also provides an upper bound (in trace) to the operator parallel sum.
\end{remark}

\begin{proof}
	We define the \emph{collision divergence} \cite{Ren05} for $A,B\geq 0$ as
	\begin{align} \notag 
		D^*_2(A \,\|\, B) :=  \log \Tr\left[ \left( B^{-\frac{1}{4}} A  B^{-\frac{1}{4}} \right)^2 \right].
	\end{align}
	Let $\{  {\Pi}_A,   {\Pi}_B\}$ be the optimal measurement for distinguishing positive semi-definite operators $A$ and $B$, i.e.~by recalling the Holevo--Helstrom theorem \cite{Hel67, Hol72, Hol76},
	\begin{align} \notag
		\sup_{0\leq T \leq \mathds{1}} \Tr[AT] + \Tr[B(\mathds{1}-T)] = \Tr[A  {\Pi}_A] + \Tr[B   {\Pi}_B] = \Tr[A\vee B].
	\end{align}
	Denote by ``$\oplus$" the direct sum operation. We introduce a measure-and-prepare operation $\Lambda$, which is a completely positive and trace-preserving (CPTP) map:
	\begin{align} \notag
		\Lambda:(\,\cdot\,) \mapsto \Tr\left[ (\, \cdot\,)   {\Pi}_A\oplus   {\Pi}_B \right] \oplus \Tr\left[ (\, \cdot\,) (\mathds{1}-  {\Pi}_A\oplus  {\Pi}_B) \right].
	\end{align}
	We calculate the following:
	\begin{align} \notag
		\Lambda(A\oplus B) &= \Tr\left[ A\vee B \right] \oplus \Tr\left[ A\wedge B \right]; \\
		\Lambda\left( (A+B)^{\oplus 2} \right) &= \Tr\left[ A+B \right] \oplus \Tr\left[ A+B \right].
	\end{align}
	Since the map $\exp\{D_2^*(\,\cdot\,\|\,\cdot\,)\}$ is monotonically decreasing under CPTP maps \cite{FL13, MDS+13, WWY14} , we obtain
	\begin{align} 
		\Tr\left[ A \frac{A}{A+B}\right] + \Tr\left[ B \frac{B}{A+B} \right] 
		&= \exp\left\{ D_2^*\left( A\oplus B \,\|\, (A+B)^{\oplus 2} \right) \right\} \label{eq:trace_LHS} \\
		 \notag
		&\geq \exp\left\{ D_2^*\left( \Lambda(A\oplus B) \,\|\, \Lambda\left((A+B)^{\oplus 2}\right) \right) \right\} \\
		&= \exp\left\{ D_2^*\left( \begin{bmatrix} \Tr\left[ A\vee B \right] & 0 \\ 0 & \Tr\left[ A\wedge B \right] \end{bmatrix} \,\left\|\,  \begin{bmatrix} \Tr\left[ A+ B \right] & 0 \\ 0 & \Tr\left[ A+ B \right] \end{bmatrix} \right.  \right) \right\} \notag \\
		 \notag
		&= \frac{\left(\Tr\left[ A\vee B \right]\right)^2}{\Tr[A+B]} + \frac{\left(\Tr\left[ A\wedge B \right]\right)^2}{\Tr[A+B]}.
	\end{align}
	Noting that the left-hand side of Eq.~\eqref{eq:trace_LHS} can be written as
	\begin{align} \notag
		\Tr\left[ A \frac{A}{A+B}\right] + \Tr\left[ B \frac{B}{A+B} \right] 
		=  \Tr\left[ A+B \right] - 2 \Tr\left[ A \frac{B}{A+B} \right],
	\end{align}
	then the above inequality translates to 
	\begin{align*}
		2 \Tr\left[ A \frac{B}{A+B} \right] &\leq \Tr\left[ A+B \right] - \frac{\left(\Tr\left[ A\vee B \right]\right)^2}{\Tr[A+B]} - \frac{\left(\Tr\left[ A\wedge B \right]\right)^2}{\Tr[A+B]} \\
		&= \frac{ \left(\Tr\left[ A + B \right]\right)^2 - \left(\Tr\left[ A\vee B \right]\right)^2 - \left( \Tr\left[ A\wedge B \right]\right)^2 }{ \Tr\left[ A + B \right]  } \\
		&\overset{\textnormal{(a)}}{=} \frac{ 2 \Tr\left[ A\vee B \right] \cdot \Tr\left[ A\wedge B \right] }{ \Tr\left[ A + B \right]  } \\
		&\overset{\textnormal{(b)}}{\leq} 2 \Tr\left[ A\wedge B \right].
	\end{align*}
	Here, (a) follows from the identity, $A+B = A\vee B + A\wedge B$; 
	and the inequality $\Tr\left[ A\vee B \right] \leq \Tr\left[ A+B \right]$ used in (b) is because that $A+B$ is a feasible solution to the infimum representation of the noncommutative maximum \cite{Hol72, Hol74, Hol76}:
	\begin{align} \notag
		A\vee B = \frac{A+B+|A-B|}{2} = \argmin_{ M = M^\dagger } \left\{ \Tr[M] : M\geq A, \, M \geq B \right\}.
	\end{align}
	Hence, we complete the proof.
\end{proof}

\section*{Acknowledgement}
H.-C.~Cheng would like to thank anonymous reviewers for their detailed suggestions for improving the presentation of the paper.
Jon Tyson for the insightful discussions and pointing out relevant references. H.-C.~Cheng also wants to thank Joseph Renes for the helpful discussions.
This work is done during the visit of Technical University of Munich, Budapest University of Technology and Economics, and Middle East Technical University. H.-C.~Cheng would like to deeply thank Robert K{\"o}nig, Cambyse Rouz{\'e}, Mil{\'a}n Mosonyi, and Bar\i\c{s} Nakibo\u{g}lu for their kind host and hospitality.

H.-C.~Cheng is supported by the Young Scholar Fellowship (Einstein Program) of the Ministry of Science and Technology, Taiwan (R.O.C.) under Grants No.~NSTC 111-2636-E-002-026, No.~NSTC 112-2636-E-002-009, No.~NSTC 112-2119-M-007-006, No.~NSTC 112-2119-M-001-006, No.~NSTC 112-2124-M-002-003, by the Yushan Young Scholar Program of the Ministry of Education, Taiwan (R.O.C.) under Grants No.~NTU-111V1904-3, No.~NTU-112V1904-4, and by the research project ``Pioneering Research in Forefront Quantum Computing, Learning and Engineering'' of National Taiwan University under Grant No. NTC-CC-112L893405. H.-C.~Cheng acknowledges support from the ``Center for Advanced Computing and Imaging in Biomedicine (NTU-112L900702)'' through the Featured Areas Research Center Program within the framework of the Higher Education Sprout Project by the Ministry of Education (MOE) in Taiwan.


{\larger
\bibliographystyle{myIEEEtran}
\bibliography{reference.bib}
}

\end{document}